\newtheorem{theorem}{Theorem}
\newtheorem{lemma}[theorem]{Lemma}
\newtheorem{corollary}[theorem]{Corollary}
\theoremstyle{definition}
\newtheorem{remark}[theorem]{Remark}
\DeclareMathOperator{\st}{\,s.t.}
\DeclareMathOperator{\argmin}{argmin}
\newcommand{\IEfconn}{\mathcal{I}_{E^f\,\text{conn}}}
\newcommand{\IEfforest}{\mathcal{I}_{E^f\,\text{forest}}}
\newcommand{\IEftree}{\IEfconn\cap\IEfforest}
\newcommand{\IEfmatching}{\mathcal{I}_{E^f\,\text{matching}}}
\newcommand{\IElconn}{\mathcal{I}_{E^\ell\,\text{conn}}}
\newcommand{\IElforest}{\mathcal{I}_{E^\ell\,\text{forest}}}
\newcommand{\IEfE}{\mathcal{I}_{E^f\,\text{all}}}
\begin{document}

\title{On the Complexity of the\\ Bilevel Minimum Spanning Tree
  Problem\thanks{This work has partially been supported by Deutsche
    Forschungsgemeinschaft (DFG) under grant no.~BU 2313/6.}}

\author{Christoph Buchheim, Dorothee Henke, Felix Hommelsheim\thanks{\texttt{\{christoph.buchheim,dorothee.henke,felix.hommelsheim\}@math.tu-dortmund.de}} \\[1ex]
  Department of Mathematics, TU Dortmund University, Germany\\[1ex]}

\date{}

\maketitle

\begin{abstract}
  We consider the bilevel minimum spanning tree (BMST) problem where
  the leader and the follower choose a spanning tree together,
  according to different objective functions. By showing that this
  problem is NP-hard in general, we answer an open question stated
  in~\cite{shi19}. We prove that BMST remains hard even in the special
  case where the follower only controls a matching. Moreover, by
  a polynomial reduction from the vertex-disjoint Steiner
  trees problem, we give some evidence that BMST might even remain hard
  in case the follower controls only few edges.

  On the positive side, we present a polynomial-time
  $(|V|-1)$-approximation algorithm for BMST, where~$|V|$ is the number of
  vertices in the input graph. Moreover, considering the number of
  edges controlled by the follower as parameter, we show that
  2-approximating BMST is fixed-parameter tractable and that, in case
  of uniform costs on leader's edges, even solving BMST exactly is
  fixed-parameter tractable.  We finally consider bottleneck variants
  of BMST and settle the complexity landscape of all combinations of
  sum or bottleneck objective functions for the leader and follower,
  for the optimistic as well as the pessimistic setting.

  Keywords: bilevel optimization, combinatorial optimization, spanning
  tree problem, complexity, Steiner tree, approximation algorithms
\end{abstract}

\newpage
\section{Introduction}

A bilevel optimization problem models the interplay between two
decision makers, each of them having their own decision variables,
objective function and constraints. The two decisions can depend on
each other and are made in a hierarchical way: the leader decides
first and the follower second, already knowing what the leader has
decided. However, the problem is usually viewed from the leader's
perspective, who has perfect knowledge of the follower's problem and
takes into account how the follower will react to her\footnote{Throughout this paper, we will refer to the leader using \emph{she/her} and to the follower using \emph{he/him/his}.} decision. In
other words, the optimality of the follower's decision can be viewed
as a constraint in the leader's optimization problem. Several surveys
and text books on bilevel optimization have been
published~\cite{colson07,dempe03,dempe15}.  Bilevel optimization
problems turn out to be very hard in general. Even in the case where
both objective functions and all constraints are linear, they are
strongly NP-hard~\cite{hansen92}. For more details concerning the
complexity of bilevel linear optimization, see~\cite{deng98}.

In this paper, we investigate the complexity of a fundamental
\emph{combinatorial} bilevel optimization problem, namely the bilevel
minimum spanning tree problem. Here, each of the two decision makers
controls a subset of the edges of a given graph and chooses some of
them, such that all chosen edges together form a spanning tree in the
graph.

In a possible application~\cite{shi19}, the two decision makers can be
imagined as a central and a local government whose common task is to
design a transportation network connecting a given set of
facilities. Hence, together they have to construct a spanning tree,
where each of the decision makers may control a different set of
potential links, e.g., federal highways are controlled by the central
government in contrast to local roads, or building grounds are owned
by different actors. First, the central government constructs some of
the connections, and then the local government decides how to complete
the network. The central government, as the leader, pays for the
construction of all connections, while the follower optimizes a
different objective function, e.g., respecting requests of the
citizens.  Another possible application is described
in~\cite{gassner02} and deals with a communication network between
cities. A state is modeled as the leader and assumed to own some
communication connections, while a private company, the follower, is
permitted to build its own connections in specific places. The state
subsidizes the connections built by the company, i.e., the actual
building costs are shared between the two actors. In turn, the company
is required to ensure that each city can communicate with each other
city using some activated state-owned and the new private connections,
i.e., that the result is a spanning tree. The decision process now has
a hierarchical structure: first, the state decides which of its own
connections are to be activated and thus paid for. Second, the company
decides which new connections to build such that all cities are
connected and the costs incurred for the company are
minimized. However, the company's decision influences the state's
costs as well, because of the subsidies. Hence, the state has to
anticipate what the company will do already when making the first
decision.

More formally, let~$G = (V, E)$ be a connected, not necessarily simple
graph with some edges~$E^\ell$ being controlled by the leader and some
edges~$E^f$ being controlled by the follower, such that~$E = E^\ell
\cup E^f$.  Without loss of generality, we assume that~$E^\ell$
and~$E^f$ are disjoint sets, using parallel edges otherwise. We are
given cost functions~$c\colon E \rightarrow \mathbb{R}_{\geq 0}$
and~$d\colon E\rightarrow \mathbb{R}_{\geq 0}$ for the leader and
follower, respectively, and define $c(Z) \coloneqq \sum_{e \in Z}
c(e)$ and~$d(Z) \coloneqq \sum_{e \in Z} d(e)$ for any edge
set~$Z\subseteq E$. With these definitions, the bilevel minimum
spanning tree problem can be formulated as follows:
\begin{equation}\tag{BMST}\label{prob:bmst}
\begin{aligned}
  \min~ & c(X\cup Y)\\
  \st~ & X \subseteq E^\ell \\
  & Y \in
  \begin{aligned}[t]
    \argmin~ & d(Y)\\
    \st~ & Y \subseteq E^f\\
    & X\cup Y \text{ is a spanning tree in }G\;.
  \end{aligned}
\end{aligned}
\end{equation}
Here and in the remainder of the paper, we identify subgraphs of~$G$,
in particular trees and forests, with the corresponding subsets
of~$E$.

If the leader chooses some edge set~$X$ rendering the follower's
problem infeasible, then by definition this choice is not valid for
her. In particular, the leader must choose a cycle-free subset~$X$
of~$E^\ell$ such that the graph~$(V,X\cup E^f)$ is connected, and the
follower will augment~$X$ to a spanning tree at minimum cost according
to his own objective function~$d$. The objective function minimized by
the leader is the total cost of the resulting spanning tree with
respect to the objective function~$c$.

Given a feasible leader's choice~$X$, the follower's problem can
easily be solved in polynomial time, e.g., by Kruskal's
algorithm~\cite{kruskal56} applied to the graph resulting from~$G$ by
contracting all edges in~$X$ and restricting to the edges
in~$E^f$. However, the follower's optimum solution might not be
unique. In order to make the problem well-defined in this case, we
will always assume that the follower chooses his solution greedily
according to some given order of preference that is consistent with
his cost function~$d$. It is easy to verify that both the optimistic
and the pessimistic version of BMST can be modeled in this way. These
are the most common strategies to resolve non-uniqueness of follower's
optimum solutions in bilevel optimization. In the former, the follower
is assumed to decide in favor of the leader among his optimum
solutions, i.e., he uses the leader's objective function as a second
criterion in his optimization. In contrast, the pessimistic view
corresponds to the follower deciding worst possible for the
leader. Note that the follower's feasible set is uniquely determined
by the connected components of the graph~$(V,X)$, and therefore also
his response~$Y$ when assuming any deterministic strategy to resolve
non-uniqueness.

In Fig.~\ref{fig:example}, we give an example of a BMST instance and
its optimum solution. The cost of the leader's optimum solution is
$9$. In contrast to the follower, it is not optimum for the leader to
choose edges in a greedy way since taking the edge~$\{v_4, v_6\}$ into
her solution would result in overall costs of at least $10$. It is
cheaper for her to let the follower connect the
components~$\{v_2,v_3,v_4\}$ and~$\{v_5,v_6\}$ with each
other. However, this strategy relies on the fact that the
edge~$\{v_3,v_6\}$ is cheaper than~$\{v_3,v_5\}$ also for the
follower.
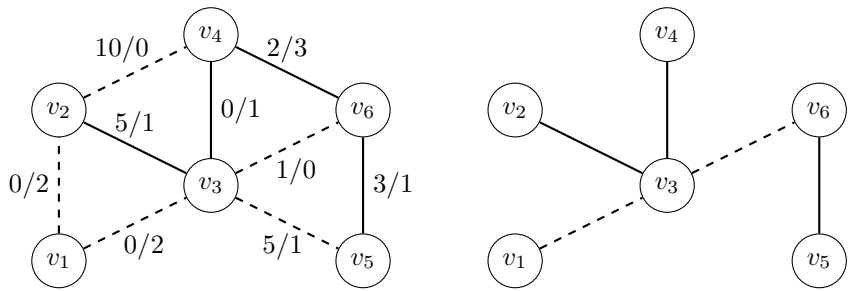
\begin{figure}
  \centering
  \small
  \begin{tikzpicture}[scale=1]
    \node[draw,circle,minimum width = 0.6cm] (v1) at (0,0) {$v_1$};
    \node[draw,circle,minimum width = 0.6cm] (v2) at (0,2) {$v_2$};
    \node[draw,circle,minimum width = 0.6cm] (v3) at (2,1) {$v_3$};
    \node[draw,circle,minimum width = 0.6cm] (v4) at (2,3) {$v_4$};
    \node[draw,circle,minimum width = 0.6cm] (v5) at (4,0) {$v_5$};
    \node[draw,circle,minimum width = 0.6cm] (v6) at (4,2) {$v_6$};

    \draw[thick] (v2) to node[above] {$5 / 1$} (v3);
    \draw[thick] (v3) to node[right] {$0 / 1$} (v4);
    \draw[thick] (v4) to node[above] {$2 / 3$} (v6);
    \draw[thick] (v5) to node[right] {$3 / 1$} (v6);

    \draw[thick,dashed] (v1) to node[left] {$0 / 2$} (v2);
    \draw[thick,dashed] (v1) to node[below] {~~$0 / 2$} (v3);
    \draw[thick,dashed] (v2) to node[above] {$10 / 0$~~~~} (v4);
    \draw[thick,dashed] (v3) to node[below] {$5 / 1$~~} (v5);
    \draw[thick,dashed] (v3) to node[below] {~~$1 / 0$} (v6);

    \node[draw,circle,minimum width = 0.6cm] (u1) at (6,0) {$v_1$};
    \node[draw,circle,minimum width = 0.6cm] (u2) at (6,2) {$v_2$};
    \node[draw,circle,minimum width = 0.6cm] (u3) at (8,1) {$v_3$};
    \node[draw,circle,minimum width = 0.6cm] (u4) at (8,3) {$v_4$};
    \node[draw,circle,minimum width = 0.6cm] (u5) at (10,0) {$v_5$};
    \node[draw,circle,minimum width = 0.6cm] (u6) at (10,2) {$v_6$};

    \draw[thick] (u2) to (u3);
    \draw[thick] (u3) to (u4);
    \draw[thick] (u5) to (u6);

    \draw[thick,dashed] (u1) to (u3);
    \draw[thick,dashed] (u3) to (u6);

  \end{tikzpicture}
  \caption{Example of a BMST instance and its optimum solution
    together with the corresponding response of the follower. The edge
    sets $E^\ell$ and $E^f$ are represented as
    solid and dashed edges, respectively. The labels show the leader's
    and the follower's cost of an edge~$e\in E$ in the form~$c(e) /
    d(e)$.}
  \label{fig:example}
\end{figure}

Besides the problem with sum objective functions, we will also
consider bottleneck versions of BMST, meaning that the leader and/or
the follower only pay for the most expensive edge instead of the sum
over the costs of all chosen edges. In case the follower has a
bottleneck objective function, one has to distinguish between two
possible models: either the follower pays for the most expensive edge
he chooses himself, i.e., his objective function is to
minimize~$\max_{e\in Y} d(e)$, or he pays for the most expensive edge
chosen by any of the two actors, i.e., he minimizes~$\max_{e\in X\cup
  Y} d(e)$; the latter case is the only situation in which the
follower's cost~$d$ of edges in~$E^\ell$ is relevant. These two models
are not equivalent, in contrast to the sum objective case,
where~$d(X\cup Y)=d(X)+d(Y)$ in any optimum solution and hence the two
objectives only differ by~$d(X)$, which is constant from the
follower's perspective.

Under the assumption that the leader's and the follower's edge sets
are not disjoint, but that the follower controls all edges, i.e.,
that~$E^\ell \subseteq E^f=E$, it has recently been shown by Shi
\emph{et al.}~\cite{shi19} that BMST is tractable in case the leader
or the follower (or both) optimize a bottleneck instead of the sum
objective function, where the follower is assumed to
minimize~$\max_{e\in X\cup Y} d(e)$ in the bottleneck case.  Related
results have also been obtained by Gassner~\cite{gassner02}. She
considered the problem version in which $E^\ell$ and $E^f$ are
disjoint and the follower's objective is $\max_{e\in Y} d(e)$ in the
bottleneck case. Polynomial-time algorithms are presented for the
cases where the leader has a bottleneck objective and the follower
either has a sum or a bottleneck objective, while restricting to the
pessimistic problem version in the latter case. In~\cite{shi20},
(single-level) mixed integer linear programming formulations for some
variants of BMST are derived. For exact solution methods for general
bilevel mixed integer programs, we refer to the
survey~\cite{kleinert21}.

Other variants of bilevel optimization problems dealing with minimum
spanning trees are considered in the literature, but in contrast to
the problem addressed here, they usually assume the leader to choose
the prices of some edges, while the follower solves a minimum spanning
tree problem on all edges according to these costs; see~\cite{labbe21}
and the references therein. Also the similar setting in which the
lower level problem is a shortest path problem has been investigated
several times; see the surveys~\cite{vanhoesel08} and~\cite{labbe13}.
Gassner and Klinz~\cite{gassner09} studied a bilevel assignment
problem in which leader and follower choose a perfect matching
together, each of them having their own objective function on the
edges, very similar to the bilevel minimum spanning tree problem
studied here. Sum and bottleneck objective functions are considered,
and it is shown that in most cases, the problem is NP-hard. Only the
optimistic problem version in which both decision makers have
bottleneck objectives remains open.

The authors of~\cite{shi19} conjecture that the version of BMST in
which both leader and follower have a sum objective is NP-hard. Our
main result is a proof of this conjecture. More specifically, we show
that BMST is at least as hard as the Steiner forest problem, hence it
is not approximable to within a factor of $\tfrac{96}{95}$
unless~P$\,=\,$NP. We can show the same result for the special case
where the follower only controls a matching, and give some evidence
that the problem might remain intractable even when the follower
controls only a fixed number of edges. We also show that certain
assumptions on the structure of the problem can be made without loss
of generality, e.g., that the follower controls a tree or that the
leader controls a connected graph.

In view of the negative complexity results mentioned above, one can
expect only very limited positive results. We are able to devise
a~$(|V|-1)$-approximation algorithm for BMST and show that
2-approximating the optimum solution is fixed-parameter tractable in
the number of edges controlled by the follower. For the same
parameter, the decision whether a given follower's response can be
enforced by the leader is fixed-parameter tractable, which implies
that the variant of BMST with uniform costs~$c(e)$ for all~$e\in
E^\ell$ is fixed-parameter tractable as well.  For the bottleneck
case, we show that the problem is tractable in case the leader has a
bottleneck objective and the follower has a sum objective, while it is
hard when the leader has a sum objective and the follower has a
bottleneck objective.  If both have a bottleneck objective, the
problem turns out to be polynomial-time solvable in the pessimistic
setting, while it is hard to solve in the optimistic case.  An
overview of our results for different objective functions can be found
in Table~\ref{tab:bottleneck:results}.  In this paper, however, we
consider a more general variant of BMST than Shi \emph{et
  al.}~\cite{shi19} in terms of the edges controlled by the follower.

The remainder of this paper is organized as follows. In
Section~\ref{sec:restrictions}, we consider different types of
restrictions on the set of allowed instances and investigate their
relations. In Section~\ref{sec:complexity}, we present an
approximation-preserving reduction from Steiner forest to BMST and
derive our main complexity results. Our results concerning
fixed-parameter tractability are presented in Section~\ref{sec:fpt},
while in Section~\ref{sec:algo} we devise an approximation algorithm
for BMST. Up to Section~\ref{sec:algo}, we concentrate on the setting
in which both leader and follower have a sum objective
function. Finally, we review the case of bottleneck objective
functions in Section~\ref{sec:bottleneck}. Section~\ref{sec:conc}
concludes.

\section{Restricted sets of instances}
\label{sec:restrictions}

In this section, we show that, without loss of generality, we may
restrict ourselves to instances of BMST with certain structural
properties. Our aim is to simplify some of the proofs later on, but
also to clarify the connections between different settings
corresponding to reasonable restricted problem variants, which
sometimes lead to different complexity results. All reductions are
polynomial and approximation-preserving, i.e., they can be used to
transform an approximation algorithm for one problem to an
approximation algorithm with the same guarantee for the other problem.

Let $\mathcal{I}$ be the set of all instances $I = (G, E^\ell, E^f, c,
d)$ of BMST as described in the introduction. As already mentioned, we
assume throughout that~$E^\ell$ and~$E^f$ are disjoint sets. If this
is not the case, we can replace any common edge~$e\in E^\ell\cap E^f$
by two parallel edges, one belonging to~$E^\ell$ and one to~$E^f$,
both having the same leader's and follower's costs as~$e$.  We now
define the following subsets of $\mathcal{I}$, all corresponding to
certain restrictions on the edge sets controlled by leader and
follower:
\begin{itemize}
\item $\IElconn$, the set of instances for which the leader's graph $(V,
  E^\ell)$ is connected,
\item $\IElforest$, the set of instances for which the leader's graph
  $(V, E^\ell)$ is cycle-free,
\item $\IEfconn$, the set of instances for which the follower's graph
  $(V, E^f)$ is connected,
\item $\IEfforest$, the set of instances for which the follower's graph
  $(V, E^f)$ is cycle-free,
\item $\IEfmatching$, the set of instances for which the follower's graph
  $(V, E^f)$ is a matching, i.e., for which no vertex is incident to
  more than one edge, and
\item $\IEfE$, the set of instances such that for each leader's edge
  in~$E^\ell$ there exists a parallel follower's edge in~$E^f$ with the
  same leader's cost.
\end{itemize}
The instances in~$\IEfE$ exactly correspond to those considered by~Shi
\emph{et al.}~\cite{shi19}. Since~$G$ is connected, we have
$\IEfE\subseteq\IEfconn$, and $\IEfconn$ is precisely the set of
instances where any cycle-free choice of the leader is feasible, i.e.,
for any cycle-free edge set $X \subseteq E^\ell$, there is at least
one feasible response of the follower. Moreover, we have
$\IEfmatching\subseteq\IEfforest$.

Our first reduction shows that we may assume that the edges controlled
by the follower connect all vertices of~$G$.
\begin{lemma} \label{lem:Efconn} BMST on $\mathcal{I}$ can be reduced
  to BMST on $\IEfconn$. The reduction preserves $\IElconn$,
  $\IElforest$, and $\IEfforest$, meaning that if we start with an
  instance in one of these sets, the reduction again results in an
  instance in this set.
\end{lemma}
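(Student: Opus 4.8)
The plan is to augment the follower's edge set by just enough edges to connect $(V,E^f)$, giving these new edges such large costs that the follower never uses them voluntarily and the leader can never benefit from them. Concretely, I would let $V_1,\dots,V_k$ be the vertex sets of the connected components of $(V,E^f)$, fix a representative $r_i\in V_i$ for each $i$, and add to $E^f$ the star $F=\{\{r_1,r_i\}: 2\le i\le k\}$, leaving the rest of the instance unchanged except that every $f\in F$ gets follower cost larger than $d(E^f)$, leader cost equal to a sufficiently large constant $C$, and the last positions in the follower's order of preference. Since $F$ connects all components of $(V,E^f)$, the new follower graph $(V,E^f\cup F)$ is connected, so the new instance lies in $\IEfconn$; as $E^\ell$ is untouched, $\IElconn$ and $\IElforest$ are preserved trivially, and if $(V,E^f)$ was cycle-free then so is $(V,E^f\cup F)$ because $F$ only joins distinct tree components, so $\IEfforest$ is preserved too.

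The heart of the correctness proof is the observation that a cycle-free $X\subseteq E^\ell$ is feasible for the leader in the original instance if and only if $(V,X\cup E^f)$ is connected, and that the only edges of $E^f\cup F$ running between distinct components of $(V,X\cup E^f)$ are the new ones. Hence, for an $X$ feasible in the original instance, any follower response in the new instance that uses an edge of $F$ has $d$-value exceeding $d(E^f)$, hence exceeding the $d$-value of the original response; since the preference order was extended consistently, the follower reacts exactly as in the original instance and the leader's objective is unchanged. For an $X$ infeasible in the original instance, every feasible follower response in the new instance contains an edge of $F$, so the leader pays at least $C$. Taking $C$ larger than the cost (with respect to $c$) of the most expensive spanning tree of $G$ — an easily computable upper bound on the original optimum — such $X$ is dominated by any leader choice that is feasible in the original. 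It follows that the two optima coincide and that an optimal solution of the new instance is also an optimal solution of the original.

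For approximation preservation, observe that a solution of the new instance costs less than $C$ if and only if it uses no edge of $F$, and in that case it is a feasible solution of the original instance of exactly the same cost. It thus suffices that an $\alpha$-approximate solution of the new instance has cost below $C$; as the new optimum equals the original optimum, which is at most the spanning-tree bound $U$, choosing $C>\alpha U$ does the job. (If $\alpha$ is not given in advance, one runs the $\IEfconn$-algorithm on the reduced instances for $C=2U,4U,8U,\dots$ and outputs the first returned solution that avoids $F$, which occurs after $O(\log\alpha)$ rounds.)

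I expect the main difficulty to be conceptual rather than computational: enlarging the follower's edge set unavoidably enlarges the leader's feasible set, because leader choices that previously left $(V,\,\cdot\,\cup E^f)$ disconnected become admissible, and one must argue that these genuinely new choices are never helpful. The two weights — a follower cost above $d(E^f)$ and a leader cost $C$ above the maximum spanning-tree cost — are precisely what neutralizes them: the follower never touches a new edge while an all-original completion exists, and whenever he is forced to use one the leader's bill becomes prohibitive. The remaining points (that the follower's response is literally identical on old leader choices, and that the augmentation keeps the follower graph acyclic in the $\IEfforest$ case) are routine to check.
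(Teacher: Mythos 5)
Your proposal is correct and follows essentially the same route as the paper: add high-cost follower edges (a minimal set joining the components of $(V,E^f)$) so that the follower never uses them unless forced, and the leader is prohibitively penalized whenever he is forced, which makes the genuinely new leader choices dominated. The only difference is that you are somewhat more careful about approximation preservation, noting that the leader cost $C$ of the new edges must exceed $\alpha$ times an upper bound on the optimum (or be handled by doubling) rather than just the optimum itself --- a minor refinement the paper glosses over with its single big constant $M$.
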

\begin{proof}
  Let $I = (G, E^\ell, E^f, c, d) \in \mathcal{I}$. We construct an
  instance $I' \in \IEfconn$ from $I$ by adding arbitrary edges
  controlled by the follower in order to make $(V, E^f)$
  connected. The new edges $e'$ have cost $c(e') \coloneqq d(e')
  \coloneqq M$ for some large enough number $M$, e.g., one can set $M
  \coloneqq \sum_{e \in E} \max\{c(e), d(e)\} + 1$.

  Every solution of $I$ is also a solution of $I'$ of the same cost
  for both leader and follower. The follower's solution is still
  optimum because, by the choice of $M$, taking one of the new edges
  can only make the solution worse for him. Conversely, given a
  solution of $I'$, it is also a solution of $I$ of the same cost if
  it does not contain any new edges. Otherwise, the leader's solution
  of $I'$ is not a feasible choice in $I$, as the follower will only
  take a new edge in $I'$ if he cannot produce all necessary
  connections using only the original edges. In this case, any feasible
  solution of $I$ is cheaper than the one of $I'$, due to the choice
  of~$M$.

  Since $E^\ell$ is not changed, the reduction preserves all
  structural properties of $E^\ell$, in particular~$(V, E^\ell)$ being
  connected or cycle-free.  By adding only a minimum number of edges
  necessary to make $(V,E^f)$ connected, we may also assume that
  acyclicity of~$(V, E^f)$ is preserved.
\end{proof}
Using a similar construction, one can show the same result for the
graph controlled by the leader:
\begin{lemma} \label{lem:Elconn} BMST on $\mathcal{I}$ can be reduced
  to BMST on $\IElconn$. The reduction preserves $\IElforest$,
  $\IEfconn$, $\IEfforest$, and $\IEfmatching$.
\end{lemma}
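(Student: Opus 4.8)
The plan is to mirror the construction in the proof of Lemma~\ref{lem:Efconn}, but to enlarge $E^\ell$ rather than $E^f$. Given $I=(G,E^\ell,E^f,c,d)\in\mathcal{I}$, I would obtain $I'$ by inserting a minimum number of new edges, all controlled by the leader, so that $(V,E^\ell)$ becomes connected, and give each new edge~$e'$ the costs $c(e')\coloneqq d(e')\coloneqq M$ with $M\coloneqq\sum_{e\in E}\max\{c(e),d(e)\}+1$, just as before. Because only a spanning set of edges between the connected components of $(V,E^\ell)$ is added, no cycle is created, so $I'\in\IElconn$ and acyclicity of $(V,E^\ell)$ is preserved; since $E^f$ is left untouched, $\IEfconn$, $\IEfforest$, and $\IEfmatching$ are preserved as well.

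Next I would check that solutions correspond. Every feasible solution $(X,Y)$ of $I$ is feasible for $I'$ with the same leader's and follower's cost: $X$ is still a cycle-free subset of the enlarged leader's edge set, the graph $(V,X\cup E^f)$ is unchanged, and since $(V,X)$ has the same connected components as before and $E^f$ is the same, the follower's feasible set and hence his response~$Y$ under the fixed tie-breaking rule are identical. Conversely, let $(X',Y')$ be an optimal solution of $I'$. Since $G$ is connected, $I$ admits a feasible solution (take any spanning tree $T$ of $G$ and set $X\coloneqq T\cap E^\ell$), whose leader's cost is at most $\sum_{e\in E}c(e)<M$; hence $\mathrm{OPT}(I')<M$ and $X'$ cannot contain a new edge, as that would force $c(X'\cup Y')\geq M$. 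Therefore $(X',Y')$ is in fact a feasible solution of $I$ of the same cost, so $\mathrm{OPT}(I)=\mathrm{OPT}(I')$; and since any solution of $I'$ that uses a new edge is more expensive than the feasible solution of $I$ just described, it may be replaced by the latter, which makes the reduction approximation-preserving.

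The only point that is genuinely different from Lemma~\ref{lem:Efconn}, and the step I would be most careful about, is the reason why the leader is never forced to use one of the new edges: there it is the follower's own minimization that discards the expensive auxiliary edges, whereas here this must follow from the leader's upper-level objective. This is exactly where feasibility of~$I$ — guaranteed by connectivity of~$G$ — is used: it supplies a leader's choice of cost below~$M$, so that in~$I'$ no optimal leader's strategy routes a required connection through an auxiliary edge instead of, where necessary, leaving it to the follower. The remaining verifications are routine, since enlarging $E^\ell$ by edges that an optimal leader never picks changes neither feasibility nor the follower's response.
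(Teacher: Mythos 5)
Your proof is correct and follows exactly the construction the paper intends (the paper omits the proof of Lemma~\ref{lem:Elconn}, stating only that it uses a construction similar to Lemma~\ref{lem:Efconn}): add expensive leader's edges joining the components of $(V,E^\ell)$, observe that $E^f$ and hence the follower's behaviour are untouched, and use the spanning-tree-based feasible solution of cost below $M$ to rule out new edges in any optimal (or replaceable approximate) solution of $I'$. You also correctly isolate the one point where the argument differs from Lemma~\ref{lem:Efconn}, namely that here the upper-level objective, rather than the follower's minimization, discards the auxiliary edges.
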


We next show an important structural result about BMST, from which we
can conclude that we may assume without loss of generality that the
follower controls a forest, but which will also be useful on its
own. As stated in the introduction, we assume a fixed ordering of the
edges in~$E^f$ that the follower will always, i.e., for any choice
of~$X$, use in his greedy algorithm. This is important for the
following proof. Moreover, we do not require $X_1$ or $X_2$
to be feasible leader's solutions in the following, i.e., it might not
be possible for the follower to complete them to a spanning
tree. However, we assume that the follower applies his greedy algorithm
anyway, leading to forests $Y_1$ and $Y_2$.

\begin{lemma} \label{lem:Efforest}
  Given two cycle-free edge sets $X_1 \subseteq X_2 \subseteq
  E^\ell$, let $Y_1, Y_2 \subseteq E^f$ be the corresponding
  follower's responses. Then $Y_1 \supseteq Y_2$.
\end{lemma}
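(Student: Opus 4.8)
The plan is to make the follower's greedy procedure explicit as Kruskal's algorithm and to run the two instances of it, for $X_1$ and for $X_2$, simultaneously, edge by edge. Recall that, given a cycle-free set $X\subseteq E^\ell$, the follower's response is computed by scanning the edges $e_1,e_2,\dots$ of $E^f$ in the fixed order of preference (which, being consistent with $d$, lists them with non-decreasing $d$-cost), maintaining a forest $Y$ that starts out empty and appending $e_j$ to it precisely when $X\cup Y\cup\{e_j\}$ is cycle-free; this is exactly Kruskal's algorithm on the graph obtained from $G$ by contracting $X$ and keeping only the edges of $E^f$. If $X$ is a feasible leader's choice, the scan ends with $X\cup Y$ a spanning tree; otherwise it simply ends with some forest $Y$, which is the response referred to in the statement. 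The crucial point is that the \emph{same} total order on $E^f$ is used for both $X_1$ and $X_2$.

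Run both scans in lockstep: for $k=0,1,2,\dots$ let $Y_1^{(k)}$ and $Y_2^{(k)}$ be the two partial forests after the first $k$ edges of $E^f$ have been examined, and let $\pi_i^{(k)}$ be the partition of $V$ into the connected components of $(V,X_i\cup Y_i^{(k)})$. I would prove by induction on $k$ the combined invariant that (i) $\pi_2^{(k)}$ is a coarsening of $\pi_1^{(k)}$, i.e., every block of $\pi_1^{(k)}$ is contained in a block of $\pi_2^{(k)}$, and (ii) $Y_1^{(k)}\supseteq Y_2^{(k)}$. For $k=0$ both forests are empty, so (ii) is trivial and (i) holds because $X_1\subseteq X_2$. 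Taking $k$ equal to $|E^f|$ then yields $Y_1\supseteq Y_2$.

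For the inductive step, consider $e_{k+1}=\{u,v\}$. The key observation is that $e_{k+1}$ is appended in the $X_2$-scan only if $u$ and $v$ lie in different blocks of $\pi_2^{(k)}$; by (i) they then also lie in different blocks of $\pi_1^{(k)}$, so $e_{k+1}$ is appended in the $X_1$-scan as well. Hence $Y_1^{(k+1)}\supseteq Y_2^{(k+1)}$ in all cases, which is (ii). For (i), observe that each scan either leaves its partition unchanged or merges the two blocks containing $u$ and $v$; writing $B^i_w$ for the block of $\pi_i^{(k)}$ containing $w$, the inductive hypothesis gives $B^1_u\subseteq B^2_u$ and $B^1_v\subseteq B^2_v$. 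One then checks in each of the three possible combinations — merge on both sides, merge only in the $X_1$-scan, or no merge at all (a merge only in the $X_2$-scan being ruled out by the key observation) — that the newly formed block of $\pi_1^{(k+1)}$ is contained in a block of $\pi_2^{(k+1)}$, and that every untouched block of $\pi_1^{(k+1)}$ remains inside a block of $\pi_2^{(k+1)}$; in the "merge only in the $X_1$-scan" case this uses that $u$ and $v$ then lie in a common block of $\pi_2^{(k)}$, so $B^1_u\cup B^1_v\subseteq B^2_u=B^2_v$. This re-establishes (i).

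The argument is essentially bookkeeping once the invariant is pinned down; the only genuine subtlety — and the one place where it matters that the follower uses one and the same order of preference regardless of the leader's choice — is the key observation that an edge accepted for the larger set $X_2$ is always accepted for the smaller set $X_1$, which is exactly what forces the two responses to be nested rather than merely both $d$-optimal.
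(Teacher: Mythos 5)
Your proof is correct and follows essentially the same route as the paper's: an induction over the follower's greedy scan with the invariant that the component partition of $X_1\cup Y_1^{(k)}$ refines that of $X_2\cup Y_2^{(k)}$, which is just a partition-language restatement of the paper's claim that every pair of vertices connected in $X_1\cup Y_1^{(i)}$ is connected in $X_2\cup Y_2^{(i)}$. The key observation --- an edge accepted in the $X_2$-scan must be accepted in the $X_1$-scan --- is exactly the step the paper uses to conclude $Y_2\subseteq Y_1$.
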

\begin{proof}
  Let~$E^f=\{e_1,\dots,e_m\}$, where~$e_1,\dots,e_m$ is the follower's
  order of preference, and let~$Y_1^{(i)}$ and~$Y_2^{(i)}$ be the
  partial solutions of the follower after considering edge~$e_i$ in
  his greedy algorithm, starting from the leader's choice~$X_1$ or
  $X_2$, respectively. It then suffices to prove the following claim: for
  all~$i=0,\dots,m$, each pair of vertices that is connected
  in~$X_1 \cup Y_1^{(i)}$ is also connected in~$X_2\cup Y_2^{(i)}$. This implies
  that if $e_{i + 1}$ is added to~$Y_2^{(i)}$, it is also added
  to~$Y_1^{(i)}$, so that the full follower's response~$Y_2^{(m)} = Y_2$
  to~$X_2$ is contained in~$Y_1^{(m)}=Y_1$. We show the claim by
  induction over~$i$.

  Since~$Y_1^{(0)}=Y_2^{(0)}=\emptyset$ and~$X_1 \subseteq X_2$, there
  is nothing to show for the case~$i=0$. For~$i = 1,\dots,m$, consider
  two vertices~$v,w\in V$ that are connected by~$X_1 \cup
  Y_1^{(i)}$. If~$v$ and~$w$ are already connected by~$X_1 \cup
  Y_1^{(i-1)}$, they are connected by~$X_2 \cup Y_2^{(i-1)}$ as well,
  by the induction hypothesis, and thus also by the superset~$X_2 \cup
  Y_2^{(i)}$. Otherwise, the connection has been established by
  adding~$e_i=\{v_i,w_i\}$, implying that~$X_1 \cup Y_1^{(i-1)}$
  connects~$v$ to~$v_i$ and~$w$ to~$w_i$ (or vice versa). Again by the
  induction hypothesis, we derive that also~$X_2 \cup Y_2^{(i-1)}$
  connects~$v$ to~$v_i$ and~$w$ to~$w_i$. Hence, either~$v$ and~$w$
  are already connected by~$X_2 \cup Y_2^{(i-1)}$, in which case we
  are done, or~$v_i$ and~$w_i$ are not connected by~$X_2 \cup
  Y_2^{(i-1)}$. In the latter case, edge~$e_i$ will be contained
  in~$Y_2^{(i)}$, so that~$v$ and~$w$ are connected by~$X_2 \cup
  Y_2^{(i)}$ also in this case.  
\end{proof}

\begin{corollary} \label{cor:Efforest}
  BMST on $\mathcal{I}$ can be reduced to BMST on $\IEfforest$. The
  reduction preserves $\IElconn$, $\IElforest$, and $\IEfconn$.
\end{corollary}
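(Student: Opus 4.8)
The plan is to use Lemma~\ref{lem:Efforest} to identify, once and for all, which edges of $E^f$ can possibly be left unused by the follower, and to remove the rest. Concretely, I would first apply Lemma~\ref{lem:Efconn} so that we may assume $(V,E^f)$ is connected; this is harmless because the corollary's reduction is allowed to be composed with the earlier one, and Lemma~\ref{lem:Efconn} already preserves $\IElconn$, $\IElforest$, and keeps $(V,E^f)$ connected (so $\IEfconn$ is preserved by that step). Now consider the extreme case $X_1=\emptyset$ in Lemma~\ref{lem:Efforest}: the follower's response $Y_\emptyset$ to the empty leader's choice is a superset of the follower's response to \emph{every} cycle-free $X\subseteq E^\ell$. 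Since $(V,E^f)$ is connected, $Y_\emptyset$ is in fact a spanning tree of $G$, hence cycle-free. The key observation is therefore: in any feasible instance, the follower never uses an edge outside $Y_\emptyset$.

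Given this, I would define the reduced instance $I'$ by restricting the follower's edge set to $E'^f \coloneqq Y_\emptyset$, keeping $E^\ell$, $c$, and $d$ unchanged (on the surviving edges), and also keeping the follower's order of preference restricted to $E'^f$. Then $(V,E'^f)$ is a tree, so $I'\in\IEfforest$, and moreover $(V,E'^f)$ is connected, so $\IEfconn$ is preserved. Since $E^\ell$ is untouched, $\IElconn$ and $\IElforest$ are preserved as claimed. It remains to check that $I$ and $I'$ have the same feasible leader's solutions, the same follower responses, and the same objective values. A cycle-free $X\subseteq E^\ell$ is feasible in $I$ iff $(V,X\cup E^f)$ is connected iff $(V,X\cup Y_\emptyset)$ is connected (because adding the edges of $E^f\setminus Y_\emptyset$ cannot merge components of $(V,X\cup Y_\emptyset)$ — any such edge already has both endpoints in the same component of $(V,Y_\emptyset)\subseteq(V,X\cup Y_\emptyset)$), which is exactly feasibility in $I'$. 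For the follower's greedy response: running Kruskal on $E^f$ in the fixed order, starting from $X$, the edges of $E^f\setminus Y_\emptyset$ are exactly the ones that would be skipped even when starting from $X=\emptyset$, hence a fortiori skipped when starting from the larger forest $X$ (this is the content of Lemma~\ref{lem:Efforest} applied with $X_1=\emptyset$, $X_2=X$); so running greedy on $E'^f$ in the induced order produces the identical response $Y$. Consequently $c(X\cup Y)$ is unchanged, and the reduction is trivially polynomial and approximation-preserving.

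The main obstacle — really the only subtle point — is making sure that the follower's greedy behaviour on the restricted edge set genuinely coincides with the original, i.e., that deleting the "useless'' edges does not change which edge gets picked at any step. This is why the fixed order of preference (emphasised in the paragraph preceding Lemma~\ref{lem:Efforest}) matters: deleting edges that are never chosen does not alter the relative order of the remaining ones, and by the monotonicity lemma those deleted edges are never chosen for \emph{any} feasible $X$, so the greedy trajectory on $E'^f$ is just the greedy trajectory on $E^f$ with the skipped steps removed. Everything else is bookkeeping.
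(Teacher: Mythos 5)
Your proof is correct and follows essentially the same route as the paper: restrict $E^f$ to the follower's greedy (Kruskal) response to the empty leader's choice and invoke Lemma~\ref{lem:Efforest} with $X_1=\emptyset$ and $X_2=X$ to conclude that no deleted edge is ever used. The only difference is your preliminary application of Lemma~\ref{lem:Efconn}, which is harmless but unnecessary --- Kruskal's algorithm on a possibly disconnected $(V,E^f)$ already yields a cycle-free maximal forest $Y^*$, which is all the argument needs.
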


\begin{proof}
  Let $I = (G, E^\ell, E^f, c, d) \in \mathcal{I}$ be an instance of
  BMST and let $Y^* \subseteq E^f$ be the result of Kruskal's
  algorithm applied to the graph~$(V, E^f)$, using the fixed order of
  edges defined by the follower's preferences. Note that~$Y^*$ is a
  forest in~$G$, but not necessarily a spanning tree, since we do not
  require~$(V,E^f)$ to be connected. Let~$I'$ be the instance that
  arises from $I$ by removing the edges in~$E^f \setminus Y^*$
  from~$E^f$. Then~$I' \in \IEfforest$. By applying
  Lemma~\ref{lem:Efforest} for $X_1 \coloneqq \emptyset$ and $X_2
  \coloneqq X$, it follows that for any leader's solution~$X$ in~$I$,
  the follower's response~$Y$ lies in~$Y^*$. Hence, $X$ has the same
  objective value in~$I$ as in~$I'$. As the leader's feasible set is
  not changed by the above transformation, we obtain the desired
  reduction result.
  
  Since $E^\ell$ is not changed, the reduction preserves any specific
  structure of~$E^\ell$, in particular $(V, E^\ell)$ being cycle-free
  or connected. Connectedness of~$(V,E^f)$ is obviously
  preserved by the construction.
\end{proof}

It is worth mentioning that even if the follower's edge set $E^f$ is
cycle-free, the follower might have several feasible or even several
optimum responses to some leader's choice~$X$. Indeed, after the
contraction of $X$, the follower's edges might form cycles again. For
an example, consider the instance illustrated in
Fig.~\ref{fig:example}, in which the follower's edges form a
tree. When the leader takes the edge $\{v_2, v_3\}$ into her solution,
the vertices $v_2$ and $v_3$ can be thought of as being merged into a
single vertex from the follower's perspective. This leads to the
follower's edges $\{v_1, v_2\}$ and $\{v_1, v_3\}$ becoming parallel
edges, of which the follower must choose one. In this example, the two
edges even have the same leader's and follower's cost such that the
follower will choose any of the two edges, depending on his
preferences.

If we are not interested in the connectedness of the follower's edge
set~$E^f$, but rather in a simple combinatorial structure of the
latter, we can even further restrict~$E^f$ to form a matching:
\begin{lemma} \label{lem:Efmatching}
  BMST on $\IEfforest$ can be reduced to BMST on $\IEfmatching$. The
  reduction preserves $\IElconn$ and $\IElforest$.
\end{lemma}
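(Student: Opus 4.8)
The plan is to replace every follower edge by a tiny gadget so that the follower's edges become vertex-disjoint, while arranging that the leader can (and will) absorb the gadgets into her solution at zero cost. Given $I=(G,E^\ell,E^f,c,d)\in\IEfforest$, I would construct $I'$ as follows: for each follower edge $e=\{u,v\}\in E^f$ introduce two fresh vertices $u_e,v_e$, turn $e$ into the follower edge $\{u_e,v_e\}$ keeping its costs $c(e),d(e)$, and add two new leader edges $\{u,u_e\}$ and $\{v,v_e\}$, each with leader cost $0$ (their follower cost is irrelevant in the sum setting). The new follower edges are pairwise vertex-disjoint, so the follower's graph in $I'$ is a matching and $I'\in\IEfmatching$; the new leader edges only attach pendant vertices, so $\IElconn$ and $\IElforest$ are preserved, and the new graph is still connected since $u-u_e-v_e-v$ is a path replacing $e$. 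Let $A$ denote the set of all new leader edges.

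One direction is immediate: given a feasible leader's choice $X$ in $I$ with follower response $Y$, set $X':=X\cup A$. Since all of $A$ is chosen, contracting $X'$ identifies each $u_e$ with $u$ and each $v_e$ with $v$, so the follower's problem for $X'$ in $I'$ is isomorphic to the follower's problem for $X$ in $I$; using the same preference order the follower's response is the relabelled copy of $Y$, and $c(X'\cup Y')=c(X)+c(Y)=c(X\cup Y)$. Hence the optimum of $I'$ is at most that of $I$.

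For the converse I would take any feasible leader's choice $X'$ in $I'$ and show it can be modified to contain all of $A$ without increasing the leader's objective. If $\{u,u_e\}\notin X'$ for some $e=\{u,v\}$, then $u_e$ is isolated in $X'$ (its only incident edges are $\{u,u_e\}$ and the follower edge $e$), so adding $\{u,u_e\}$ creates no cycle; moreover, since $X'\cup Y'$ is a spanning tree, $u_e$ must be a leaf there with incident edge $e$, and exchanging $e$ for $\{u,u_e\}$ exhibits a spanning tree containing $X'':=X'\cup\{\{u,u_e\}\}$, so $X''$ is again a feasible leader's choice. As $X'\subseteq X''$, Lemma~\ref{lem:Efforest} gives that the follower's response $Y''$ to $X''$ satisfies $Y''\subseteq Y'$, and since $c(\{u,u_e\})=0$ we obtain $c(X''\cup Y'')=c(X'')+c(Y'')\le c(X')+c(Y')=c(X'\cup Y')$. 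Iterating over all missing edges of $A$ yields a feasible $X^*\supseteq A$ with $c(X^*\cup Y^*)\le c(X'\cup Y')$; then $X:=X^*\cap E^\ell$ is a feasible leader's choice in $I$ (its follower's problem coincides with that of $X^*$ via the same contraction) with $c(X\cup Y)=c(X^*\cup Y^*)$. Taking $X'$ optimal shows the optimum of $I$ is at most that of $I'$, and since every step is polynomial and does not increase the leader's cost, an $\alpha$-approximate solution of $I'$ is turned into an $\alpha$-approximate solution of $I$.

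The one genuinely delicate point is the iteration in the last paragraph: adding an edge to the leader's solution forces the follower to re-optimize his whole response rather than merely swap one edge, so one really needs the monotonicity of Lemma~\ref{lem:Efforest} to guarantee that the follower's new forest only shrinks and hence his cost does not grow; the small leaf-exchange observation needed to certify that $X''$ stays feasible is the second thing to verify with care. The rest is bookkeeping around the contraction that matches the two follower problems.
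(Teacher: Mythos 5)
Your proof is correct and follows the same overall strategy as the paper's: replace each follower's edge by a gadget whose new leader parts have cost zero, use Lemma~\ref{lem:Efforest} to argue that the leader may be assumed to take all new leader edges without increasing her objective, and observe that contracting them recovers the original follower's problem. The difference lies in the gadget. The paper subdivides each follower's edge only once, assigning the half closer to a chosen root of its component to the leader, and then needs the forest structure of $E^f$ (via the rooting) to verify that the surviving follower halves form a matching; you subdivide twice, placing the follower's edge between two fresh vertices, which makes the matching property trivial and, as a by-product, makes the reduction work for arbitrary $E^f$ rather than only for $\IEfforest$, at the cost of one extra vertex per follower edge. You are also somewhat more careful than the paper on one point: the paper asserts without explicit argument that adding the new leader edges preserves feasibility, whereas your leaf-exchange certifies feasibility of $X''$ before invoking Lemma~\ref{lem:Efforest}. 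Both of the delicate points you flag are handled correctly.
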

\begin{proof}
  Let $I = (G, E^\ell, E^f, c, d) \in \IEfforest$. From $I$, construct
  an instance $I' \in \IEfmatching$ by applying the following
  transformation to every connected component of the graph~$(V, E^f)$
  containing more than one edge: define an arbitrary vertex in the
  connected component as its root. Replace every edge $e \in E^f$ in
  the connected component by a path of length two, with a new vertex
  in the middle. The new edge~$e'$ that is closer to the root, is
  added to $E^\ell$ and assigned $c(e') \coloneqq 0$, while the other
  new edge~$e''$ replaces~$e$ in~$E^f$ and is assigned $d(e'')
  \coloneqq d(e)$ and~$c(e'') \coloneqq c(e)$; moreover, edge~$e''$
  takes the position of~$e$ in the follower's order of
  preference. This construction ensures that~$E^f$ forms a matching
  in~$I'$ because every new vertex has only one incident follower's
  edge, and for every vertex~$v$ that was already present in~$I$, only
  the follower's edge~$e''$ arising from edge~$e$ which is contained
  in the unique path from $v$ to the corresponding root is incident
  to~$v$. See Fig.~\ref{fig:matching} for an illustration.
  \begin{figure}
    \centering
    \small
    \begin{tikzpicture}[scale=1]
      \node[draw,circle,minimum width = 0.6cm] (v1) at (0,0) {$v_1$};
      \node[draw,circle,minimum width = 0.6cm] (v2) at (0,2) {$v_2$};
      \node[draw,circle,minimum width = 0.6cm] (v3) at (2,1) {$v_3$};
      \node[draw,circle,minimum width = 0.6cm] (v4) at (2,3) {$v_4$};
      \node[draw,circle,minimum width = 0.6cm] (v5) at (4,0) {$v_5$};
      \node[draw,circle,minimum width = 0.6cm] (v6) at (4,2) {$v_6$};

      \draw[thick] (v2) to node[above] {$5$} (v3);
      \draw[thick] (v3) to node[right] {$0$} (v4);
      \draw[thick] (v4) to node[above] {$2$} (v6);
      \draw[thick] (v5) to node[right] {$3$} (v6);

      \draw[thick,dashed] (v1) to node[left] {$0 / 2$} (v2);
      \draw[thick,dashed] (v1) to node[below] {~~$0 / 2$} (v3);
      \draw[thick,dashed] (v2) to node[above] {$10 / 0$~~~~} (v4);
      \draw[thick,dashed] (v3) to node[below] {$5 / 1$~~} (v5);
      \draw[thick,dashed] (v3) to node[below] {~~$1 / 0$} (v6);
    \end{tikzpicture}
    \quad\raisebox{1.8cm}{$\leadsto$}\quad
    \begin{tikzpicture}[scale=1]
      \node[draw,circle,minimum width = 0.6cm] (v1) at (0,0) {$v_1$};
      \node[draw,circle,minimum width = 0.5cm] at (0,0) {};
      \node[draw,circle,minimum width = 0.6cm] (v2) at (0,2) {$v_2$};
      \node[draw,circle,minimum width = 0.6cm] (v3) at (2,1) {$v_3$};
      \node[draw,circle,minimum width = 0.6cm] (v4) at (2,3) {$v_4$};
      \node[draw,circle,minimum width = 0.6cm] (v5) at (4,0) {$v_5$};
      \node[draw,circle,minimum width = 0.6cm] (v6) at (4,2) {$v_6$};

      \draw[thick] (v2) to node[above] {$5$} (v3);
      \draw[thick] (v3) to node[right] {$0$} (v4);
      \draw[thick] (v4) to node[above] {$2$} (v6);
      \draw[thick] (v5) to node[right] {$3$} (v6);

      \node[draw,circle,minimum width = 0.15cm] (w12) at (0,1) {};
      \node[draw,circle,minimum width = 0.15cm] (w24) at (1,2.5) {};
      \node[draw,circle,minimum width = 0.15cm] (w13) at (1,0.5) {};
      \node[draw,circle,minimum width = 0.15cm] (w36) at (3,1.5) {};
      \node[draw,circle,minimum width = 0.15cm] (w35) at (3,0.5) {};

      \draw[thick] (v1) to node[left] {$0$} (w12);
      \draw[thick,dashed] (v2) to node[left] {$0 / 2$} (w12);
      \draw[thick] (v1) to node[below] {$0$} (w13);
      \draw[thick,dashed] (v3) to node[below] {~~$0 / 2$} (w13);
      \draw[thick] (v2) to node[above] {$0$} (w24);
      \draw[thick,dashed] (v4) to node[above] {$10 / 0$~~~~} (w24);
      \draw[thick] (v3) to node[below] {$0$} (w35);
      \draw[thick,dashed] (v5) to node[below] {$5 / 1$~~} (w35);
      \draw[thick] (v3) to node[below] {$0$} (w36);
      \draw[thick,dashed] (v6) to node[below] {~~$1 / 0$} (w36);

    \end{tikzpicture}
    \caption{Illustration of the construction in the proof of
      Lemma~\ref{lem:Efmatching} applied to the example instance of
      Fig.~\ref{fig:example}, with vertex~$v_1$ chosen as root.}
    \label{fig:matching}
  \end{figure}
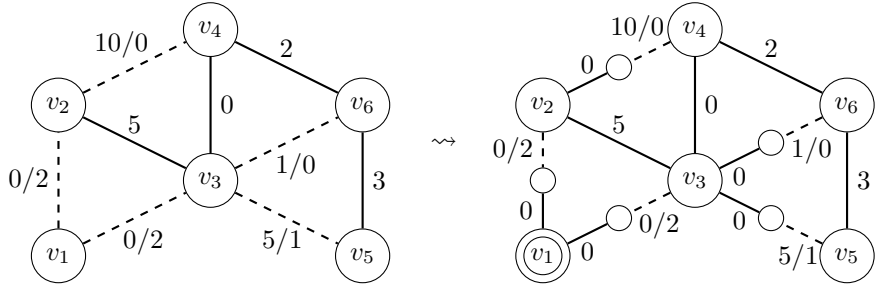

  A solution for $I$ can be transformed to a solution for $I'$ of the
  same cost by adding all newly introduced leader's edges to her
  solution, which does not change the cost. Indeed, the follower
  solves exactly the same problem after the leader's solution is
  contracted. For the opposite transformation, consider an optimum
  leader's solution~$X'$ for $I'$. Observe that we may assume all
  newly introduced edges to be in~$X'$ because otherwise, adding them
  would lead to the follower removing some of his edges from his
  response by Lemma~\ref{lem:Efforest}, which cannot increase the
  leader's objective value. Now, remove all new edges from~$X'$ in
  order to get a solution~$X$ for~$I$. Again, the follower has exactly
  the same choices responding to $X$ and $X'$, respectively. Thus, the
  objective value of~$X$ in~$I$ is at most the objective value of~$X'$
  in~$I'$.
  
  Since the edges added to $E^\ell$ connect every new vertex by
  exactly one edge, the reduction preserves $\IElconn$ and
  $\IElforest$.
\end{proof}
Combining the reductions from Lemma~\ref{lem:Efconn},
Lemma~\ref{lem:Elconn}, and Corollary~\ref{cor:Efforest}, we obtain the
following result:
\begin{corollary} \label{cor:Eftree}
  BMST on $\mathcal{I}$ can be reduced to BMST on
  $\IElconn\cap\IEftree$. The reduction preserves~$\IElforest$.
\end{corollary}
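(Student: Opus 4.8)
The plan is to obtain the reduction simply by composing the three reductions already established, in an order for which each step preserves the structural properties produced by the earlier steps. Since a composition of polynomial-time, approximation-preserving reductions is again polynomial-time and approximation-preserving, this immediately yields the statement.

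Concretely, starting from an arbitrary instance $I\in\mathcal{I}$, I would produce successively three instances $I_1,I_2,I_3$ as follows. First apply Corollary~\ref{cor:Efforest} to $I$ to obtain $I_1\in\IEfforest$. Then apply Lemma~\ref{lem:Efconn} to $I_1$ to obtain $I_2\in\IEfconn$; since Lemma~\ref{lem:Efconn} preserves $\IEfforest$, we have $I_2\in\IEfconn\cap\IEfforest=\IEftree$. Finally apply Lemma~\ref{lem:Elconn} to $I_2$ to obtain $I_3\in\IElconn$; since Lemma~\ref{lem:Elconn} preserves both $\IEfconn$ and $\IEfforest$, we end up with $I_3\in\IElconn\cap\IEftree$, as desired. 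For the additional claim, observe that each of Corollary~\ref{cor:Efforest}, Lemma~\ref{lem:Efconn}, and Lemma~\ref{lem:Elconn} preserves $\IElforest$, so if $I\in\IElforest$ then also $I_3\in\IElforest$.

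It then only remains to spell out the routine composition argument: each of the three reductions, by construction, maps an $\alpha$-approximate feasible solution of the reduced instance back to an $\alpha$-approximate feasible solution of the instance it was applied to, and each runs in polynomial time; chaining these translations gives the corresponding statement for the composed reduction $I\mapsto I_3$. I do not expect any genuine obstacle here; the only point to verify is that the chosen order of composition is admissible, i.e., that no later reduction destroys a property already achieved — which holds because each of the three reductions in fact preserves the two target properties produced by the other two (in particular it is harmless to restore connectivity of the follower's graph, and then of the leader's graph, after restricting the follower to a forest).
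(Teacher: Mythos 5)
Your proposal is correct and matches the paper's argument, which likewise obtains the corollary by chaining Lemma~\ref{lem:Efconn}, Lemma~\ref{lem:Elconn}, and Corollary~\ref{cor:Efforest} and relying on their stated preservation properties (all reductions in the paper being polynomial and approximation-preserving by construction). The particular order you chose is admissible, and your explicit check that no later step destroys an earlier property is exactly the point that makes the composition go through.
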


Dropping the connectedness of $E^f$, we can apply
Lemma~\ref{lem:Efmatching} to obtain:
\begin{corollary} \label{cor:Efmatching}
  BMST on $\mathcal{I}$ can be reduced to BMST on
  $\IElconn\cap\IEfmatching$. The reduction preserves~$\IElforest$.
\end{corollary}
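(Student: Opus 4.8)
The plan is to obtain the reduction simply by composing three of the reductions already established in this section, being careful to track which structural properties survive each step. Concretely, I would chain them as follows: first apply Corollary~\ref{cor:Efforest} to pass from $\mathcal{I}$ to $\IEfforest$; then apply Lemma~\ref{lem:Efmatching} to pass from $\IEfforest$ to $\IEfmatching$; and finally apply Lemma~\ref{lem:Elconn} to additionally enforce connectedness of the leader's graph, arriving at $\IElconn\cap\IEfmatching$.

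The only thing that needs to be checked carefully is the preservation bookkeeping at each step. Corollary~\ref{cor:Efforest} preserves $\IElforest$, so if the input instance lies in $\IElforest$, the output lies in $\IElforest\cap\IEfforest$. Lemma~\ref{lem:Efmatching} takes instances in $\IEfforest$ and outputs instances in $\IEfmatching$, while preserving $\IElconn$ and $\IElforest$; hence after this step we are in $\IEfmatching$ in general, and in $\IElforest\cap\IEfmatching$ if we started in $\IElforest$. Finally, Lemma~\ref{lem:Elconn} preserves both $\IEfmatching$ and $\IElforest$, so its output lies in $\IElconn\cap\IEfmatching$ in general, and in $\IElconn\cap\IElforest\cap\IEfmatching$ whenever the original instance was in $\IElforest$; this is exactly the claimed preservation of $\IElforest$.

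Since each of these three reductions is polynomial and approximation-preserving, as are all reductions in this section, their composition is as well, which completes the argument. I do not anticipate any genuine obstacle: the only point to get right is the order of composition. In particular, Lemma~\ref{lem:Efmatching} must be applied only once we are already in $\IEfforest$, and it is convenient (though not strictly necessary, since Corollary~\ref{cor:Efforest} and Lemma~\ref{lem:Efmatching} both preserve $\IElconn$) to defer the application of Lemma~\ref{lem:Elconn} to the end, so that we only rely on its preservation of $\IEfmatching$. Note also that, in contrast to Corollary~\ref{cor:Eftree}, we deliberately do not route through Lemma~\ref{lem:Efconn}: a matching on more than two vertices is never connected, so $\IEfconn$ cannot be maintained simultaneously with $\IEfmatching$, and enforcing connectedness of $E^f$ is precisely what we drop here.
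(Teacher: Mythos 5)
Your proposal is correct and matches the paper's (implicit) argument: the corollary is obtained by composing the reductions already established in this section and tracking which properties each step preserves. The only cosmetic difference is ordering -- the paper routes through Corollary~\ref{cor:Eftree} (which includes Lemma~\ref{lem:Efconn} and applies Lemma~\ref{lem:Elconn} early) and then drops connectedness of $E^f$ before applying Lemma~\ref{lem:Efmatching}, whereas you skip the $\IEfconn$ detour and defer Lemma~\ref{lem:Elconn} to the end; both compositions are valid and your bookkeeping of the preserved properties is accurate.
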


As mentioned above, the authors of~\cite{shi19} only consider
instances from~$\IEfE$, i.e., the follower controlling many
edges. This could be seen as an opposite assumption to instances being
chosen from~$\IEfforest$ or even~$\IEfmatching$. To show that our main
complexity results still hold in the setting of~\cite{shi19}, we use
the following result:
\begin{lemma} \label{lem:EfE}
  BMST on $\IEfconn$ can be reduced to BMST on $\IEfE$.
  The reduction preserves $\IElconn$ and $\IElforest$.
\end{lemma}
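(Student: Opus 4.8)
The plan is to add, for every leader's edge, a parallel follower's edge with the same leader's cost but a prohibitively large follower's cost, placed at the end of the follower's order of preference. Concretely, given $I = (G, E^\ell, E^f, c, d) \in \IEfconn$, I would construct $I'$ by introducing, for each $e \in E^\ell$, a new edge $e'$ parallel to $e$, adding $e'$ to $E^f$, and setting $c(e') \coloneqq c(e)$ and $d(e') \coloneqq M$ with $M \coloneqq \sum_{e \in E} d(e) + 1$; the new edges are appended at the end of the follower's order of preference, keeping the original order among the edges of $E^f$. By construction, $I'$ satisfies the defining property of $\IEfE$, and since $(V,E^f)$ is connected and we only add edges, $(V,E^f)$ stays connected, so $I' \in \IEfconn$ as well. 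Because $E^\ell$ is not touched, the reduction preserves $\IElconn$ and $\IElforest$.

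For correctness, first note that the leader's feasible sets in $I$ and $I'$ coincide: as both instances lie in $\IEfconn$, every cycle-free $X \subseteq E^\ell$ is feasible in either one, as observed before the lemma. Next, for any such $X$, the follower's responses in $I$ and $I'$ are equal. Running the follower's greedy algorithm on the edge set $E^f \cup \{e' : e \in E^\ell\}$ after contracting $X$, the original edges of $E^f$ are scanned first; since $(V,E^f)$ is connected, this already merges everything into a single component, so each new edge $e'$ is a loop by the time it is scanned and is discarded. Hence the follower's response $Y \subseteq E^f$ to $X$ is the same in both instances, and so is the leader's objective $c(X \cup Y)$. This yields a cost-preserving bijection between the leader's solutions of $I$ and those of $I'$ (the identity on $E^\ell$), so the optimal values agree and any approximation guarantee transfers.

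The argument is largely routine; the one delicate point is the placement of the new edges in the follower's order of preference. If some $e'$ were scanned early, the follower might actually select it --- namely when the leader has not chosen the parallel edge $e$ and the endpoints of $e$ are not yet connected --- which could change the follower's response and thus the leader's objective value. Appending the new edges last, which is consistent with their large follower's cost $M$, prevents this, since the connected original set $E^f$ guarantees that no new edge is ever added. (The large value of $M$ would in any case ensure that no optimal follower's response uses a new edge, but connectedness of $(V,E^f)$ already makes that automatic.)
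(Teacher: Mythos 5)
Your proposal is correct and follows essentially the same construction as the paper: add a parallel follower's copy of each leader's edge with the same leader's cost and a prohibitively large follower's cost $M$, and use connectedness of $(V,E^f)$ to argue that the follower never selects a new edge, so responses and objective values are unchanged. Your explicit remark about placing the new edges at the end of the follower's preference order is a slightly more careful justification of the same point the paper makes by appealing to the preference order being consistent with $d$.
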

\begin{proof}
  Let $I = (G, E^\ell, E^f, c, d) \in \IEfconn$. Construct an instance
  $I' \in \IEfE$ from~$I$ by creating a copy $e'$ of each edge $e \in
  E^\ell$ that does not have a parallel follower's edge of the same
  leader's cost, adding $e'$ to $E^f$ and setting $c(e') \coloneqq c(e)$ and
  $d(e') \coloneqq M$, for some large $M$, e.g., $M \coloneqq \sum_{e \in E} d(e) +
  1$. The construction is illustrated in Fig.~\ref{fig:all}.
  \begin{figure}
    \centering
    \small
    \begin{tikzpicture}[scale=1]
      \node[draw,circle,minimum width = 0.6cm] (v1) at (0,0) {$v_1$};
      \node[draw,circle,minimum width = 0.6cm] (v2) at (0,2) {$v_2$};
      \node[draw,circle,minimum width = 0.6cm] (v3) at (2,1) {$v_3$};
      \node[draw,circle,minimum width = 0.6cm] (v4) at (2,3) {$v_4$};
      \node[draw,circle,minimum width = 0.6cm] (v5) at (4,0) {$v_5$};
      \node[draw,circle,minimum width = 0.6cm] (v6) at (4,2) {$v_6$};

      \draw[thick] (v2) to node[above] {$5$} (v3);
      \draw[thick] (v3) to node[right] {$0$} (v4);
      \draw[thick] (v4) to node[above] {$2$} (v6);
      \draw[thick] (v5) to node[right] {$3$} (v6);

      \draw[thick,dashed] (v1) to node[left] {$0 / 2$} (v2);
      \draw[thick,dashed] (v1) to node[below] {~~$0 / 2$} (v3);
      \draw[thick,dashed] (v2) to node[above] {$10 / 0$~~~~} (v4);
      \draw[thick,dashed] (v3) to node[below] {$5 / 1$~~} (v5);
      \draw[thick,dashed] (v3) to node[below] {~~$1 / 0$} (v6);
    \end{tikzpicture}
    \quad\raisebox{1.8cm}{$\leadsto$}\quad
    \begin{tikzpicture}[scale=1]
      \node[draw,circle,minimum width = 0.6cm] (v1) at (0,0) {$v_1$};
      \node[draw,circle,minimum width = 0.6cm] (v2) at (0,2) {$v_2$};
      \node[draw,circle,minimum width = 0.6cm] (v3) at (2,1) {$v_3$};
      \node[draw,circle,minimum width = 0.6cm] (v4) at (2,3) {$v_4$};
      \node[draw,circle,minimum width = 0.6cm] (v5) at (4,0) {$v_5$};
      \node[draw,circle,minimum width = 0.6cm] (v6) at (4,2) {$v_6$};

      \draw[thick,transform canvas={xshift=1pt,yshift=2pt}] (v2) to node[above] {$5$} (v3);
      \draw[thick,dashed,transform canvas={xshift=-1pt,yshift=-2pt}] (v2) to node[below] {$5 / M$~~~~} (v3);
      \draw[thick,transform canvas={xshift=2.5pt,yshift=0pt}] (v3) to node[right,yshift=-5pt] {$0$} (v4);
      \draw[thick,dashed,transform canvas={xshift=-2.5pt,yshift=0pt}] (v3) to node[left,yshift=5pt] {$0 / M$} (v4);
      \draw[thick,transform canvas={xshift=1pt,yshift=2pt}] (v4) to node[above] {$2$} (v6);
      \draw[thick,dashed,transform canvas={xshift=-1pt,yshift=-2pt}] (v4) to node[below] {$2 / M$~~~~} (v6);
      \draw[thick,transform canvas={xshift=2.5pt,yshift=0pt}] (v5) to node[right] {$3$} (v6);
      \draw[thick,dashed,transform canvas={xshift=-2.5pt,yshift=0pt}] (v5) to node[left,yshift=-5pt] {$3 / M$} (v6);

      \draw[thick,dashed] (v1) to node[left] {$0 / 2$} (v2);
      \draw[thick,dashed] (v1) to node[below] {~~$0 / 2$} (v3);
      \draw[thick,dashed] (v2) to node[above] {$10 / 0$~~~~} (v4);
      \draw[thick,dashed] (v3) to node[below] {$5 / 1$~~} (v5);
      \draw[thick,dashed] (v3) to node[below] {~~$1 / 0$} (v6);
    \end{tikzpicture}
    \caption{Illustration of the construction in the proof of
      Lemma~\ref{lem:EfE} applied to the example instance of
      Fig.~\ref{fig:example}.}
    \label{fig:all}
  \end{figure}
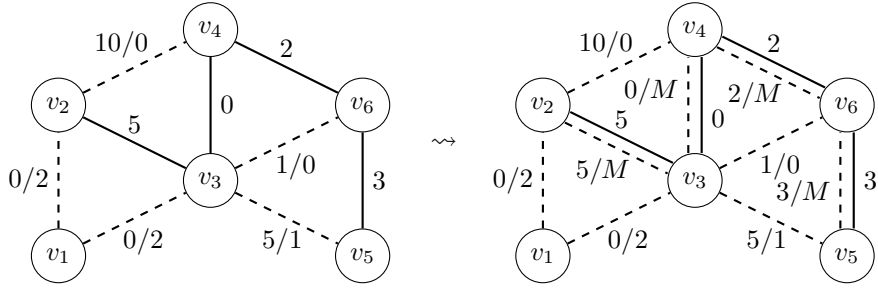

  All cycle-free sets~$X \subseteq E^\ell$ are feasible leader's
  solutions for both $I$ and $I'$ because we assume $(V, E^f)$ to be
  connected. By construction, any feasible leader's solution~$X$ leads
  to the same follower's response in~$I$ and~$I'$, since the
  additional edges are the most expensive ones for the follower and
  will thus never be chosen because he can establish any desired
  connection using only the original edges.

  As the reduction does not change the set $E^\ell$, it clearly
  preserves all its structural properties, in particular $(V, E^\ell)$
  being cycle-free or connected.
\end{proof}
From Lemma~\ref{lem:Efconn} and Lemma~\ref{lem:EfE} we derive
\begin{corollary}\label{cor:EfE}
  BMST on $\mathcal{I}$ can be reduced to BMST on $\IEfE$. The
  reduction preserves $\IElconn$ and $\IElforest$.
\end{corollary}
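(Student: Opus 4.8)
The plan is to obtain the reduction by simply composing the two reductions already established, namely the one from Lemma~\ref{lem:Efconn} and the one from Lemma~\ref{lem:EfE}. Given an arbitrary instance $I \in \mathcal{I}$, I would first apply the construction of Lemma~\ref{lem:Efconn} to produce an instance $I' \in \IEfconn$ with the same optimal value (and, more strongly, with the approximation-preserving correspondence of solutions guaranteed there). Then, since $I'$ lies in $\IEfconn$, the hypothesis of Lemma~\ref{lem:EfE} is satisfied, so I would apply its construction to $I'$ to obtain an instance $I'' \in \IEfE$, again with the approximation-preserving correspondence.

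For correctness, I would invoke the general principle stated at the beginning of Section~\ref{sec:restrictions}: a composition of polynomial, approximation-preserving reductions is again polynomial and approximation-preserving. Concretely, an approximate solution of $I''$ can be transformed into an approximate solution of $I'$ of the same cost (or within the same factor) by Lemma~\ref{lem:EfE}, and that in turn into an approximate solution of $I$ by Lemma~\ref{lem:Efconn}; the optimal values are preserved along the chain, so the approximation guarantee carries through.

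For the structural claims, I would observe that Lemma~\ref{lem:Efconn} preserves $\IElconn$ and $\IElforest$, and that Lemma~\ref{lem:EfE} also preserves $\IElconn$ and $\IElforest$; hence if the original instance $I$ lies in $\IElconn$ (respectively $\IElforest$), then so does $I'$, and therefore so does $I''$. This gives exactly the asserted preservation properties.

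I do not expect a genuine obstacle here; the only point that needs a moment's care is checking that the intermediate instance $I'$ really satisfies the input requirement $I' \in \IEfconn$ of Lemma~\ref{lem:EfE}, but this is precisely the output guarantee of Lemma~\ref{lem:Efconn}, so the two lemmas compose without friction.
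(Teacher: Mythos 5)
Your proposal matches the paper exactly: the corollary is stated as an immediate consequence of composing Lemma~\ref{lem:Efconn} (which lands in $\IEfconn$) with Lemma~\ref{lem:EfE}, and the preservation of $\IElconn$ and $\IElforest$ follows because both lemmas preserve these sets. The composition argument and the check that the intermediate instance satisfies the hypothesis of Lemma~\ref{lem:EfE} are precisely what the paper relies on.
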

In the following sections, we will also consider the case of uniform
leader's costs on the leader's edges~$E^\ell$. For this, we show
\begin{lemma} \label{lem:uniform}
  BMST on $\IEfconn$ with polynomially-bounded integer costs $c$ on
  $E^\ell$ can be reduced to BMST with $c(e) = 1$ for all $e \in
  E^\ell$. The reduction preserves $\IElconn$, $\IElforest$,
  $\IEfconn$, and $\IEfforest$.
\end{lemma}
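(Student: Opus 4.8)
The plan is to replace every leader edge of cost $k$ by a path of $k$ unit‑cost leader edges, after first arranging that all leader costs are positive.

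For the preprocessing I would replace $c$ by $c+1$ on all of $E$. Every feasible solution is a spanning tree of $G$ and thus contains exactly $|V|-1$ edges, so this adds the same constant $|V|-1$ to the leader's objective on every feasible pair $(X,Y)$; the follower's problem — his objective $d$, his feasible set, his preference order — is untouched, so the optimal solutions coincide and the original optimal value is recovered by subtracting $|V|-1$. After this step $c$ is a polynomially bounded \emph{positive} integer function on $E^\ell$, and the graph, hence each of $\IElconn$, $\IElforest$, $\IEfconn$, $\IEfforest$, is unchanged.

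Then, for each $e=\{u,v\}\in E^\ell$ with $c(e)=k\ge2$, I would delete $e$, insert a path $u=p_0-p_1-\dots-p_{k-1}-p_k=v$ of $k$ new leader edges of cost $1$ through $k-1$ new vertices $p_1,\dots,p_{k-1}$, and additionally insert, for $i=1,\dots,k-1$, a new follower edge $\{u,p_i\}$ of leader cost $0$ and follower cost $M:=\max_{e\in E}d(e)+1$, placed at the very end of the follower's preference order; leader edges with $c(e)=1$ stay as they are. The gadget's follower edges are where I expect the main difficulty: they are needed so that $(V,E^f)$ stays connected (and stays a forest, since each $p_i$ becomes a leaf there), but they must not let the leader buy the $u$--$v$ connection for less than $k$. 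The key is that $M$ exceeds every original follower cost, so when the follower runs his greedy algorithm he first processes all original edges — which already connect $u$ and $v$, as $(V,E^f)$ is connected — before touching any gadget edge; hence $\{u,p_i\}$ is taken, at zero leader cost, exactly when $p_i$ is still isolated, and is skipped as a cycle edge as soon as $p_i$ has been attached to $u$ or $v$ through the gadget's path edges. Therefore taking a proper subset of a gadget's path edges never merges $u$ and $v$ and only pays for connections the follower would otherwise supply for free, so it is strictly dominated by taking none of them, whereas taking all $k$ path edges merges $u$ and $v$ at cost exactly $k$; this is precisely the leader's original choice of whether to select $e$.

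It remains to verify the correspondence. A leader solution of the preprocessed instance maps to the solution of the reduced instance that takes, for each selected edge, all path edges of its gadget; since the partition of the original vertices induced by the leader is the same in both instances, the follower responds identically on the original edges and merely adds the zero‑cost gadget edges for the new vertices, so the leader's objective is preserved. Conversely, an optimal leader solution of the reduced instance may be assumed to take all‑or‑nothing on each gadget: removing a partial selection does not change the partition of the original vertices, hence leaves the follower's response on the original edges untouched while the freed $p_i$ are re‑attached through the zero‑cost edges $\{u,p_i\}$, so the leader's objective strictly decreases; the normalized solution then descends to the original instance with the same cost. Finally, subdividing an edge breaks neither acyclicity nor connectivity and no further leader edge is added, so $\IElforest$ and $\IElconn$ are preserved, while the added follower edges are leaves at the new vertices and hence keep $(V,E^f)$ a forest as well as connected, so $\IEfforest$ and $\IEfconn$ are preserved. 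The reduction is polynomial since the number of edges added is $\sum_{e\in E^\ell}(c(e)-1)$, which is polynomially bounded.
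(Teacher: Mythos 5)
Your construction is essentially the paper's: subdivide each leader edge of cost $k$ into a path of $k$ unit-cost leader edges and attach each interior vertex to an endpoint by a follower edge of leader cost $0$ and prohibitively large follower cost, then argue that selecting a proper subset of a path is strictly dominated. Your justification of this last point --- a partial selection leaves the partition of the original vertices unchanged, so the follower's response on the original edges is identical and he reattaches the freed interior vertices at zero leader cost, with $\IEfconn$ guaranteeing feasibility --- is the same as the paper's, and your verification of the preserved instance classes is correct. The one place you diverge is the treatment of zero-cost leader edges: the paper contracts them, whereas you shift all costs by $+1$ and absorb the resulting additive constant $|V|-1$. Your shift is a valid exact reduction (every feasible solution is a spanning tree with exactly $|V|-1$ edges, and the follower's problem is untouched), but it is not approximation-preserving, unlike the contraction, which leaves the optimal value unchanged; since the reductions of this section are later chained into the inapproximability statement of Corollary~\ref{cor:hard:uniform}, the contraction variant is the one needed there, although for the lemma as literally stated your version suffices.
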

\begin{proof}
  Let $I = (G, E^\ell, E^f, c, d) \in \IEfconn$ with
  polynomially-bounded integer costs~$c$ on $E^\ell$.  Construct an
  instance~$I'$ of BMST with uniform costs $c$ on $E^\ell$ as follows:
  contract all edges $e \in E^\ell_0 \coloneqq \{e \in E^\ell \mid
  c(e) = 0\}$. Each edge~$e=\{v,w\}\in E^\ell \setminus E^\ell_0$ is
  replaced by a path~$P_e$ of length~$c(e)$, consisting of leader's
  edges again.  Each interior vertex~$u$ of $P_e$ is connected to~$v$
  by a new edge~$e'$ added to~$E^f$ with $c(e')\coloneqq0$ and
  $d(e')\coloneqq M$ for some large enough constant~$M \coloneqq
  \sum_{e \in E} d(e) +1$. Note that for edges~$e \in E^\ell$
  with~$c(e) = 1$ nothing changes.

  We claim that the instances~$I$ and~$I'$ have the same optimum
  value. Given an optimum solution~$X$ to~$I$, we first may assume
  that $X$ contains a maximal forest in~$E^\ell_0$ because otherwise,
  we could add an edge from $E^\ell_0$ to $X$, replacing some edge~$e$
  with $c(e) \geq 0$ in the resulting spanning tree. A feasible
  solution~$X'$ to~$I'$ having the same objective value as $X$ can be
  defined by setting~$X'\coloneqq\cup_{e\in X}P_e$.  This is true
  since the follower has to connect all interior vertices of
  paths~$P_e$ with~$e\not\in X$ using the newly introduced follower's
  edges in order to ensure that the resulting graph is a tree. These
  edges have cost $0$ for the leader.  After adding these edges, the
  follower has exactly the same choices as in the instance~$I$.

  Conversely, given an optimum solution~$X'$ to~$I'$, we may assume
  that, for each edge~$e\in E^\ell \setminus E^\ell_0$, either all
  edges in~$P_e$ belong to~$X'$ or none: assume this is not true and
  consider some solution~$X'$ to~$I'$ that contradicts this property.
  Let~$Y'$ be the follower's response to~$X'$. We construct a
  solution~$X'' \subset X'$ to~$I'$ with follower's response $Y''$
  with $c(X'') < c(X')$ and $c(Y') = c(Y'')$ as follows: let
  $$X'' \coloneqq \bigcup \{P_e \mid e \in
  E^\ell \setminus E^\ell_0, P_e \subseteq X'\}$$ consist of all the
  paths that are entirely contained in $X'$, i.e., we simply leave out
  all edges of paths that were only taken partially in $X'$.  As we
  assume $I \in \IEfconn$, which implies also $I' \in \IEfconn$, the
  leader's solution $X''$ is clearly feasible because the follower can
  complete any solution to a spanning tree. Moreover, observe that,
  since the edges connecting the inner vertices of the paths~$P_e$
  have very high cost for the follower, they are only taken if
  absolutely necessary. Therefore, the response~$Y''$ to $X''$ is the
  same as the response $Y'$ to $X'$ with some additional edges that
  connect the inner vertices of the paths $P_e$ that are connected by
  $X'$, but not by $X''$. This shows $c(Y') = c(Y'')$, since these
  additional edges have cost $0$ for the leader; hence, $X''$ is the
  desired solution. Thus, we can assume that, for each~$e\in E^\ell
  \setminus E^\ell_0$, either all edges in~$P_e$ belong to~$X'$ or
  none.  Setting~$X\coloneqq\{e\in E^\ell \setminus E^\ell_0 \mid
  P_e\subseteq X'\} \cup F$, where $F$ is a maximal forest in
  $E^\ell_0$, then yields a feasible solution to~$I$ with the same
  objective value as~$X'$ in~$I'$ because the follower's responses to
  $X$ and $X'$ have the same cost, by the same arguments as in the
  first part of the proof.

  Acyclicity and connectedness of both $E^\ell$ and $E^f$ are
  preserved because the construction ensures that the newly introduced
  vertices are all connected to the old vertices in an acyclic manner
  in both $E^\ell$ and $E^f$.
\end{proof}
The reduction described in the proof of Lemma~\ref{lem:Efconn} only
introduces follower's edges. We can thus combine it with
Lemma~\ref{lem:uniform} to obtain
\begin{corollary}\label{cor:uniform}
  BMST on ${\mathcal I}$ with polynomially-bounded integer costs $c$
  on $E^\ell$ can be reduced to BMST with $c(e) = 1$ for all $e \in
  E^\ell$. The reduction preserves $\IElconn$, $\IElforest$,
  $\IEfconn$, and $\IEfforest$.
\end{corollary}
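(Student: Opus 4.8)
The plan is to compose the two reductions mentioned just before the statement. First I would invoke Lemma~\ref{lem:Efconn} to reduce BMST on $\mathcal{I}$ to BMST on $\IEfconn$. The crucial point here is that the construction in the proof of Lemma~\ref{lem:Efconn} only introduces \emph{follower's} edges and never modifies $E^\ell$ or the restriction of $c$ to $E^\ell$. Hence, if the original instance has polynomially bounded integer costs $c$ on $E^\ell$, then so does the resulting instance in $\IEfconn$. Moreover, by Lemma~\ref{lem:Efconn}, this first step preserves $\IElconn$, $\IElforest$, and $\IEfforest$; and if the input instance already lies in $\IEfconn$, then no edges are added at all, so the step is the identity and $\IEfconn$ is trivially preserved as well.

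Second, I would apply Lemma~\ref{lem:uniform} to the resulting instance, which lies in $\IEfconn$ and still has polynomially bounded integer costs $c$ on $E^\ell$, obtaining an instance of BMST with $c(e) = 1$ for all $e \in E^\ell$. By Lemma~\ref{lem:uniform}, this second step preserves $\IElconn$, $\IElforest$, $\IEfconn$, and $\IEfforest$.

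Finally, I would observe that the composition of the two reductions is again polynomial — the instance size grows by at most the (polynomial) number of follower's edges added in the first step, and by at most a factor $\max_{e \in E^\ell} c(e)$ in the second step, which is polynomial by assumption — and approximation-preserving, being a composition of two approximation-preserving reductions. Chasing the preserved sets through both steps then shows that each of $\IElconn$, $\IElforest$, $\IEfconn$, and $\IEfforest$ survives the combined reduction, which is exactly the claim. The only step requiring care is the first bookkeeping point: that the hypothesis of Lemma~\ref{lem:uniform}, namely polynomially bounded integer costs on $E^\ell$, is still met after applying Lemma~\ref{lem:Efconn}; this holds precisely because that reduction leaves $E^\ell$ and $c|_{E^\ell}$ untouched, so there is no real obstacle here beyond verifying this compatibility.
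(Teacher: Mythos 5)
Your proposal is correct and follows exactly the paper's intended argument: the paper derives Corollary~\ref{cor:uniform} by composing Lemma~\ref{lem:Efconn} with Lemma~\ref{lem:uniform}, justified by the same key observation you make, namely that the reduction of Lemma~\ref{lem:Efconn} only introduces follower's edges and hence leaves the hypothesis of Lemma~\ref{lem:uniform} intact. Your additional bookkeeping about the preserved instance classes and polynomiality is consistent with, and slightly more explicit than, what the paper states.
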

  
\section{Main complexity results}
\label{sec:complexity}

In this section, we establish a first hardness result for BMST using
a reduction from the well-known \emph{Steiner forest problem}:

\medskip

(SF) Given a connected graph $G = (V, E)$ with edge lengths
$\ell\colon E \rightarrow \mathbb{R}_{\geq 0}$ and~$k$~disjoint sets
$S_1, \dots, S_k \subseteq V$, find a forest~$F\subseteq E$ of minimum
total length~$\ell(F)$, such that for each terminal set~$S_i$, all vertices in~$S_i$ are connected
in the graph~$(V, F)$.

\medskip

The best approximation ratio that is known for SF is $2$~\cite{jain2001factor}
and the problem is NP-hard to approximate within a factor of
$\tfrac{96}{95}$~\cite{chlebik08}.  We will reduce SF to BMST in
order to obtain the following result:
\begin{theorem}\label{thm:hard}
  BMST cannot be approximated to within a factor of $\frac{96}{95}$ in
  polynomial time, unless~P$\,=\,$NP, even if~$E^f$ is a tree.
\end{theorem}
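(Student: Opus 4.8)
The plan is to build, from any instance of SF, an instance of BMST in which the leader is forced to "pay for" a Steiner forest while the follower passively completes it to a spanning tree. First I would take an SF instance with graph $G=(V,E)$, lengths $\ell$, and terminal sets $S_1,\dots,S_k$. I would create a BMST instance whose vertex set contains $V$ together with one extra apex vertex $r$. For each original edge $e\in E$ I put a leader's edge (a copy of $e$) with leader's cost $c(e)\coloneqq\ell(e)$ and follower's cost $d(e)$ chosen large, say $d(e)\coloneqq 1$ uniformly or some value making the follower reluctant to use these edges. The follower's edge set $E^f$ should encode the terminal structure: I would attach $r$ to the vertices by follower-controlled edges in such a way that the follower can connect $r$ to the whole graph \emph{cheaply} precisely when, and only when, the pieces the leader has bought already connect each $S_i$ internally. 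Concretely, for each terminal set $S_i$ one introduces a gadget — e.g. a follower's star from $r$ (or from an auxiliary vertex representing $S_i$) to the vertices of $S_i$ — with follower's costs arranged so that the follower will use exactly one edge into each already-connected component, and is penalized (via a high-cost "emergency" edge, as in the $M$-trick of Lemma~\ref{lem:Efconn}) if some $S_i$ is not yet internally connected. The leader's cost of all follower's edges should be $0$, so that the leader's objective equals $c(X)=\ell(F)$ where $F$ is the (forest) set of original edges she selects.

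The key steps in order: (1) describe the construction above precisely, ensuring $(V\cup\{r\},E^f)$ is connected so that every cycle-free $X\subseteq E^\ell$ is a feasible leader's choice (invoking $\IEfconn$); (2) argue that in an optimal leader's solution, $X$ corresponds to a subforest of $E$ that connects every $S_i$ — i.e. a feasible Steiner forest — because otherwise the follower is forced to take an expensive edge which, by the choice of $M$, dominates the leader's objective and is therefore never optimal for the leader; (3) conversely, show that for any Steiner forest $F$ the leader's choice $X\coloneqq F$ (as original-edge copies) yields follower's response of leader-cost $0$, so the BMST objective is exactly $\ell(F)$; (4) conclude that the optimum of the BMST instance equals the optimum of the SF instance, and that the reduction is approximation-preserving, so a $\tfrac{96}{95}$-approximation for BMST would yield one for SF, contradicting \cite{chlebik08} unless P$=$NP. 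Finally (5), check that $E^f$ can be taken to be a tree: the follower's gadget edges form a forest by design, and one can make $(V\cup\{r\},E^f)$ a tree (a spanning star-like tree through $r$) without affecting the argument, invoking Corollary~\ref{cor:Efforest}/Corollary~\ref{cor:Eftree} if needed to clean up, since those reductions preserve the relevant structure.

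The main obstacle will be step (2): getting the follower's cost structure on the gadget edges exactly right so that the follower's greedy/Kruskal response faithfully detects "is each $S_i$ connected by $X$?" without accidentally giving the leader a cheaper indirect way to connect terminals through $r$. In particular one must ensure the follower never uses a gadget edge to shortcut a connection the leader was supposed to pay for — this is controlled by making the follower's edges into $S_i$ only usable to attach $r$ to a component, not to merge two components of $S_i$ cheaply, or by routing all such edges through a single per-terminal-set hub so that using two of them creates a cycle once the component is already connected. Verifying that the follower's deterministic tie-breaking (the fixed order of preference assumed throughout) does not spoil this correspondence is the delicate part, and is exactly where the $M$-penalty edges and the $0$ leader-cost on follower edges do the work; a clean statement of the bijection between optimal leader choices and minimal Steiner forests, up to the follower's forced completion, is what pins the argument down.
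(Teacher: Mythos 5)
Your overall strategy is the paper's: reduce from SF, let $E^\ell$ be the original edges with $c=\ell$, and add follower's edges that are cheap for the follower but cost $M$ for the leader, so that they act as detectors for terminal sets left unconnected by the leader. The place where your plan is not yet a proof is exactly the place you flag as delicate, and the concrete gadget you propose does not close it. With a star of follower's edges from an apex $r$ into each $S_i$ (or into all of $V$), the follower's response attaches each connected component of $(V,X)$ to $r$ using exactly \emph{one} edge per component, namely his most preferred available one. So if $S_i$ is split between two components $C_1$ and $C_2$, and each of them happens to contain some vertex with a leader-cost-$0$ edge to $r$ (a non-terminal, or the designated ``free'' attachment vertex of some other $S_j$), the follower connects both components for free and the split goes undetected; conversely, if you make all star edges into $S_i$ cost $M$ for the leader, she pays $M$ even when she has connected $S_i$, because one edge per $S_i$-containing component must still be bought just to reach $r$. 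You would also need to control, via the follower's fixed tie-breaking order, \emph{which} edge into each component he picks. None of this is unrepairable, but as stated the correspondence between optimal leader's values and optimal Steiner forest lengths does not hold.

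The paper's construction sidesteps all of this by using no apex at all: $E^f_0$ is a forest whose components span the sets $S_1,\dots,S_k$ \emph{internally}, with $d=0$ and $c=M$ on these edges, and the remaining follower's edges, which complete $E^f_0$ to a spanning tree of $V$, get $d=1$ and $c=0$. Since the $E^f_0$ edges are the follower's strictly most preferred, his greedy algorithm takes one of them precisely when it does not close a cycle, i.e.\ precisely when the leader has failed to connect the corresponding part of some $S_i$; each such edge costs the leader $M>\sum_{e}\ell(e)$, and no edge incident to a fixed apex is forced into every solution. This makes the equality of optima (hence preservation of the $\tfrac{96}{95}$ gap) immediate, and $E^f$ is a tree by construction. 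If you want to keep your plan, replace the apex stars by intra-$S_i$ detector forests and drop $r$; your steps (1)--(5) then go through essentially as in the paper. (Minor point: assigning a large $d$-value to leader's edges has no effect in the sum-objective model, since $Y\subseteq E^f$.)
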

\begin{proof}
  Let $I$ be an instance of SF, consisting of a graph $G = (V, E)$
  with edge lengths $\ell\colon E \rightarrow \mathbb{R}_{\geq 0}$ and
  disjoint terminal sets $S_1, \dots, S_k \subseteq V$.  We construct
  an instance~$I'$ of BMST as follows.  The graph in $I'$ is $G'
  \coloneqq (V, E^\ell \cup E^f )$, where~$E^\ell\coloneqq E$
  and~$E^f$ is defined as follows: first introduce edges forming any
  forest with connected components having vertex sets $S_1, \dots,
  S_k$ and call this edge set $E^f_0$. Then, add any further edges
  turning~$E^f_0$ into a spanning tree on~$V$. All new edges together
  form the set $E^f$. The cost function for the leader is
    \[
        c(e) \coloneqq
        \begin{cases}
            \ell(e),	&	\text{if } e \in E^\ell ,\\
            M,	&	\text{if } e \in E^f_0 ,\\
            0, &	\text{if } e \in E^f \setminus E^f_0,
        \end{cases}
    \] 
    where $M$ is some large constant such as $\sum_{e \in E} \ell(e) +1$.
    The cost function for the follower is given by
    \[
        d(e) \coloneqq
        \begin{cases}
            0,	&	\text{if } e \in E^f_0 ,\\
            1, &	\text{if } e \in E^f \setminus E^f_0.
        \end{cases}
    \] 
    This finishes the construction of~$I'$.  We now show that any
    optimum solution~$X$ to $I$ corresponds to a feasible leader's
    solution~$X'$ to $I'$ of the same cost, and vice versa. So let $X
    \subseteq E$ be any solution to $I$.  Then $X' \coloneqq X$ is a
    feasible leader's solution since~$X$ forms a forest and $E^f$
    connects all vertices, so that the follower can complete any
    leader's solution to a tree. Since $X$ connects each terminal set,
    the follower's response to $X'$ does not contain any edges from
    $E^f_0$ as they would form a cycle together with~$X'$. Hence, the
    follower's response only consists of edges having cost $0$ for
    the leader. Therefore, the overall cost for the leader is
    simply~$c(X') = \ell(X)$.

    It remains to show that any optimum solution $X'$ to $I'$
    corresponds to a feasible solution $X$ to SF of the same
    cost. Clearly, there exists a leader's solution to~$I'$ of cost at
    most~$M-1$, e.g., one could choose any spanning tree
    in~$G=(V,E^\ell)$. By optimality of~$X'$, this implies that the
    follower's response to~$X'$ does not contain any of the edges in
    $E^f_0$. However, since the follower's cost for the edges in
    $E^f_0$ is cheaper than the cost of the edges in $E^f \setminus
    E^f_0$, this implies that the leader's solution~$X'$ connects each
    terminal set. As $X'$ is also cycle-free, it is a solution to~$I$
    having cost~$\ell(X')$. 
\end{proof}
\begin{remark}\label{rem:treestructure}
  The definition of $E^f$ in the proof of Theorem~\ref{thm:hard}
  leaves a lot of freedom concerning the structure of the follower's
  tree. For example, it can always be chosen to form a path. Moreover,
  the reduction can be performed analogously from the Steiner tree
  problem instead of the Steiner forest problem, i.e., where only one
  terminal set $S_1$ is given. Then the structure of the follower's
  tree is even less restricted, for example, the set~$E^f$ can be
  chosen to form a star; see Fig.~\ref{fig:steiner} for an
  illustration. Thus, the hardness of Theorem~\ref{thm:hard} still
  holds for restrictions of the follower's tree's structure such as
  $E^f$ being a path or a star.
\end{remark}
\begin{figure}
  \centering
  \small
  \begin{tikzpicture}[scale=1]
    \node[draw,circle,minimum width = 0.6cm] (v1) at (0,0) {$v_1$};
    \node[draw,circle,minimum width = 0.5cm] at (0,0) {};
    \node[draw,circle,minimum width = 0.6cm] (v2) at (0,2) {$v_2$};
    \node[draw,circle,minimum width = 0.6cm] (v3) at (2,1) {$v_3$};
    \node[draw,circle,minimum width = 0.6cm] (v4) at (2,3) {$v_4$};
    \node[draw,circle,minimum width = 0.5cm] at (2,3) {};
    \node[draw,circle,minimum width = 0.6cm] (v5) at (4,0) {$v_5$};
    \node[draw,circle,minimum width = 0.6cm] (v6) at (4,2) {$v_6$};
    \node[draw,circle,minimum width = 0.5cm] at (4,2) {};

    \draw[thick] (v2) to node[below] {$5$} (v3);
    \draw[thick,red] (v3) to node[left] {$0$} (v4);
    \draw[thick] (v4) to node[above] {$2$} (v6);
    \draw[thick] (v5) to node[right] {$3$} (v6);
    \draw[thick] (v1) to node[left] {$0$} (v2);
    \draw[thick,red] (v1) to node[below] {$0$} (v3);
    \draw[thick] (v2) to node[above] {$10$} (v4);
    \draw[thick] (v3) to node[below] {$5$} (v5);
    \draw[thick,red] (v3) to node[below] {$1$} (v6);
  \end{tikzpicture}
  \quad\raisebox{1.8cm}{$\leadsto$}\quad
  \begin{tikzpicture}[scale=1]
    \node[draw,circle,minimum width = 0.6cm] (v1) at (0,0) {$v_1$};
    \node[draw,circle,minimum width = 0.5cm] at (0,0) {};
    \node[draw,circle,minimum width = 0.6cm] (v2) at (0,2) {$v_2$};
    \node[draw,circle,minimum width = 0.6cm] (v3) at (2,1) {$v_3$};
    \node[draw,circle,minimum width = 0.6cm] (v4) at (2,3) {$v_4$};
    \node[draw,circle,minimum width = 0.5cm] at (2,3) {};
    \node[draw,circle,minimum width = 0.6cm] (v5) at (4,0) {$v_5$};
    \node[draw,circle,minimum width = 0.6cm] (v6) at (4,2) {$v_6$};
    \node[draw,circle,minimum width = 0.5cm] at (4,2) {};

    \draw[thick] (v2) to node[pos=0.7,below] {$5$} (v3);
    \draw[thick,transform canvas={xshift=-2pt,yshift=0pt},red] (v3) to node[pos=0.3,left] {$0$} (v4);
    \draw[thick,transform canvas={xshift=1pt,yshift=2pt}] (v4) to node[above] {$2$} (v6);
    \draw[thick] (v5) to node[right] {$3$} (v6);
    \draw[thick] (v1) to node[left] {$0$} (v2);
    \draw[thick,red] (v1) to node[below] {$0$} (v3);
    \draw[thick,transform canvas={xshift=-1pt,yshift=2pt}] (v2) to node[above] {$10$} (v4);
    \draw[thick] (v3) to node[below] {$5$} (v5);
    \draw[thick,red] (v3) to node[pos=0.3,below] {$1$} (v6);

    \draw[thick,dashed] (v1) to node[pos=0.4,left] {$M/0$} (v4);
    \draw[thick,dashed,transform canvas={xshift=0.75pt,yshift=-1.5pt}] (v4) to node[pos=0.7,below] {$M/0$~~~} (v6);

    \draw[very thick,dotted,transform canvas={xshift=2pt,yshift=0pt}] (v3) to node[pos=0.3,right] {$0/1$} (v4);
    \draw[very thick,dotted,transform canvas={xshift=0.75pt,yshift=-1.5pt},red] (v2) to node[pos=0.35,below] {$0/1$} (v4);
    \draw[very thick,dotted,transform canvas={xshift=2.5pt,yshift=0pt},red] (v4) to node[pos=0.6,right] {$0/1$} (v5);

  \end{tikzpicture}
  \caption{Illustration of the proof of Theorem~\ref{thm:hard}
    for~$k=1$ with~$E_f$ being a star. The marked
    vertices~$v_1,v_4,v_6$ are the given terminals. Edges in~$E^f_0$ are
    represented as dashed edges, the remaining edges in~$E^f$ are
    drawn as dotted egdes. Any vertex of~$S_1$ can be chosen as the
    center of the star, here it is~$v_4$. Red edges mark the optimum
    Steiner tree in the input graph and the optimum leader's solution
    and corresponding follower's response in the BMST instance.}
  \label{fig:steiner}
\end{figure}
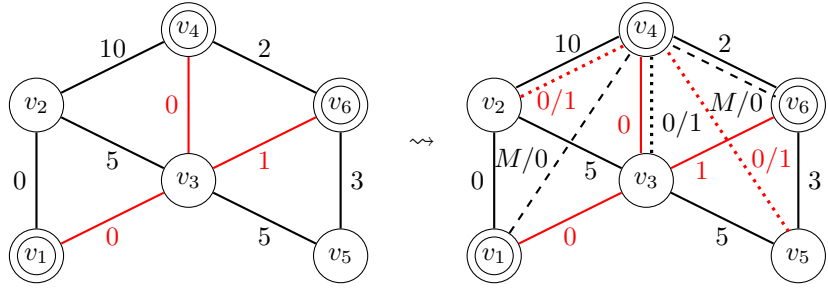

Theorem~\ref{thm:hard} and Corollary~\ref{cor:EfE}
together prove a conjecture stated by Shi \emph{et al.}~\cite{shi19}:
\begin{corollary}\label{cor:hard:EfE}
  BMST on $\IEfE$ cannot be approximated to within a factor
  of~$\frac{96}{95}$ in polynomial time, unless~P$\,=\,$NP.
\end{corollary}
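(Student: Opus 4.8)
The plan is straightforward: chain together the two results already established in the excerpt. Corollary~\ref{cor:EfE} gives a polynomial, approximation-preserving reduction from BMST on the full instance set~$\mathcal{I}$ to BMST on $\IEfE$. Theorem~\ref{thm:hard} establishes that BMST on~$\mathcal{I}$ (indeed already on instances where $E^f$ is a tree) cannot be approximated within $\tfrac{96}{95}$ unless P$\,=\,$NP. So the task is simply to argue that the hardness transfers \emph{to} the restricted class $\IEfE$ via the reduction of Corollary~\ref{cor:EfE}.

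First I would spell out the logical direction carefully, since it is the opposite of what one might naively expect: a reduction \emph{to} $\IEfE$ normally transfers hardness \emph{from} $\IEfE$, not to it. The point is that Theorem~\ref{thm:hard} produces hard instances that already lie in $\IEfconn$ (in fact $\IEftree \subseteq \IEfconn$), and Corollary~\ref{cor:EfE} — being the composition of Lemma~\ref{lem:Efconn} and Lemma~\ref{lem:EfE} — restricted to such instances is effectively just Lemma~\ref{lem:EfE}, which maps $\IEfconn$ into $\IEfE$ while preserving $\IElconn$ and $\IElforest$. Thus, composing the SF-to-BMST reduction of Theorem~\ref{thm:hard} with the reduction of Lemma~\ref{lem:EfE} yields a single polynomial, approximation-preserving reduction from SF directly to BMST on~$\IEfE$. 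Concretely: a $\tfrac{96}{95}$-approximation algorithm for BMST on $\IEfE$ would, composed with these reductions, give a $\tfrac{96}{95}$-approximation for SF, contradicting~\cite{chlebik08} unless P$\,=\,$NP.

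The key steps, in order, are: (i) take an arbitrary SF instance and apply the construction of Theorem~\ref{thm:hard} to get a BMST instance $I'$ with $E^f$ a tree, in particular $I' \in \IEfconn$; (ii) apply the reduction of Lemma~\ref{lem:EfE} to $I'$ to obtain $I'' \in \IEfE$, noting that the objective values of corresponding solutions are preserved and the reduction is polynomial; (iii) observe that an $\alpha$-approximate solution for $I''$ pulls back through Lemma~\ref{lem:EfE} to an $\alpha$-approximate solution for $I'$, and through Theorem~\ref{thm:hard} to an $\alpha$-approximate Steiner forest; (iv) invoke the inapproximability of SF to conclude.

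Honestly, there is no real obstacle here — the statement is a bookkeeping corollary, and the only thing to be careful about is the direction of the reduction and the fact that approximation ratios compose multiplicatively (here they all stay at $\tfrac{96}{95}$ because each reduction is cost-preserving, not merely approximation-preserving with some blow-up). The ``hard part,'' such as it is, is just to state clearly \emph{which} earlier results are being composed and in which order, so that the reader sees the claim is immediate. One could alternatively phrase the proof as: Corollary~\ref{cor:EfE} reduces BMST on~$\mathcal I$ to BMST on~$\IEfE$ preserving $\IElforest$ and being approximation-preserving, hence any approximation algorithm for the latter yields one of the same ratio for the former, and then Theorem~\ref{thm:hard} finishes it. That is the one-sentence version; the proof text should expand it only enough to make the composition transparent.

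\begin{proof}
  By Corollary~\ref{cor:EfE}, BMST on~$\mathcal{I}$ reduces to BMST on~$\IEfE$ via a polynomial, approximation-preserving reduction. Hence any polynomial-time $\alpha$-approximation algorithm for BMST on~$\IEfE$ would yield a polynomial-time $\alpha$-approximation algorithm for BMST on~$\mathcal{I}$, for any~$\alpha \geq 1$. Concretely, the instances produced by the reduction of Theorem~\ref{thm:hard} have~$E^f$ a tree, so in particular they lie in~$\IEfconn$; applying Lemma~\ref{lem:EfE} to them (which is precisely the nontrivial part of Corollary~\ref{cor:EfE} in this case) yields instances in~$\IEfE$ with the objective values of corresponding leader's solutions preserved exactly. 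Composing the SF-to-BMST reduction from the proof of Theorem~\ref{thm:hard} with this reduction gives a polynomial, cost-preserving reduction from the Steiner forest problem directly to BMST on~$\IEfE$: an $\alpha$-approximate solution for the resulting $\IEfE$-instance pulls back, with the same ratio, first to an $\alpha$-approximate leader's solution for the intermediate instance and then to an $\alpha$-approximate Steiner forest. Since SF is NP-hard to approximate within a factor of~$\tfrac{96}{95}$~\cite{chlebik08}, it follows that BMST on~$\IEfE$ cannot be approximated within~$\tfrac{96}{95}$ in polynomial time unless~P$\,=\,$NP.
\end{proof}
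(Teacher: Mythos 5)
Your proposal is correct and follows exactly the route the paper intends: it states that Theorem~\ref{thm:hard} and Corollary~\ref{cor:EfE} together yield the result, and your proof simply makes explicit that the hard instances from Theorem~\ref{thm:hard} lie in $\IEfconn$, so Lemma~\ref{lem:EfE} maps them into $\IEfE$ in a cost-preserving way. The extra care you take about the direction of the reduction is a sensible expansion of the paper's one-line justification, not a different argument.
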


\begin{remark} \label{rem:mmst}
  If we allow negative costs in BMST, the proof of
  Theorem~\ref{thm:hard} works in the same way if we define $d(e) \coloneqq
  -c(e)$ for all $e \in E^\ell \cup E^f$ instead. This shows that the
  special case of BMST in which the follower is adversarial to the
  leader, having the opposite objective function, is hard as
  well. This is in contrast to~\cite{shi19} where this special case
  (called MMST there) is shown to be polynomial-time solvable, for
  both sum and bottleneck objective. However, this is not a
  contradiction because the authors of~\cite{shi19} only work with
  instances from $\IEfE$. In fact, Corollary~\ref{cor:hard:EfE} does
  not carry over to the special case of MMST since the property of
  opposite objective functions is lost in the construction in
  Lemma~\ref{lem:EfE}.
\end{remark}

Together with Corollary~\ref{cor:Efmatching}, we can conclude that
BMST remains hard even if the follower controls a matching, and hence
a very simple combinatorial structure.
\begin{corollary}
  \label{cor:hard:Efmatching}
  BMST cannot be approximated to within a factor of $\frac{96}{95}$ in
  polynomial time, unless~P$\,=\,$NP, even if~$E^f$ is a matching.
\end{corollary}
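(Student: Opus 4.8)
The plan is to combine two results that are already available in the paper. Corollary~\ref{cor:Efmatching} tells us that BMST on the full instance set~$\mathcal{I}$ can be reduced (in polynomial time, preserving approximation guarantees) to BMST on $\IElconn\cap\IEfmatching$, i.e., to instances in which the follower's edge set is a matching. Theorem~\ref{thm:hard} tells us that BMST is already inapproximable within a factor of~$\tfrac{96}{95}$ even when $E^f$ is a tree. The idea is therefore to chase an inapproximability instance through the matching reduction.

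The key steps, in order, are as follows. First I would take the hard instances produced by Theorem~\ref{thm:hard}: a polynomial-time reduction from Steiner forest to BMST whose output instances lie in $\IEfforest$ (the follower's edges form a tree, hence in particular a forest). Second, I would feed these instances into the chain of reductions behind Corollary~\ref{cor:Efmatching}, namely the reduction of Lemma~\ref{lem:Efmatching} from BMST on $\IEfforest$ to BMST on $\IEfmatching$ (combined, if desired, with Lemma~\ref{lem:Elconn} to additionally land in $\IElconn$). Since all these reductions are polynomial and approximation-preserving, a polynomial-time $\tfrac{96}{95}$-approximation for BMST on instances with $E^f$ a matching would yield a polynomial-time $\tfrac{96}{95}$-approximation for BMST on $\IEfforest$, hence for general BMST, hence for Steiner forest, contradicting the hardness of approximating SF within $\tfrac{96}{95}$ unless $\mathrm{P}=\mathrm{NP}$. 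Composing the inapproximability threshold through the reductions, the claimed bound follows.

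The argument is essentially a one-line composition once the preceding lemmas are in place, so there is no real technical obstacle here; the only thing to be careful about is bookkeeping of the structural classes. Specifically, Theorem~\ref{thm:hard} delivers instances with $E^f$ a tree, which is fine because a tree is a forest, so the instances lie in $\IEfforest$, exactly the domain of Lemma~\ref{lem:Efmatching}; and that lemma's reduction preserves nothing we need beyond producing $\IEfmatching$ (it does preserve $\IElconn$ and $\IElforest$, but we do not rely on that). One should also note that the approximation factor is genuinely preserved and not merely the decision hardness, which is why we may start from Theorem~\ref{thm:hard}'s quantitative bound rather than only from NP-hardness.

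In LaTeX, the proof can be stated very briefly:
\begin{proof}
  By Theorem~\ref{thm:hard}, BMST is inapproximable within a factor
  of~$\tfrac{96}{95}$ unless $\mathrm{P}=\mathrm{NP}$, already on
  instances in~$\IEfforest$, since the instances constructed there
  have $E^f$ forming a tree. By Lemma~\ref{lem:Efmatching}, BMST on
  $\IEfforest$ reduces, in polynomial time and preserving
  approximation guarantees, to BMST on~$\IEfmatching$. Composing these
  two facts yields that BMST on~$\IEfmatching$, i.e., the case where
  $E^f$ is a matching, cannot be approximated within a factor
  of~$\tfrac{96}{95}$ in polynomial time unless $\mathrm{P}=\mathrm{NP}$.
\end{proof}
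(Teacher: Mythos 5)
Your proof is correct and follows essentially the same route as the paper, which obtains the corollary by combining Theorem~\ref{thm:hard} with Corollary~\ref{cor:Efmatching} (itself built on Lemma~\ref{lem:Efmatching}); your observation that the hard instances already lie in $\IEfforest$, so Lemma~\ref{lem:Efmatching} can be applied directly, is a harmless shortcut past the full reduction chain.
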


From Corollary~\ref{cor:uniform} it follows that the hardness of BMST is
preserved even in the case of uniform leader's costs on her own
edges. We emphasize that Theorem~\ref{thm:hard} still holds for
polynomially-bounded and integer leader's cost since Steiner forest is
strongly NP-hard~\cite{bern1989steiner}.
\begin{corollary}
  \label{cor:hard:uniform}
  BMST cannot be approximated to within a factor of $\frac{96}{95}$ in
  polynomial time, unless~P$\,=\,$NP, even if $E^f$ is a tree
  and~$c(e)=1$ holds for all~$e\in E^\ell$.
\end{corollary}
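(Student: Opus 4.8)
The plan is to obtain this result purely by composing reductions already established in the excerpt, together with the observation that the relevant hardness of Steiner forest holds already for polynomially bounded integer lengths. First I would look more closely at the reduction in the proof of Theorem~\ref{thm:hard}: starting from an SF instance with edge lengths~$\ell$, it produces a BMST instance in which $E^f$ is a \emph{spanning tree}, so that $(V,E^f)$ is both connected and cycle-free, i.e.\ the instance lies in $\IEfconn\cap\IEfforest$; moreover the leader's costs on $E^\ell=E$ are exactly~$\ell$, and the auxiliary constant $M=\sum_{e\in E}\ell(e)+1$ is polynomially bounded as soon as $\ell$ is integer and polynomially bounded. Since Steiner forest is strongly NP-hard~\cite{bern1989steiner}, the $\tfrac{96}{95}$-inapproximability of~\cite{chlebik08} already applies to SF instances with polynomially bounded integer edge lengths. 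Hence Theorem~\ref{thm:hard}, restricted to such SF inputs, yields BMST instances with $E^f$ a tree and with polynomially bounded integer leader's costs on $E^\ell$, for which approximating within a factor of~$\tfrac{96}{95}$ is NP-hard.

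Next I would feed these instances into Corollary~\ref{cor:uniform}, which reduces BMST with polynomially bounded integer leader's costs on $E^\ell$ to BMST with $c(e)=1$ for all $e\in E^\ell$ by a polynomial, approximation-preserving reduction, and which \emph{preserves} $\IEfconn$ and $\IEfforest$. Consequently the property ``$E^f$ is a tree'' is carried through. Composing the two reductions gives a single polynomial, approximation-preserving reduction from Steiner forest (with polynomially bounded integer lengths) to BMST restricted to instances with $E^f$ a tree and uniform leader's costs $c\equiv 1$ on $E^\ell$. An approximation algorithm for the latter with ratio strictly below~$\tfrac{96}{95}$ would therefore give such an algorithm for Steiner forest, contradicting~\cite{chlebik08} unless P$\,=\,$NP.

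The main point requiring care — and the one I would state explicitly rather than leave implicit — is that the $\tfrac{96}{95}$-hardness of Steiner forest persists under the restriction to polynomially bounded integer edge lengths; this is exactly what strong NP-hardness provides, and it is what keeps the path-blow-up in Corollary~\ref{cor:uniform} (which replaces each leader's edge~$e$ by a path of length~$c(e)$) polynomial in size. Beyond that, the argument is bookkeeping: verifying that each reduction in the chain is genuinely approximation-preserving and polynomial (already asserted in Section~\ref{sec:restrictions}) and that the structural tags $\IEfconn$ and $\IEfforest$, hence ``$E^f$ is a tree'', survive each step. No new combinatorial idea is needed.
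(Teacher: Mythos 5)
Your proposal is correct and follows essentially the same route as the paper: invoke Theorem~\ref{thm:hard} (whose reduction already yields $E^f$ a spanning tree and leader's costs equal to the SF lengths), note that the $\tfrac{96}{95}$-hardness of Steiner forest persists for polynomially bounded integer lengths by strong NP-hardness, and then apply Corollary~\ref{cor:uniform}, which preserves $\IEfconn$ and $\IEfforest$ and hence the tree structure of $E^f$. Your explicit attention to why the path blow-up stays polynomial is exactly the point the paper flags as well.
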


To conclude this section, we consider a related question which could
be asked in any bilevel optimization problem: can the leader enforce a
given follower's response? More formally, we consider the following
decision problem:

\medskip

(BMST-R) Given an instance of BMST and a set~$\bar Y\subseteq E^f$,
does there exist some leader's choice~$X\subseteq E^\ell$ such
that~$\bar Y$ is the follower's response to~$X$?

\medskip

For this problem to be well-defined, as for BMST itself, it
is essential to assume that the follower has a consistent strategy to
select a follower's response in case his optimum solution is not
unique. As discussed in the introduction, we ensure such a consistent
strategy by assuming that the follower chooses edges greedily
according to some deterministic order.

Apart from being an interesting structural question in its own right,
we will see in Section~\ref{sec:fpt} that BMST-R -- or more precisely,
the optimization version in which the cheapest solution~$X$
enforcing~$\bar Y$ is desired -- is related to the fixed-parameter
tractability of BMST in terms of~$|E^f|$. However, we will prove that
BMST-R, even in the decision version, is NP-complete. For this, we use
the so-called \emph{vertex-disjoint Steiner trees} problem:

\medskip

(VDST) Given a connected graph~$G=(V,E)$ and~$k$~disjoint
sets~$S_1,\dots,S_k\subseteq V$, do there exist vertex-disjoint
trees~$T_1,\dots,T_k\subseteq E$ in~$G$ such that $T_i$ spans $S_i$
for all~$i=1,\dots,k$?

\medskip

This problem is similar to the Steiner forest problem
defined previously, but not the same. The important difference is that
in the Steiner forest problem, no disjointness of the trees in the
solution is required, i.e., it is feasible to have several sets $S_i$
lying in the same connected component of the solution. Moreover, we
are considering the decision version of the vertex-disjoint Steiner
trees problem here, without any edge costs. Such a decision version of
Steiner forest would not be interesting because it is always feasible
to select a spanning tree.

The problem VDST is known to be NP-complete even for $k=2$ in
so-called two-layer routing graphs~\cite{korte1990}. We use this fact
to prove the following result:

\begin{theorem}\label{thm:hard:response}
  BMST-R is NP-complete, even if~$|\bar Y|=1$ and $E^f$ forms a path
  on a subset of the vertex set.
\end{theorem}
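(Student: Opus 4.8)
The plan is to reduce from VDST with $k=2$, which is NP-complete by~\cite{korte1990}. Given a VDST instance consisting of a connected graph $G=(V,E)$ and two disjoint terminal sets $S_1,S_2\subseteq V$, I would build a BMST instance together with a target response $\bar Y$ of size one such that the leader can enforce $\bar Y$ if and only if $G$ contains two vertex-disjoint trees spanning $S_1$ and $S_2$ respectively. The underlying idea is that the leader's chosen edges $X\subseteq E^\ell$ should, after contraction, force the follower to pick exactly the single edge in $\bar Y$; this will happen precisely when the leader's edges already connect each $S_i$ internally (using edges corresponding to $E$) while keeping the two ``sides'' separate, so that the follower needs one more edge -- the edge in $\bar Y$ -- to finish the spanning tree, and has no cheaper or preferred alternative.

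Concretely, I would take $E^\ell$ to contain a copy of every edge of $E$, plus some auxiliary gadget edges, and let $E^f$ consist of a small path (on a subset of the vertices, as required) whose unique ``useful'' edge is the designated $\bar Y=\{\bar e\}$. The key is to arrange the vertex set and the attachment of the two terminal groups so that any feasible leader's choice $X$ that leaves exactly the connectivity pattern needed to force $\bar e$ corresponds to a partition of the used $E$-edges into a forest spanning $S_1$ and a forest spanning $S_2$, with the two parts vertex-disjoint. Vertex-disjointness is the delicate point: in a spanning tree, if an edge used for the $S_1$-part and an edge used for the $S_2$-part shared a vertex, the two parts would be merged, and then the follower would no longer be forced to take $\bar e$ (or would be forced to take a different edge, violating the greedy/consistency choice). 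So I would design the gadget so that ``the follower's response is exactly $\bar Y$'' is equivalent to ``$X\cap E$ splits into two vertex-disjoint spanning subtrees of $S_1$ and $S_2$'', using the remaining vertices of $G$ as potential Steiner vertices that may be used by at most one side.

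The main steps, in order: (1) describe the construction $I'=(G',E^\ell,E^f,c,d)$ and the target $\bar Y$, making $E^f$ a path and $|\bar Y|=1$; (2) show NP membership -- given a candidate $X$, one runs the follower's greedy algorithm in polynomial time and checks whether the response equals $\bar Y$; (3) prove the forward direction: from vertex-disjoint Steiner trees $T_1,T_2$ in $G$, assemble a leader's choice $X$ (the $T_i$-edges plus whatever auxiliary leader's edges are needed) and verify, using the consistency of the follower's greedy order, that the follower's unique response is $\bar Y$; (4) prove the backward direction: from a leader's choice $X$ with follower's response $\bar Y$, read off the connected components of $(V,X)$ restricted to the relevant vertices, argue that exactly two of them cover $S_1$ and $S_2$ respectively and are vertex-disjoint because $\bar e$ is the only missing link, and extract spanning trees $T_1,T_2$ from them by dropping redundant edges.

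The hard part will be step~(4), and more specifically engineering the gadget so that ``forcing the single edge $\bar e$'' rules out \emph{all} degenerate leader's choices -- e.g.\ ones where $X$ connects everything already (so the follower takes nothing), or connects the two sides through a Steiner vertex (so no $\bar e$ is needed), or leaves three or more components (so the follower must take more than one edge, or a different edge). Pinning down the follower's order of preference on the path $E^f$ so that $\bar e$ is the \emph{only} edge that can ever appear in a size-one response, regardless of $X$, together with choosing the costs $c$ and $d$ on the gadget edges so that every non-vertex-disjoint configuration is detectably excluded, is where the real work lies; the rest is bookkeeping analogous to the reduction in Theorem~\ref{thm:hard}.
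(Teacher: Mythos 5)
Your plan is the paper's proof in outline: a reduction from VDST with $k=2$ in which the leader receives the original edges and the follower receives a path whose single ``expensive'' bridge edge is the target $\bar Y$, with vertex-disjointness extracted from the fact that a cycle-free $X$ connecting $S_1$ and $S_2$ internally but not to each other places them in distinct components. The only substantive issue is that you stop exactly where the proof has to be completed --- you declare that ``pinning down the follower's order of preference \dots together with choosing the costs \dots is where the real work lies'' and leave the gadget unspecified. That deferred work is in fact much lighter than you anticipate. The paper sets $E^\ell\coloneqq E$ with \emph{no} auxiliary leader edges (the leader's cost $c$ is irrelevant for BMST-R) and
\[
E^f\coloneqq\big\{\{s_i,s_{i+1}\}\mid i=1,\dots,r-1\big\}\cup\big\{\{s'_i,s'_{i+1}\}\mid i=1,\dots,r'-1\big\}\cup\big\{\{s_1,s'_1\}\big\},
\]
a single path through $S\cup S'$, with $d\equiv 0$ on the two sub-paths, $d(\{s_1,s'_1\})=1$, and $\bar Y=\big\{\{s_1,s'_1\}\big\}$. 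No delicate tie-breaking or cost engineering is needed, because the requirement that the response be \emph{exactly} $\bar Y$ already excludes every degenerate case on your list: if some $S_i$ is not internally connected by $X$, the follower grabs a preferred zero-cost edge, so the response is not $\bar Y$; if $X$ connects $S$ to $S'$ (through Steiner vertices or otherwise), the bridge would close a cycle and is never taken; if $X$ already spans, the response is empty. Conversely, a response equal to $\bar Y$ forces each $S_i$ to be connected inside the forest $X$ and the two to lie in different components, whence the two spanning subtrees are vertex-disjoint, exactly as you argued. So the approach is correct and coincides with the paper's; to be a proof rather than a proposal it only needs this explicit construction written down, after which the anticipated difficulty in your step~(4) disappears.
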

\begin{proof}
  BMST-R clearly belongs to NP. To show completeness, we reduce VDST
  for~$k=2$ to BMST-R. Given an instance of VDST consisting of a
  connected graph~$G=(V,E)$ and disjoint sets $S=\{s_1,\dots,s_{r}\}$
  and~$S'=\{s'_1,\dots,s'_{r'}\}$, we define an instance of BMST-R on
  $V$ by setting~$E^\ell\coloneqq E$ and
  $$E^f\coloneqq \big\{\{s_i,s_{i+1}\}\mid i=1,\dots,r-1\big\}\cup
  \big\{\{s'_i,s'_{i+1}\}\mid
  i=1,\dots,r'-1\big\}\cup\big\{\{s_1,s'_1\}\big\}\;,$$
  where~$d(\{s_1,s'_1\})\coloneqq 1$ and~$d(e)\coloneqq 0$ for all~$e\in
  E^f\setminus\big\{\{s_1,s'_1\}\big\}$. Let~$\bar
  Y\coloneqq \big\{\{s_1,s'_1\}\big\}$. The leader's cost function $c$ is
  irrelevant for the problem BMST-R.
  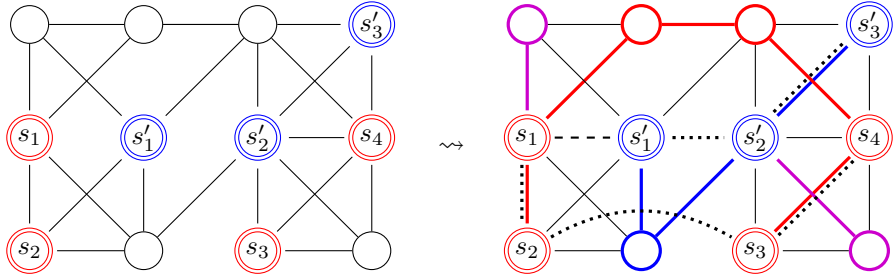
\begin{figure}
  \centering
  \small
  \begin{tikzpicture}[scale=0.75]
    \node[circle,minimum width = 0.5cm] (v1) at (0,0) {$s_2$};
    \node[red,draw,circle,minimum width = 0.6cm] at (0,0) {};
    \node[red,draw,circle,minimum width = 0.5cm] at (0,0) {};
    \node[circle,minimum width = 0.5cm] (v2) at (0,2) {$s_1$};
    \node[red,draw,circle,minimum width = 0.6cm] at (0,2) {};
    \node[red,draw,circle,minimum width = 0.5cm] at (0,2) {};
    \node[draw,circle,minimum width = 0.5cm] (v3) at (0,4) {};
    \node[draw,circle,minimum width = 0.5cm] (v4) at (2,0) {};
    \node[circle,minimum width = 0.5cm] (v5) at (2,2) {$s_1'$};
    \node[blue,draw,circle,minimum width = 0.6cm] at (2,2) {};
    \node[blue,draw,circle,minimum width = 0.5cm] at (2,2) {};
    \node[draw,circle,minimum width = 0.5cm] (v5b) at (2,4) {};
    \node[circle,minimum width = 0.5cm] (v6) at (4,0) {$s_3$};
    \node[red,draw,circle,minimum width = 0.6cm] at (4,0) {};
    \node[red,draw,circle,minimum width = 0.5cm] at (4,0) {};
    \node[circle,minimum width = 0.5cm] (v7) at (4,2) {$s_2'$};
    \node[blue,draw,circle,minimum width = 0.6cm] at (4,2) {};
    \node[blue,draw,circle,minimum width = 0.5cm] at (4,2) {};
    \node[draw,circle,minimum width = 0.5cm] (v8) at (4,4) {};
    \node[draw,circle,minimum width = 0.5cm] (v9) at (6,0) {};
    \node[circle,minimum width = 0.5cm] (v10) at (6,2) {$s_4$};
    \node[red,draw,circle,minimum width = 0.6cm] at (6,2) {};
    \node[red,draw,circle,minimum width = 0.5cm] at (6,2) {};
    \node[circle,minimum width = 0.5cm] (v11) at (6,4) {$s_3'$};
    \node[blue,draw,circle,minimum width = 0.6cm] at (6,4) {};
    \node[blue,draw,circle,minimum width = 0.5cm] at (6,4) {};

    \draw (v1) to (v2);
    \draw (v1) to (v4);
    \draw (v1) to (v5);
    \draw (v2) to (v3);
    \draw (v2) to (v4);
    \draw (v2) to (v5b);
    \draw (v3) to (v5);
    \draw (v3) to (v5b);
    \draw (v4) to (v5);
    \draw (v4) to (v7);
    \draw (v5) to (v8);
    \draw (v5b) to (v8);
    \draw (v5b) to (v8);
    \draw (v6) to (v7);
    \draw (v6) to (v9);
    \draw (v6) to (v10);
    \draw (v7) to (v8);
    \draw (v7) to (v9);
    \draw (v7) to (v10);
    \draw (v7) to (v11);
    \draw (v8) to (v10);
    \draw (v8) to (v11);
    \draw (v9) to (v10);
    \draw (v10) to (v11);
  \end{tikzpicture}
  \quad\raisebox{1.66cm}{$\leadsto$}~\quad
  \begin{tikzpicture}[scale=0.75]
    \definecolor{dpurple}{rgb}{0.8, 0.0, 0.8};
    \node[circle,minimum width = 0.5cm] (v1) at (0,0) {$s_2$};
    \node[red,draw,circle,minimum width = 0.6cm] at (0,0) {};
    \node[red,draw,circle,minimum width = 0.5cm] at (0,0) {};
    \node[circle,minimum width = 0.5cm] (v2) at (0,2) {$s_1$};
    \node[red,draw,circle,minimum width = 0.6cm] at (0,2) {};
    \node[red,draw,circle,minimum width = 0.5cm] at (0,2) {};
    \node[dpurple,draw,circle,minimum width = 0.5cm,very thick] (v3) at (0,4) {};
    \node[blue,draw,circle,minimum width = 0.5cm,very thick] (v4) at (2,0) {};
    \node[circle,minimum width = 0.5cm] (v5) at (2,2) {$s_1'$};
    \node[blue,draw,circle,minimum width = 0.6cm] at (2,2) {};
    \node[blue,draw,circle,minimum width = 0.5cm] at (2,2) {};
    \node[red,draw,circle,minimum width = 0.5cm,very thick] (v5b) at (2,4) {};
    \node[circle,minimum width = 0.5cm] (v6) at (4,0) {$s_3$};
    \node[red,draw,circle,minimum width = 0.6cm] at (4,0) {};
    \node[red,draw,circle,minimum width = 0.5cm] at (4,0) {};
    \node[circle,minimum width = 0.5cm] (v7) at (4,2) {$s_2'$};
    \node[blue,draw,circle,minimum width = 0.6cm] at (4,2) {};
    \node[blue,draw,circle,minimum width = 0.5cm] at (4,2) {};
    \node[red,draw,circle,minimum width = 0.5cm,very thick] (v8) at (4,4) {};
    \node[dpurple,draw,circle,minimum width = 0.5cm,very thick] (v9) at (6,0) {};
    \node[circle,minimum width = 0.5cm] (v10) at (6,2) {$s_4$};
    \node[red,draw,circle,minimum width = 0.6cm] at (6,2) {};
    \node[red,draw,circle,minimum width = 0.5cm] at (6,2) {};
    \node[circle,minimum width = 0.5cm] (v11) at (6,4) {$s_3'$};
    \node[blue,draw,circle,minimum width = 0.6cm] at (6,4) {};
    \node[blue,draw,circle,minimum width = 0.5cm] at (6,4) {};

    \draw[red,very thick] (v1) to (v2);
    \draw (v1) to (v4);
    \draw (v1) to (v5);
    \draw[dpurple,very thick] (v2) to (v3);
    \draw (v2) to (v4);
    \draw[red,very thick] (v2) to (v5b);
    \draw (v3) to (v5);
    \draw (v3) to (v5b);
    \draw[blue,very thick] (v4) to (v5);
    \draw[blue,very thick] (v4) to (v7);
    \draw (v5) to (v8);
    \draw[red,very thick] (v5b) to (v8);
    \draw (v6) to (v7);
    \draw (v6) to (v9);
    \draw[red,very thick] (v6) to (v10);
    \draw (v7) to (v8);
    \draw[dpurple,very thick] (v7) to (v9);
    \draw (v7) to (v10);
    \draw[blue,very thick] (v7) to (v11);
    \draw[red,very thick] (v8) to (v10);
    \draw (v8) to (v11);
    \draw (v9) to (v10);
    \draw (v10) to (v11);

    \draw[very thick,dotted,transform canvas={xshift=-2pt}] (v2) to (v1);
    \draw[very thick,dotted] (v1) to[bend left] (v6);
    \draw[very thick,dotted,transform canvas={xshift=1.4pt,yshift=-1.4pt}] (v6) to (v10);

    \draw[very thick,dotted] (v5) to (v7);
    \draw[very thick,dotted,transform canvas={xshift=-1.4pt,yshift=1.4pt}] (v7) to (v11);

    \draw[thick,dashed] (v2) to (v5);
  \end{tikzpicture}
  \caption{Illustration of the proof of
    Theorem~\ref{thm:hard:response}. The two terminal sets in the
    instance of VDST are marked by red and blue vertices,
    respectively. Dotted lines represent follower's edges of cost $0$,
    whereas the dashed line represents the follower's edge in~$\bar Y$
    having cost $1$. One feasible solution to VDST is marked by red
    and blue edges; a corresponding leader's solution consists of the
    red, blue, and purple edges.}
  \label{fig:bmst-r}
\end{figure}
An illustration of this construction is given in
Fig.~\ref{fig:bmst-r}.  We now show that the answer to this instance
of BMST-R is yes if and only if the answer to the given VDST instance
is yes.

Assume that $T, T'\subseteq E$ are vertex-disjoint trees such that~$T$
spans~$S$ and~$T'$ spans~$S'$. Since~$G$ is connected, we may assume
that~$T\cup T'$ covers all vertices of~$G$, by connecting all
non-covered vertices to either $T$ or $T'$ arbitrarily. We claim that
the leader's choice~$X\coloneqq T\cup T'$ forces the follower to
respond with~$\bar Y$. Indeed, the follower's preferred edges $e$ with
$d(e) = 0$ would all produce cycles, while~$\{s_1,s'_1\}$ needs to be
added to turn~$X$ into a spanning tree.

Now assume that there exists a leader's solution~$X$ forcing the
follower to respond with exactly the set~$\bar Y$. Since the latter
prefers edges from~$E^f\setminus \bar Y$, the leader must prevent him
from adding any of those, i.e., all vertices in~$S$ are connected
by~$X$ and the same is true for the vertices in~$S'$. On the other
hand, since the follower chooses~$\{s_1,s'_1\}$, the sets~$S$ and~$S'$
cannot be connected by~$X$. Hence~$X$ contains two vertex-disjoint
trees spanning~$S$ and~$S'$, respectively.  
\end{proof}
Note that, similar to the proof of~Theorem~\ref{thm:hard}, there is
some freedom in the construction of the follower's edge set $E^f$ in
this proof; see Remark~\ref{rem:treestructure}. Instead of the paths
given by $\big\{\{s_i,s_{i+1}\}\mid i=1,\dots,r_1\big\}$ and
$\big\{\{s'_i,s'_{i+1}\}\mid i=1,\dots,r'-1\big\}$, one could choose
any other graph structure spanning the vertices in $S$ and $S'$,
respectively. Therefore, Theorem~\ref{thm:hard:response} does not only
hold for sets~$E^f$ forming a path, but also for many other
topologies.

Using the same construction as in Lemma~\ref{lem:EfE}, one can show
that the result of Theorem~\ref{thm:hard:response} holds for instances
in~$\IEfE$ as well. Moreover, since the leader's costs are not
relevant in the problem BMST-R, Theorem~\ref{thm:hard:response}
trivially remains true for any specific choice of leader's costs, in
particular in the case of uniform leader's costs.

\section{Fixed-parameter tractability}
\label{sec:fpt}

It is easy to see that BMST is tractable when the number of edges
controlled by the leader is bounded. In fact, we have
\begin{theorem} \label{thm:fpt:leaderedges}
  BMST is fixed-parameter tractable in the number of edges controlled
  by the leader.
\end{theorem}
\begin{proof}
  If $k=|E^\ell|$, the leader can choose between at most~$2^k$
  different solutions. Computing the follower's response and the
  corresponding objective function value is possible in polynomial
  time.
\end{proof}

We now turn to the question whether BMST is fixed-parameter tractable
in the number of edges controlled by the \emph{follower}, which is
much more involved. In fact, we are not able to answer it in
general. However, we will show some results related to this question.
We start by considering the problem BMST-R introduced in the previous
section.  In the proof of Theorem~\ref{thm:hard:response}, a
connection between BMST-R and VDST was established in order to prove
NP-completeness. It turns out that this relation is also useful for
translating positive results from VDST to BMST-R. More precisely, the
fact that VDST is fixed-parameter tractable in the total
number~$\sum_{i=1}^k |S_i|$ of
terminals~\cite{robertson1990,robertson1995} can be used to prove the
fixed-parameter tractability of BMST-R in terms of~$|E^f|$.
\begin{theorem} \label{thm:fpt:response}
  BMST-R is fixed-parameter tractable in the number of edges
  controlled by the follower.
\end{theorem}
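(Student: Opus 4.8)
The plan is to reduce BMST-R to a bounded number of instances of VDST (more precisely, to a generalization of VDST where some terminals are already fixed to lie in a common tree), each of which has a bounded number of terminals, and then invoke the fixed-parameter tractability of VDST in the total number of terminals~\cite{robertson1990,robertson1995}. Throughout, let $m \coloneqq |E^f|$ be the parameter. First I would use the structural reductions from Section~\ref{sec:restrictions}: by Corollary~\ref{cor:Efforest} (which is valid for BMST-R as well, since it only restricts the follower's edge set and does not change the leader's feasible set), we may assume $E^f$ is a forest, so $E^f$ touches at most $2m$ vertices. Call these the \emph{relevant vertices}; all other vertices are irrelevant to the follower's combinatorial choices, since the follower's feasible set depends only on the connected components of $(V,X)$ restricted to how they partition the relevant vertices.

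Next I would analyze what it means for $\bar Y$ to be exactly the follower's greedy response to some $X\subseteq E^\ell$. Running the follower's greedy algorithm from the partition of relevant vertices induced by $X$, the edge $e_i\in E^f$ is taken iff its two endpoints are not yet connected when $e_i$ is considered. So $\bar Y$ being the response imposes, for each $i$: if $e_i\in\bar Y$, then the endpoints of $e_i$ must be separated by $X\cup\{e_1,\dots,e_{i-1}\}\cap(\text{already forced})$; and if $e_i\notin\bar Y$, then its endpoints must already be connected at that stage. Unwinding this, the condition ``$\bar Y$ is the follower's response to $X$'' is equivalent to a fixed system of constraints on the partition $\mathcal P_X$ of the relevant vertices induced by the connected components of $(V,X)$: certain pairs of relevant vertices must lie in the same block of $\mathcal P_X$ (``connection constraints'') and certain pairs must lie in different blocks (``separation constraints''). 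Since there are at most $2m$ relevant vertices, there are only boundedly many partitions of them, hence only boundedly many (in fact at most $2^{O(m\log m)}$) candidate ``target partitions'' $\mathcal P^\star$ consistent with the connection/separation constraints; and it suffices to decide, for each such $\mathcal P^\star$, whether there is a cycle-free $X\subseteq E^\ell$ whose induced partition of the relevant vertices refines $\mathcal P^\star$ exactly as required, i.e.\ realizes exactly the right merges and no forbidden merge.

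Realizing a target partition $\mathcal P^\star$ with blocks $B_1,\dots,B_t$ means: choosing a forest $X$ in $(V,E^\ell)$ such that each $B_j$ lies in one connected component of $(V,X)$ (spanning-trees-for-groups), while no two blocks that must stay separated end up in the same component. Ignoring the separation constraints for a moment, ``connect each group $B_j$ of terminals by a forest, groups pairwise vertex-disjoint'' is exactly a VDST instance with terminal sets $B_1,\dots,B_t$ and total number of terminals at most $2m$; VDST is FPT in this quantity. The separation constraints say precisely that the trees for the groups must be vertex-disjoint, which is already what VDST demands — so in the case where every pair of distinct blocks carries a separation constraint, the reduction to VDST is immediate. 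When only some pairs must be separated, one iterates over the boundedly many ways of partitioning $\{B_1,\dots,B_t\}$ into super-groups that are allowed to merge, and for each such coarsening runs VDST on the merged terminal sets; again the total terminal count stays $\le 2m$. Finally, one must check that no \emph{unforced} follower edge gets triggered, but this is subsumed: any $X$ realizing the prescribed partition of relevant vertices produces exactly $\bar Y$ as greedy response, by the equivalence established above. Enumerating $O(2^{O(m\log m)})$ candidate partitions and coarsenings, each handled by one FPT call to a VDST algorithm (in parameter $\le 2m$), and checking cycle-freeness and feasibility in polynomial time, gives an algorithm running in $f(m)\cdot\mathrm{poly}(|V|+|E|)$, which is the claim.

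The main obstacle I expect is making the equivalence ``$\bar Y$ is the greedy response to $X$'' $\iff$ ``$\mathcal P_X$ satisfies a fixed list of connection/separation constraints'' completely precise, in particular handling the order-dependence of the greedy algorithm: the constraint generated by edge $e_i$ refers to connectivity \emph{at step $i$}, which already incorporates the edges $e_1,\dots,e_{i-1}$ that were themselves forced into $\bar Y$. One has to argue that, because the edges of $\bar Y$ form a forest on the relevant vertices (being a subforest of $E^f$), the cumulative effect can be summarized purely by the final partition $\mathcal P_X$ together with $\bar Y$, with no genuine dependence on intermediate states — i.e.\ that ``$e_i$ forced in'' is equivalent to ``endpoints of $e_i$ separated by $\mathcal P_X$ but connected by $\mathcal P_X \vee \{e_i, e_{i+1},\dots\}$ restricted appropriately'', which after unwinding reduces to a statement only about $\mathcal P_X$ and the forest $\bar Y$. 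This is a bookkeeping argument analogous to, and relying on, Lemma~\ref{lem:Efforest}, but it needs to be spelled out carefully so that the resulting constraint list is genuinely fixed and of size bounded by a function of $m$ alone.
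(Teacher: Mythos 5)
Your proposal is essentially the paper's proof: enumerate all partitions of the at most $2|E^f|$ endpoints of follower's edges, realize each candidate partition by vertex-disjoint trees using the FPT algorithm for VDST, and exploit the fact that the follower's response depends only on which of these vertices the leader connects. The one place where you diverge---translating ``$\bar Y$ is the greedy response to $X$'' into an explicit system of connection/separation constraints on $\mathcal{P}_X$, which you yourself flag as the main difficulty---is an unnecessary detour: since the response is a deterministic, polynomial-time computable function of the induced partition, one can simply compute the follower's response to the constructed $X$ and check whether it equals $\bar Y$, which is exactly what the paper does, so no constraint characterization (and no extra coarsening enumeration) is needed.
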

\begin{proof}
  Consider an instance of BMST-R with graph $G = (V, E^\ell \cup
  E^f)$. Let~$V^f$ be the set of all end vertices of edges in
  $E^f$. The algorithm proceeds as follows: all partitions of~$V^f$
  into non-empty subsets are enumerated. For a given
  partition~$S_1,\dots,S_k$, the problem VDST on $(V, E^\ell)$ is
  solved by the algorithm given in~\cite{robertson1995}. Note that
  the graph $(V, E^\ell)$ does not have to be connected, but the definition of
  VDST and the algorithm can be used anyway. If the result is
  negative, the partition is discarded. Otherwise, let~$T_1,\dots,T_k$
  be a corresponding solution of VDST and extend the sets~$T_i$ such
  that~$X\coloneqq \bigcup_{i=1}^kT_i$ covers all vertices, while the $T_i$ must
  remain vertex-disjoint. This is possible, since we assume that~$G$
  is connected. Next, compute the follower's response~$Y'$ to~$X$. If
  it agrees with~$\bar Y$, stop and return ``yes'' and, if desired,
  the set~$X$. If the end of the enumeration is reached, return
  ``no''.

  The correctness of the algorithm immediately follows from the fact
  that the follower's response only depends on whether two vertices
  in~$V^f$ are connected by the leader or not, and all possible
  situations are enumerated. For the running time, note that the
  number and size of the enumerated partitions only depend on $|V^f|
  \leq 2 |E^f|$, but not on the size of the overall graph.
\end{proof}
The algorithm proposed in the proof of Theorem~\ref{thm:fpt:response}
can actually be used for enumerating all possible follower's
responses, along with one inducing leader's choice for each
response. Unfortunately, Theorem~\ref{thm:fpt:response} does not imply
that the problem of computing the \emph{best} leader's choice
enforcing a given response~$\bar Y$ is fixed-parameter tractable; see
the discussion below, so that we cannot derive that BMST itself is
fixed-parameter tractable in the number of edges controlled by the
follower. However, all leader's choices enforcing a given follower's
response consist of the same number of edges because every spanning
tree in the overall graph has the same number of edges. Hence, if the
leader has uniform costs on the edges in~$E^\ell$, this algorithm can
be used to solve BMST, leading to the following result:
\begin{corollary}\label{cor:fpt:uniform}
  BMST with $c(e) = \bar c$ for all $e \in E^\ell$, for some constant
  $\bar c \geq 0$, is fixed-parameter tractable in the number of edges
  controlled by the follower.
\end{corollary}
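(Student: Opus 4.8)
The plan is to leverage the algorithm from the proof of Theorem~\ref{thm:fpt:response} as a subroutine, but instead of stopping at the first leader's choice~$X$ that induces the queried response~$\bar Y$, I would run it in an enumeration mode: iterate over all partitions $S_1,\dots,S_k$ of~$V^f$ into non-empty subsets, for each feasible partition obtain via the algorithm of~\cite{robertson1995} a collection of vertex-disjoint trees $T_1,\dots,T_k$ in~$(V,E^\ell)$, extend them to cover all of~$V$ while staying vertex-disjoint, set $X \coloneqq \bigcup_i T_i$, and compute the follower's response~$Y$ to this~$X$. Each such pair $(X,Y)$ gives a candidate bilevel solution with leader's objective $c(X\cup Y)$; among all candidates generated over all partitions, return one of minimum objective value.

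First I would argue correctness. As observed after Theorem~\ref{thm:fpt:response}, the follower's response depends only on which pairs of vertices in~$V^f$ are connected by the leader's choice~$X$, i.e., only on the partition of~$V^f$ induced by the connected components of~$(V,X)$ (restricted to~$V^f$). Any feasible leader's choice~$X^*$ for BMST is in particular cycle-free in~$E^\ell$ and induces some partition of~$V^f$; when the enumeration reaches that partition, the subroutine produces \emph{some} vertex-disjoint forest realizing it, hence an~$X$ inducing the same partition as~$X^*$, hence the same follower's response. Since all leader's choices inducing a given follower's response extend to spanning trees with the same number of edges — every spanning tree of the overall graph has $|V|-1$ edges — and the leader's cost is the constant $\bar c$ on every edge of~$E^\ell$, we have $c(X) = \bar c\cdot|X| = \bar c\cdot|X^*|$, so $c(X\cup Y) = c(X^*\cup Y^*)$. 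Thus the candidate generated for~$X^*$'s partition is at least as good as the optimum, and conversely every candidate is a genuine feasible bilevel solution (the extended $T_i$ form a cycle-free subset of~$E^\ell$ and, since $(V,E^f)$ need not be connected we should invoke a preceding reduction, e.g. Lemma~\ref{lem:Efconn}, to ensure feasibility of every cycle-free leader's choice, or simply note that if the computed~$Y$ completes~$X$ to a spanning tree then~$X$ is feasible). Hence the minimum over all candidates equals the optimum of BMST.

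For the running time: the number of partitions of~$V^f$ and the time spent by the VDST subroutine in~\cite{robertson1995} depend only on $|V^f|\le 2|E^f|$, not on~$|V|$; extending the~$T_i$, computing the follower's greedy response, and evaluating $c(X\cup Y)$ are all polynomial in the input size. So the total running time is $f(|E^f|)\cdot\mathrm{poly}(|I|)$ for some computable function~$f$, which is exactly fixed-parameter tractability in~$|E^f|$.

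The main obstacle, which is really the only subtlety, is the claim that uniform leader's costs make the objective value of a leader's choice depend only on the induced partition of~$V^f$ (equivalently, only on the follower's response, since the response is then determined): one must be careful that different leader's choices realizing the same~$V^f$-partition all have the same cardinality. This follows because each such choice, together with the follower's response, forms a spanning tree with exactly $|V|-1$ edges and the follower's response is the same set~$Y$ in all cases, so $|X|=|V|-1-|Y|$ is fixed; with $c\equiv\bar c$ this pins down $c(X)$ and hence $c(X\cup Y)=\bar c(|V|-1)$ is in fact the same for \emph{every} feasible leader's choice — so minimizing is trivial once we know some feasible choice exists, and the real content is simply deciding feasibility, i.e., whether \emph{any} response can be enforced, which the enumeration settles. (If one instead wants the full generality of the corollary as stated, the same argument shows every feasible solution is optimal, so it suffices to output any feasible $(X,Y)$ found by the enumeration.)
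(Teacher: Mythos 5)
Your algorithm and the correctness argument in your first two paragraphs are essentially the paper's own proof: enumerate the partitions of~$V^f$, realize each feasible one by vertex-disjoint trees via the subroutine behind Theorem~\ref{thm:fpt:response}, compute the follower's response to the resulting~$X$, and return the cheapest candidate. The key observation you make correctly is that the response~$Y$ depends only on the induced partition, that $X\cup Y$ is always a spanning tree so $|X|=|V|-1-|Y|$ is determined by~$Y$, and hence with $c\equiv\bar c$ on~$E^\ell$ the objective value $c(X\cup Y)=\bar c(|V|-1-|Y|)+c(Y)$ depends only on the response; the candidate generated for the optimum's partition therefore attains the optimum value.

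Your closing paragraph, however, contains a genuine error, and acting on its suggested ``simplification'' would break the algorithm. The hypothesis is only that $c(e)=\bar c$ for $e\in E^\ell$; the leader's costs of the \emph{follower's} edges remain arbitrary. Consequently $c(X\cup Y)=\bar c\,|X|+c(Y)=\bar c(|V|-1-|Y|)+c(Y)$, which is in general different for different achievable responses~$Y$ (both $|Y|$ and $c(Y)$ vary), and it is not equal to $\bar c(|V|-1)$ --- that identity would require uniform leader's costs on all of~$E$. It is therefore false that every feasible leader's choice is optimal, and the problem does not collapse to a feasibility test: the minimization over all enumerated partitions, i.e., over all enforceable follower's responses, is essential. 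Keep the algorithm of your first two paragraphs and discard the parenthetical claim at the end.
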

However, different costs on the edges in $E^\ell$ cannot be handled
easily. In particular, we cannot use the reduction in
Lemma~\ref{lem:uniform} to make the costs uniform, since it increases
the size of~$E^f$ by $\sum_{e \in E^\ell} (c(e) - 1)$. The result only
carries over to instances where the latter sum is bounded by some
function in the original number of follower's edges. Unfortunately, we
are not able to answer the question whether
Corollary~\ref{cor:fpt:uniform} also holds for arbitrary weights, but
we conjecture that this is not the case.  In fact, there is some
evidence that BMST is not easy to solve even for a \emph{fixed} number
of edges controlled by the follower. To justify this conjecture, we
will establish a relation between BMST and the optimization version of
VDST, the \emph{shortest vertex-disjoint Steiner trees} problem:

\medskip

(SVDST) Given a connected graph~$G=(V,E)$ with edge
lengths~$\ell\colon E\rightarrow\mathbb{R}_{\geq 0}$ and~$k$~disjoint
sets~$S_1,\dots,S_k\subseteq V$, find vertex-disjoint trees $T_1,
\dots, T_k \subseteq E$ such that $T_i$ spans $S_i$ for~$i=1,\dots,k$,
minimizing their total length $\sum_{i = 1}^k \ell(T_i)$, or decide
that such trees do not exist.

\medskip

Given that already the decision problem VDST is a very difficult
problem, it can be expected that SVDST is very hard as well. In fact,
even for the special case in which each set $S_i$ consists of only two
vertices, which is called the \emph{shortest vertex-disjoint paths}
(SVDP) problem, there are a lot of open complexity questions. Considerable
research has been devoted to SVDP for~$k = 2$. Very recently, a
randomized polynomial-time algorithm for this case has been
developed~\cite{bjoerklund2019}. To the best of our knowledge, no
deterministic polynomial-time algorithm for $k = 2$ nor the complexity
of SVDP for any fixed $k \geq 3$ is known.  According to the next
result, presenting an efficient algorithm for BMST with a fixed number
$|E^f| = 2k$ of edges controlled by the follower would settle these
open questions for $k$, and even similar ones about the more
general problem SVDST. In particular, an efficient algorithm
for BMST with~$|E^f|=4$ would lead to an efficient algorithm for SVDP
with~$k=2$.
\begin{theorem} \label{thm:hard2}
  SVDST with fixed number $K \coloneqq  \sum_{i = 1}^k |S_i|$ can be
  polynomially reduced to BMST with~$K$ edges controlled by the
  follower.
\end{theorem}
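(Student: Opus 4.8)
The plan is to mirror the construction from Theorem~\ref{thm:hard:response}, which already encodes VDST as BMST-R, but now carry the edge lengths $\ell$ through to the leader's cost function so that minimizing the leader's objective forces the follower's response to be exactly the ``expensive'' edge $\bar Y$ and, subject to that, minimizes $\sum_i \ell(T_i)$. Concretely, given an SVDST instance $(G=(V,E),\ell,S_1,\dots,S_k)$, I would set $E^\ell \coloneqq E$ with $c(e)\coloneqq \ell(e)$, and build $E^f$ out of $K$ edges exactly as in Theorem~\ref{thm:hard:response}: for each terminal set $S_i=\{s^i_1,\dots,s^i_{r_i}\}$ put a path $\{s^i_j,s^i_{j+1}\}$ for $j=1,\dots,r_i-1$ with follower's cost $0$, and then link these $k$ paths together with $k-1$ additional ``bridge'' follower's edges $\{s^1_1,s^2_1\},\dots,\{s^1_1,s^k_1\}$ (say), each with follower's cost $1$ and leader's cost some large $M$; this uses $\sum_i (r_i-1) + (k-1) = K-1$ follower's edges. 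Wait — to make $|E^f|$ exactly $K$ one can add one further redundant follower's edge, or simply accept $|E^f|=K-1\le K$; I would phrase the theorem so that the bound $K$ is an upper bound on $|E^f|$, which suffices. The set $\bar Y$ analogous to before is the set of $k-1$ bridge edges, but here we are solving the optimization BMST, not BMST-R, so $\bar Y$ is not given as input — instead the large leader's cost $M$ on the bridge edges makes any optimal leader's solution avoid having the follower pick them only when possible, i.e., whenever the $S_i$ can be made vertex-disjointly connected; hmm, actually I want the opposite, so I would instead give the bridge edges leader's cost $0$ and argue via feasibility as below.

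Let me restate the reduction more carefully in the forward direction. I would keep the bridge edges at leader's cost $0$ and follower's cost $1$, and every other follower's edge at leader's and follower's cost $0$, exactly as in Theorem~\ref{thm:hard:response}. A leader's choice $X\subseteq E^\ell=E$ is feasible iff it is cycle-free and the follower can complete it to a spanning tree; given such $X$, the follower first greedily adds all the cost-$0$ follower's path-edges that do not close a cycle (so he connects within each $S_i$ as much as $X$ has not already connected), and then adds bridge edges as needed to finish the spanning tree. The key observation, just as in Theorem~\ref{thm:hard:response}, is that the follower is forced to use a bridge edge exactly when the current partial graph has the components containing different $S_i$'s still separated. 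So: if $X$ contains vertex-disjoint trees $T_1,\dots,T_k$ with $T_i$ spanning $S_i$ (extended to cover $V$, keeping vertex-disjointness, which is possible since $G$ is connected), then no cost-$0$ follower's path-edge can be added (they'd all close cycles), and the $k-1$ bridge edges are all forced, so the follower's response has follower-cost $k-1$ and the leader pays exactly $c(X)=\ell(X)=\sum_i\ell(T_i)$ (the bridge edges being free for the leader). Conversely, I'd argue an optimal leader's solution must have this structure, by a cost argument: relate it to SVDST-feasibility and show the cheapest $X$ corresponds to an optimal SVDST solution.

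The main obstacle — and the step I'd spend the most care on — is the converse direction: showing that an \emph{optimal} BMST solution $X^*$ projects back to a \emph{feasible and optimal} SVDST solution. One has to rule out that the leader benefits from a cheaper $X$ that does \emph{not} keep the $S_i$ vertex-disjointly connected (letting the follower merge some components via cost-$0$ path edges, or even letting him take a bridge edge for free). Since the bridge edges cost the leader nothing, there is no direct penalty forcing vertex-disjointness, so the argument must instead run: if SVDST is feasible with optimum value $\mathrm{OPT}$, then the leader has a solution of cost $\le \mathrm{OPT}$ (by the forward direction), and any cheaper leader's solution $X^*$ would still, after the follower's greedy completion, induce within its components a collection of subtrees that connect each $S_i$; the nontrivial point is that the parts of $X^*$ connecting the various $S_i$ can be \emph{disentangled} into vertex-disjoint trees of no greater total $\ell$-cost — essentially because the follower only ever merges components, never splits them, so any $S_i$ and $S_j$ that end up connected in $X^* \cup Y^*$ but not in $X^*$ were connected purely by bridge edges, whence in $X^*$ they live in separate components, giving the required vertex-disjoint $T_i$'s directly as the components of $X^*$ restricted to the relevant terminals. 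If SVDST is infeasible, no feasible leader's solution can avoid the follower ever being freed from using only cost-$0$ edges — I'd show every feasible $X$ forces the follower's response to merge two different $S_i$'s, so the leader's solutions still exist (BMST is feasible) but the correspondence degenerates; to handle this cleanly I would add a modest leader's cost, e.g. cost $1$, to the bridge edges and a matching additive normalization, so that ``the follower takes exactly $k-1$ bridge edges'' becomes the cheapest regime and is detectable from the optimal value, letting BMST's optimal value equal $\mathrm{OPT}+(k-1)$ when SVDST is feasible and strictly less otherwise. Working out this normalization (and confirming $|E^f|$ stays at most $K$, and that the reduction is polynomial and preserves the value exactly) is the bulk of the remaining work; everything else follows the template of Theorem~\ref{thm:hard:response}.
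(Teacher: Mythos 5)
There is a genuine gap, and it sits exactly where you flagged the ``main obstacle'': your construction has no mechanism that forces an optimal leader's solution to contain \emph{vertex-disjoint} trees spanning the $S_i$, and the disentangling argument you sketch cannot close this hole. Your argument only covers the case where $S_i$ and $S_j$ lie in \emph{different} components of $X^*$; but with all follower's edges (including the intra-terminal path edges) having leader's cost $0$, the leader has no incentive to connect any $S_i$ herself at all -- she can let the follower's cost-$0$ path edges do it for free -- and certainly no incentive to keep the trees for different $S_i$ vertex-disjoint when sharing vertices is cheaper. Splitting a connected subgraph that spans several terminal sets into vertex-disjoint trees of no greater total length is not possible in general; that is precisely the gap between Steiner forest and SVDST, so no local ``disentangling'' lemma can rescue the converse direction. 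Your proposed fixes (leader's cost $M$, then $0$, then $1$ plus a normalization on the bridge edges) all act on the wrong edges: penalizing or rewarding the \emph{bridge} edges only controls whether the leader wants the $S_i$ merged or separated in $X\cup Y$, not whether her own edge set decomposes into disjoint trees.

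The paper's proof uses two cost devices you are missing. First, the intra-terminal follower's edges get leader's cost $M|V|$, so any solution of value below the threshold forces the leader to connect each $S_i$ within $X$ herself. Second -- and this is the key idea -- \emph{every} leader's edge gets an additive surcharge of $M$ (i.e.\ $c(e)=\ell(e)+M$), together with an auxiliary vertex $s_0$ joined by cost-$M$ leader's edges to all non-terminals and $k$ bridge edges $\{s_{i-1},s_i\}$ of leader's cost $0$. Since the final spanning tree on $|V|+1$ vertices has $|V|$ edges and only the $k$ bridge edges are free, achieving cost below $M(|V|-k+1)$ forces the leader to use exactly $|V|-k$ edges, hence $(V\cup\{s_0\},X)$ has exactly $k+1$ components; combined with each $S_i$ being connected in $X$ and the follower being forced to take all $k$ bridge edges acyclically, each component contains exactly one of $s_0,\dots,s_k$, which \emph{is} vertex-disjointness. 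The surcharge cancels to $M(|V|-k)$, so optimal values correspond exactly. Without counting leader's edges via such a uniform surcharge (or an equivalent device), your reduction collapses to something no harder than a Steiner-forest-type problem, and the theorem does not follow.
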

\begin{proof}
  Given an instance of SVDST as defined above, we construct an
  instance of BMST as follows. We extend~$G = (V, E)$ by one
  vertex~$s_0$, i.e., we set~$V' \coloneqq  V \cup \{s_0\}$. The edges
  controlled by the leader are given by $E^\ell \coloneqq  E \cup E^\ell_0$,
  where
  $$E^\ell_0 \coloneqq  \big\{\{s_0, v\} \mid v \in V \setminus
  \textstyle\bigcup_{i = 1}^k S_i\big\}\;.$$
  For the follower's edges, we introduce
  an arbitrary spanning tree on each vertex set~$S_i$ and call the set
  of these edges $E^f_0$. Moreover, for each~$i=1,\dots,k$, we select
  a vertex~$s_i \in S_i$ arbitrarily and introduce a follower's
  edge~$\{s_{i-1},s_i\}$. Together with~$E^f_0$, these edges form
  the set $E^f$. The cost function for the leader is defined as
  \[
  c(e) \coloneqq 
  \begin{cases}
    \ell(e) + M,	&	\text{if } e \in E ,\\
    M,	&	\text{if } e \in E^\ell_0 ,\\
    M |V|, &        \text{if } e \in E^f_0 ,\\
    0, &	\text{if } e \in E^f \setminus E^f_0,
  \end{cases}
  \] 
  where~$M\coloneqq \sum_{e\in E}\ell(e)+1$. The cost function for the follower is given by
  \[
  d(e) \coloneqq 
  \begin{cases}
    0, &        \text{if } e \in E^f_0 ,\\
    1, &	\text{if } e \in E^f \setminus E^f_0.
  \end{cases}
  \] 
  Clearly, this construction is polynomial, with~$|E^f|= k + \sum_{i =
    1}^k (|S_i| - 1) = K$;
  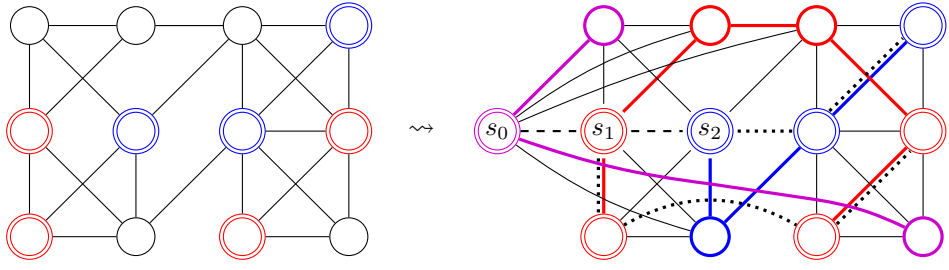
\begin{figure}
    \centering
    \small
    \begin{tikzpicture}[scale=0.7]
      \node[red,draw,circle,minimum width = 0.6cm] (v1) at (0,0) {};
      \node[red,draw,circle,minimum width = 0.5cm] at (0,0) {};
      \node[red,draw,circle,minimum width = 0.6cm] (v2) at (0,2) {};
      \node[red,draw,circle,minimum width = 0.5cm] at (0,2) {};
      \node[draw,circle,minimum width = 0.5cm] (v3) at (0,4) {};
      \node[draw,circle,minimum width = 0.5cm] (v4) at (2,0) {};
      \node[blue,draw,circle,minimum width = 0.6cm] (v5) at (2,2) {};
      \node[blue,draw,circle,minimum width = 0.5cm] at (2,2) {};
      \node[draw,circle,minimum width = 0.5cm] (v5b) at (2,4) {};
      \node[red,draw,circle,minimum width = 0.6cm] (v6) at (4,0) {};
      \node[red,draw,circle,minimum width = 0.5cm] at (4,0) {};
      \node[blue,draw,circle,minimum width = 0.6cm] (v7) at (4,2) {};
      \node[blue,draw,circle,minimum width = 0.5cm] at (4,2) {};
      \node[draw,circle,minimum width = 0.5cm] (v8) at (4,4) {};
      \node[draw,circle,minimum width = 0.5cm] (v9) at (6,0) {};
      \node[red,draw,circle,minimum width = 0.6cm] (v10) at (6,2) {};
      \node[red,draw,circle,minimum width = 0.5cm] at (6,2) {};
      \node[blue,draw,circle,minimum width = 0.6cm] (v11) at (6,4) {};
      \node[blue,draw,circle,minimum width = 0.5cm] at (6,4) {};

      \draw (v1) to (v2);
      \draw (v1) to (v4);
      \draw (v1) to (v5);
      \draw (v2) to (v3);
      \draw (v2) to (v4);
      \draw (v2) to (v5b);
      \draw (v3) to (v5);
      \draw (v3) to (v5b);
      \draw (v4) to (v5);
      \draw (v4) to (v7);
      \draw (v5) to (v8);
      \draw (v5b) to (v8);
      \draw (v5b) to (v8);
      \draw (v6) to (v7);
      \draw (v6) to (v9);
      \draw (v6) to (v10);
      \draw (v7) to (v8);
      \draw (v7) to (v9);
      \draw (v7) to (v10);
      \draw (v7) to (v11);
      \draw (v8) to (v10);
      \draw (v8) to (v11);
      \draw (v9) to (v10);
      \draw (v10) to (v11);
    \end{tikzpicture}
    \quad\raisebox{1.66cm}{$\leadsto$}~\quad
    \begin{tikzpicture}[scale=0.7]
      \definecolor{dpurple}{rgb}{0.8, 0.0, 0.8};

      \node[circle,minimum width = 0.5cm] at (-2,2) {$s_0$};
      \node[dpurple,draw,circle,minimum width = 0.6cm] (v0) at (-2,2) {};
      \node[dpurple,draw,circle,minimum width = 0.5cm] at (-2,2) {};

      \node[red,draw,circle,minimum width = 0.6cm] (v1) at (0,0) {};
      \node[red,draw,circle,minimum width = 0.5cm] at (0,0) {};
      \node[circle,minimum width = 0.5cm] (v2) at (0,2) {$s_1$};
      \node[red,draw,circle,minimum width = 0.6cm] at (0,2) {};
      \node[red,draw,circle,minimum width = 0.5cm] at (0,2) {};
      \node[dpurple,draw,circle,minimum width = 0.5cm,very thick] (v3) at (0,4) {};
      \node[blue,draw,circle,minimum width = 0.5cm,very thick] (v4) at (2,0) {};
      \node[circle,minimum width = 0.5cm] (v5) at (2,2) {$s_2$};
      \node[blue,draw,circle,minimum width = 0.6cm] at (2,2) {};
      \node[blue,draw,circle,minimum width = 0.5cm] at (2,2) {};
      \node[red,draw,circle,minimum width = 0.5cm,very thick] (v5b) at (2,4) {};
      \node[red,draw,circle,minimum width = 0.6cm] (v6) at (4,0) {};
      \node[red,draw,circle,minimum width = 0.5cm] at (4,0) {};
      \node[blue,draw,circle,minimum width = 0.6cm] (v7) at (4,2) {};
      \node[blue,draw,circle,minimum width = 0.5cm] at (4,2) {};
      \node[red,draw,circle,minimum width = 0.5cm,very thick] (v8) at (4,4) {};
      \node[dpurple,draw,circle,minimum width = 0.5cm,very thick] (v9) at (6,0) {};
      \node[red,draw,circle,minimum width = 0.6cm] (v10) at (6,2) {};
      \node[red,draw,circle,minimum width = 0.5cm] at (6,2) {};
      \node[blue,draw,circle,minimum width = 0.6cm] (v11) at (6,4) {};
      \node[blue,draw,circle,minimum width = 0.5cm] at (6,4) {};

      \draw[red,very thick] (v1) to (v2);
      \draw (v1) to (v4);
      \draw (v1) to (v5);
      \draw (v2) to (v3);
      \draw (v2) to (v4);
      \draw[red,very thick] (v2) to (v5b);
      \draw (v3) to (v5);
      \draw (v3) to (v5b);
      \draw[blue,very thick] (v4) to (v5);
      \draw[blue,very thick] (v4) to (v7);
      \draw (v5) to (v8);
      \draw[red,very thick] (v5b) to (v8);
      \draw (v6) to (v7);
      \draw (v6) to (v9);
      \draw[red,very thick] (v6) to (v10);
      \draw (v7) to (v8);
      \draw (v7) to (v9);
      \draw (v7) to (v10);
      \draw[blue,very thick] (v7) to (v11);
      \draw[red,very thick] (v8) to (v10);
      \draw (v8) to (v11);
      \draw (v9) to (v10);
      \draw (v10) to (v11);

      \draw[dpurple,very thick] (v0) to (v3);
      \draw (v0) to[bend right=10] (v4);
      \draw (v0) to[bend left=10] (v5b);
      \draw (v0) to[bend left=5]  (v8);
      \node (x1) at (0,1) {};
      \draw[dpurple,very thick] (v0) to[bend right=6] (2.0,1.0) to (4.0,0.7) to[bend left=8] (v9);
      
      \draw[very thick,dotted,transform canvas={xshift=-2pt}] (v2) to (v1);
      \draw[very thick,dotted] (v1) to[bend left] (v6);
      \draw[very thick,dotted,transform canvas={xshift=1.4pt,yshift=-1.4pt}] (v6) to (v10);

      \draw[very thick,dotted] (v5) to (v7);
      \draw[very thick,dotted,transform canvas={xshift=-1.4pt,yshift=1.4pt}] (v7) to (v11);

      \draw[thick,dashed] (v0) to (v2);
      \draw[thick,dashed] (v2) to (v5);
    \end{tikzpicture}
    \caption{Illustration of the proof of Theorem~\ref{thm:hard2} for
      $k=2$ and $K=7$. The two terminal sets~$S_1$ and~$S_2$ in the
      instance of SVDST are marked by red and blue vertices,
      respectively. Dotted lines represent follower's edges of cost $0$,
      whereas the dashed lines represent the follower's edges having follower's
      cost $1$. An optimum leader's solution is given by the colored
      edges, where red and blue edges correspond to an optimum solution
      of the original instance of SVDST and purple edges connect all
      vertices not covered by the latter with the auxiliary
      vertex~$s_0$.}
    \label{fig:red_svdst}
  \end{figure}
  an illustration is given in
  Fig.~\ref{fig:red_svdst}. We claim that the given instance of SVDST
  is feasible if and only if the optimum value of the constructed
  BMST instance is smaller than~$M(|V|-k+1)$, and that in this case
  the optimum values differ by exactly~$M(|V|-k)$.

  So first assume that vertex-disjoint trees $T_i$ spanning $S_i$ for
  $i=1,\dots,k$, exist. Then consider the leader's choice~$X$
  consisting of all edges contained in any of the trees~$T_i$ and, for
  each vertex~$v\in V$ not belonging to any tree, the
  edge~$\{s_0,v\}$. We have $|X| = |V| - k$ because $X$ forms a forest
  with $k+1$ connected components on~$|V|+1$ vertices. The follower's
  response to~$X$ is~$Y \coloneqq  \big\{\{s_{i-1},s_i\} \mid
  i=1,\dots,k\big\}$ with $c(Y) = 0$.  In summary, the objective
  value of~$X$ is
  $$c(X) + c(Y) = \sum_{i=1}^k\ell(T_i)+M(|V|-k) + 0 < M(|V|-k+1)\;.$$
  
  For the other direction, consider any feasible leader's choice~$X$
  in the constructed instance of BMST and assume that it has an
  objective value less than~$M(|V|-k+1)$. Then for all $i=1,\dots,k$,
  all vertices in $S_i$ must be connected in~$X$, as otherwise the
  follower would choose an edge with leader's cost~$M|V|\ge
  M(|V|-k+1)$. Moreover, since each leader's edge costs at least~$M$
  and the final tree must have~$|V|$ edges, the only way to achieve a
  weight less than~$M(|V|-k+1)$ is to take exactly $|V| - k$ edges and
  make the follower choose all edges~$\{s_{i-1},s_i\}$
  for~$i=1,\dots,k$. It follows that~$X$ has~$k+1$ components
  containing exactly one of the vertices~$s_0,\dots,s_k$
  each. Thus,~$X$ contains disjoint trees~$T_i$ spanning $S_i$ with total
  weight
  $$\sum_{i=1}^k \ell(T_i) = \sum_{e\in X}(c(e)-M)=\sum_{e\in X}c(e)-M(|V|-k)\;.$$
  This concludes the proof.
\end{proof}
As in Theorem~\ref{thm:hard} and
Theorem~\ref{thm:hard:response}, also other topologies of the
follower's edges are possible; see Remark~\ref{rem:treestructure}.  We
emphasize that Theorem~\ref{thm:hard2} gives a second proof for the
NP-hardness of BMST, if we do not bound the number of edges controlled
by the follower. In particular, it shows that BMST is at least as hard
as SVDST. However, the reduction used in the proof of
Theorem~\ref{thm:hard2} is not approximation-preserving, so that the
negative result of Theorem~\ref{thm:hard} concerning approximability
does not follow from Theorem~\ref{thm:hard2}.

While Theorem~\ref{thm:hard2} makes it unlikely that BMST is
fixed-parameter tractable in the number of follower's edges, we will
show next that at least approximating BMST within a factor of $2$ is
fixed-parameter tractable in the same parameter. As a first step, we
show that a similar result holds for a variant of SF defined as
follows:

\medskip

(SF+) Given a connected graph $G = (V, E)$ with edge
lengths~$\ell\colon E\rightarrow \mathbb{R}_{\geq 0}$ and~$k$~disjoint
sets $S_1, \dots, S_k \subseteq V$, find a forest $F\subseteq E$ of
minimum total length~$\ell(F)$, such that each terminal set $S_i$ is
connected in the graph $(V, F)$ and every vertex in~$V \setminus
\bigcup_{i = 1}^k S_i$ is connected to one of the sets $S_i$.

\medskip

The difference from the usual Steiner forest problem is hence that
in addition to connecting each terminal set~$S_i$, all non-terminals
need to be connected to one of the terminal sets.

\begin{theorem} \label{thm:fpt:2apx:SF+}
  The problem of approximating SF+ within a factor of 2 is fixed-parameter
  tractable in the total number of terminals.
\end{theorem}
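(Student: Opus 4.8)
The plan is to reduce SF+ to the ordinary Steiner forest problem (SF), for which a constant-factor (indeed factor-$2$) approximation is known, in a way that is fixed-parameter tractable in the number of terminals $K = \sum_{i=1}^k |S_i|$. The obstacle compared to plain SF is the extra requirement that every non-terminal vertex be attached to \emph{some} terminal set; since the assignment of non-terminals to terminal components is not fixed, we cannot simply merge the sets $S_i$. The key observation is that in any optimal SF+ solution, the connected components partition the non-terminals among the terminal sets, so it suffices to \emph{guess} which terminal sets end up sharing a component — i.e.\ guess a partition $\mathcal{P}$ of $\{1,\dots,k\}$ — and for each such guess solve an instance of (plain) SF whose terminal sets force exactly that grouping, using the $2$-approximation of~\cite{jain2001factor}. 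The number of partitions of $\{1,\dots,k\}$ depends only on $k \le K$, so enumerating them preserves fixed-parameter tractability.

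First I would describe, for a fixed partition $\mathcal{P} = \{B_1, \dots, B_t\}$ of $\{1,\dots,k\}$, the auxiliary SF instance: keep $G$ and $\ell$, and for each block $B_j$ introduce the single terminal set $\widehat S_j := \bigcup_{i \in B_j} S_i \cup (V \setminus \bigcup_{i=1}^k S_i)$; that is, all non-terminals are added to \emph{every} $\widehat S_j$. A forest $F$ is feasible for this SF instance iff each $\widehat S_j$ is connected in $(V,F)$, which forces every non-terminal to lie in a common component with the $S_i$ for $i \in B_j$, hence attached to a terminal set, and forces all $S_i$ with $i$ in the same block to be co-connected. Thus feasible solutions of this SF instance are exactly the SF+ solutions whose component structure refines $\mathcal{P}$ in the appropriate sense, and an optimal SF+ solution arises for the partition $\mathcal{P}^\star$ induced by its own components. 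Running the SF $2$-approximation on each of the $\le B_k$ (Bell number) instances and returning the cheapest output then gives a $2$-approximation for SF+.

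The main step to get right is the correctness of this correspondence: I must check (i) that for the partition $\mathcal{P}^\star$ induced by an optimal SF+ solution $F^\star$, the set $F^\star$ is feasible for the corresponding SF instance, so $\mathrm{OPT}_{\mathrm{SF},\mathcal{P}^\star} \le \ell(F^\star) = \mathrm{OPT}_{\mathrm{SF+}}$; and (ii) that \emph{any} feasible solution of \emph{any} of the auxiliary SF instances is feasible for SF+, so that the output of each run has cost at most $2\,\mathrm{OPT}_{\mathrm{SF},\mathcal{P}} $ and is a genuine SF+ solution — hence the best output has cost at most $2\,\mathrm{OPT}_{\mathrm{SF+}}$. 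Part (ii) holds because connectivity of $\widehat S_j$ already implies both SF+ conditions; some care is needed only to note that an optimal SF solution can be taken to be a forest (drop edges on cycles), which does not increase the cost and keeps all required connectivities. Finally I would observe that the running time is a function of $K$ (the number of enumerated partitions) times the polynomial running time of the SF $2$-approximation, establishing fixed-parameter tractability in the total number of terminals.
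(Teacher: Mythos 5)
There is a genuine gap in your reduction: adding \emph{all} non-terminals to \emph{every} block's terminal set $\widehat S_j$ over-constrains the auxiliary SF instances. If there is at least one non-terminal $v$ and the partition $\mathcal{P}$ has $t\ge 2$ blocks, then $v\in\widehat S_1\cap\widehat S_2$ forces $\widehat S_1$ and $\widehat S_2$ into a common connected component, so every feasible solution of every auxiliary instance connects all terminals and all non-terminals into a single component; its optimum is essentially the cost of a minimum spanning tree of $G$. In particular, your step~(i) fails: the optimal SF+ forest $F^\star$ is in general \emph{not} feasible for the auxiliary instance built from its own induced partition $\mathcal{P}^\star$, because a non-terminal attached to the component of block $B_1$ need not be connected to the sets $S_i$ with $i\in B_2$. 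Concretely, take two far-apart terminal pairs, each with a nearby non-terminal: SF+ has a cheap two-component solution, while every auxiliary instance you construct demands a global connection, so the approximation ratio of your algorithm is unbounded. The underlying difficulty is that the assignment of non-terminals to terminal components can neither be guessed (their number is not bounded by the parameter $K$) nor encoded as a plain Steiner forest constraint, which expresses only ``connect this fixed set,'' not ``connect $v$ to at least one of these sets.''

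The paper's proof avoids this by splitting the two requirements of SF+ and bounding each part separately against the SF+ optimum: it first computes an \emph{exact} optimal Steiner forest $F$ (this is fixed-parameter tractable in $K$ by enumerating partitions of $\{S_1,\dots,S_k\}$ and running the Dreyfus--Wagner algorithm on each coarsened terminal set), then contracts all terminals together with the non-terminals already reached by $F$ into a single vertex and adds a minimum spanning tree $T$ of the contracted graph. Since any SF+ solution contains both a Steiner forest (of cost at least $\ell(F)$) and a spanning tree of that contracted graph (of cost at least $\ell(T)$), the returned set $F\cup T$ costs at most twice the optimum. Note also that the exactness of the SF subroutine matters: even if your partition enumeration could be repaired, composing it with Jain's factor-$2$ SF approximation would stack an additional loss on top of whatever factor the non-terminal attachment costs, whereas in the paper the only factor of $2$ comes from adding the spanning tree.
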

\begin{proof}
  We use the fact that the Steiner forest problem is
  fixed-parameter tractable, which can be seen as follows: for the
  classical Steiner tree problem, an exact algorithm with running time
  $O(3^{|S|}|V|)$, where~$S$ is the set of terminals, is
  well-known~\cite{dreyfus71}. This can be extended to the Steiner
  forest problem in the following way: enumerate all partitions of the
  set $\{S_1, \dots, S_k\}$ of terminal sets, each resulting in a
  coarser partition~$S_1', \dots, S_r'$ of the set $\bigcup_{i = 1}^k
  S_i$ of all terminals. Now solve the Steiner tree problem for each
  terminal set $S_i'$ and merge the resulting~$r$ edge sets in order to obtain a
  feasible solution of the Steiner forest problem. Obviously, the best
  solution obtained in this way is optimum.
  
  Now we compute a solution to the problem SF+ in the following way:
  first, compute an optimum solution $F$ of the corresponding Steiner
  forest problem, for example using the algorithm described
  above. Second, merge the set $\bigcup_{i = 1}^k S_i$ of all
  terminals, together with all non-terminals that are connected to a
  terminal by edges in $F$, into a single new vertex. Now compute a
  minimum spanning tree~$T$ in the resulting graph and return the set
  $F \cup T$ as a solution to the given instance of SF+.

  Clearly, the solution is feasible for SF+ and the running time is
  the same as the one of the applied Steiner forest algorithm because
  the running time for the computation of a minimum spanning tree is
  negligible. It remains to show that it is a 2-approximation. For
  this, observe that both $F$ and $T$ have at most the cost of an
  optimum solution to SF+, since every such solution must contain a
  Steiner forest having at least the cost of $F$, as well as a
  spanning tree in the graph in which $T$ is a minimum spanning tree.
\end{proof}
Theorem~\ref{thm:fpt:2apx:SF+} now allows us to show the desired result about BMST:
\begin{theorem}
\label{thm:fpt:2apx}
The problem of approximating BMST within a factor of 2 is fixed-para\-meter
tractable in the number of edges controlled by the follower.
\end{theorem}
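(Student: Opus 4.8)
The plan is to reduce the problem, for each fixed ``connectivity pattern'' among the endpoints of the follower's edges, to the problem SF+, and to invoke Theorem~\ref{thm:fpt:2apx:SF+}. Let $V^f$ be the set of end vertices of edges in $E^f$, so that $|V^f|\le 2|E^f|$. The starting observation is that, for any \emph{feasible} leader's choice $X$, the follower's response depends only on the partition $\mathcal P$ of $V^f$ into the classes that are connected by $(V,X)$: the follower's greedy algorithm only ever tests whether the two endpoints of a follower's edge -- both necessarily in $V^f$ -- are already connected in the current graph. Write $Y(\mathcal P)$ for this response; it is well defined once we fix the follower's order of preference. There are at most $B_{2|E^f|}$ partitions of $V^f$ (the Bell number), which depends only on the parameter, so we can afford to enumerate all of them.

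For each partition $\mathcal P=\{P_1,\dots,P_t\}$ of $V^f$ the algorithm runs the SF+ procedure behind Theorem~\ref{thm:fpt:2apx:SF+} on the graph $(V,E^\ell)$ with lengths $\ell\coloneqq c$ and terminal sets $S_i\coloneqq P_i$: it computes a minimum Steiner forest connecting each $P_i$ via the fixed-parameter Steiner-forest algorithm, then contracts the terminals together with all non-terminals connected to them and adds a minimum spanning tree of the resulting graph, letting $F$ be the union. If some $P_i$ cannot be connected in $(V,E^\ell)$ or some vertex stays unreachable from the terminals, the partition is discarded. Otherwise $F\subseteq E^\ell$ is a cycle-free edge set in which every $P_i$ is connected and every vertex lies in a component meeting $V^f$; consequently the partition $\mathcal P_F$ that $F$ induces on $V^f$ is a coarsening of $\mathcal P$, contracting the components of $(V,F)$ turns $E^f$ into a connected graph (so $F$ is a feasible leader's choice), and we compute the follower's response $Y$ to $F$ and record $c(F)+c(Y)$. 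Finally the algorithm outputs the smallest recorded value together with the associated leader's choice.

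For correctness, let $X^\ast$ be an optimal leader's choice and let $\mathcal P^\ast$ be the partition it induces on $V^f$; this partition is enumerated and not discarded, since $X^\ast$ itself is a feasible SF+ solution of cost $c(X^\ast)$ (it connects every block and, by BMST-feasibility, every component of $(V,X^\ast)$ meets $V^f$), so the subroutine returns $F$ with $c(F)\le 2\,c(X^\ast)$. The crucial step is now that $F$ connects every block $P_i^\ast$, hence $\mathcal P_F$ is a coarsening of $\mathcal P^\ast$, and by an argument analogous to the proof of Lemma~\ref{lem:Efforest} -- whose induction uses only the connectivity pattern on $V^f$ -- a coarser partition yields a smaller follower's response, so $Y(\mathcal P_F)\subseteq Y(\mathcal P^\ast)$ and therefore $c(Y(\mathcal P_F))\le c(Y(\mathcal P^\ast))$ because $c\ge 0$. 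Thus the value recorded for $\mathcal P^\ast$ is $c(F)+c(Y(\mathcal P_F))\le 2\,c(X^\ast)+c(Y(\mathcal P^\ast))\le 2\bigl(c(X^\ast)+c(Y(\mathcal P^\ast))\bigr)$, which is twice the optimum, while every recorded value is the true cost of a feasible leader's choice and hence at least the optimum. For the running time, the number of enumerated partitions depends only on $|V^f|\le 2|E^f|$, the SF+ subroutine is fixed-parameter tractable in the number $|V^f|$ of its terminals, and computing the follower's responses is polynomial, so the whole algorithm is fixed-parameter tractable in $|E^f|$.

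The step I expect to be the main obstacle is precisely this monotonicity argument: the SF+ solution $F$ may merge strictly more vertices of $V^f$ than $X^\ast$ does, and one must argue that this cannot hurt the leader. This works only because further merging can only shrink the follower's response -- the partition-version of Lemma~\ref{lem:Efforest} -- and leader's costs are non-negative, so that the follower's ``reaction'' to the over-connection of $F$ is at most as costly for the leader as it was under $X^\ast$; without this, a $2$-approximation of the auxiliary objective would not translate into a $2$-approximation of BMST. A secondary point that needs care is feasibility: one has to check that $E^f$ actually reconnects the graph obtained by contracting the components of $F$, which is ensured by discarding those partitions that are not realizable by any leader's choice in the first place.
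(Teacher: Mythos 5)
Your proposal is correct and follows essentially the same route as the paper's proof: enumerate the partitions of $V^f$, solve SF+ via the FPT $2$-approximation of Theorem~\ref{thm:fpt:2apx:SF+} for each partition, and use the monotonicity of the follower's response (Lemma~\ref{lem:Efforest}, applied to coarser connectivity patterns on $V^f$) together with $c\ge 0$ to absorb the over-connection into the factor $2$. The only cosmetic difference is that the paper first invokes Lemma~\ref{lem:Elconn} to make $(V,E^\ell)$ connected rather than discarding unrealizable partitions, and it checks feasibility by testing whether the follower has a feasible response to each candidate.
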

\begin{proof}
  We may assume that~$(V,E^\ell)$ is connected, since the construction
  according to Lemma~\ref{lem:Elconn} does not increase the number of
  follower's edges. The idea of the algorithm is to enumerate all
  possible follower's solutions and apply
  Theorem~\ref{thm:fpt:2apx:SF+} to each of them. More precisely, as
  in the proof of Theorem~\ref{thm:fpt:response}, we consider the
  set~$V^f \subseteq V$ of vertices incident to a follower's edge and
  enumerate all possible partitions of $V^f$ into non-empty disjoint
  sets. For a fixed partition $S_1, \dots, S_k$, we solve SF+ on the
  leader's graph~$(V, E^\ell)$ with edge lengths defined by the
  leader's cost function~$c$, using the algorithm given in
  Theorem~\ref{thm:fpt:2apx:SF+}, and obtain some forest~$X\subseteq
  E^\ell$. Next, we compute the follower's response to $X$ and, if
  there is a feasible response $Y$, store it together with $X$ as a
  candidate for our final solution. Finally, we return the candidate
  solution minimizing the total weight $c(X) + c(Y)$.

  Clearly, the running time of this algorithm is as desired. Moreover,
  it computes a feasible solution to BMST if there is one.  It remains
  to prove that in this case the algorithm always computes a
  2-approximate solution. For this, let $X^*$ be an optimum leader's
  solution of the given BMST instance, together with the follower's
  response~$Y^*$, and let $S^*_1, \dots, S^*_k$ be the partition of
  $V^f$ corresponding to the connected components of $(V, X^*)$. Then
  $X^*$ is a (not necessarily optimum) solution for SF+ corresponding
  to this partition.  Let $X$ be the solution for SF+ computed by the
  algorithm presented above when considering this partition. Since we
  use a 2-approximation algorithm for SF+, we have $c(X) \leq
  2c(X^*)$. Moreover, the partition of~$V^f$ induced by~$X$ is
  either~$S^*_1, \dots, S^*_k$ or a coarser one, which implies that
  the follower's response~$Y$ to~$X$ is a subset of~$Y^*$ by
  Lemma~\ref{lem:Efforest}, based on the deterministic behavior of the
  follower.  Altogether, we now obtain
  $$c(X) + c(Y)
  \leq 2c(X^*)+c(Y^*)\leq 2(c(X^*) + c(Y^*))\;,$$
  which shows the desired result.
\end{proof}

\section{Approximation algorithm for BMST}
\label{sec:algo}

The previous section showed that already questions about
fixed-parameter tractability of BMST and related problems can be hard
to answer. In this section, we present a polynomial-time
$(|V|-1)$-approximation algorithm for BMST.
\begin{theorem}\label{thm:apx}
  BMST admits a polynomial-time $(|V|-1)$-approximation algorithm.
\end{theorem}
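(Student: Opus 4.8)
The plan is to reduce to a \emph{bottleneck} version of the problem for the leader: even though she has a sum objective, it suffices to find a feasible leader's choice~$X$ whose resulting spanning tree~$X\cup Y$ has a cheap \emph{most expensive} edge. Write $n=|V|$, assume by Lemma~\ref{lem:Efconn} that $(V,E^f)$ is connected (so the instance is feasible), and let $X^*,Y^*$ be an optimum solution. Then $X^*\cup Y^*$ is a spanning tree with exactly $n-1$ edges whose $c$-costs are nonnegative and sum to the optimum value $OPT$; hence its costliest edge has $c$-cost $b^*:=\max_{e\in X^*\cup Y^*}c(e)\le OPT$. So if I can compute in polynomial time \emph{some} feasible leader's choice~$X$ whose follower's response~$Y$ satisfies $\max_{e\in X\cup Y}c(e)\le b^*$, I am done, since then $c(X)+c(Y)=\sum_{e\in X\cup Y}c(e)\le (n-1)\,b^*\le (n-1)\,OPT$.

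To this end I would enumerate thresholds. For $t\in\{c(e)\mid e\in E\}$, let $E_{\le t}:=\{e\in E\mid c(e)\le t\}$ and let $X_t$ be a maximal forest of the graph $(V,E^\ell\cap E_{\le t})$; call $t$ \emph{admissible} if $X_t$ is a feasible leader's choice and the follower's response $Y_t$ to $X_t$ uses only edges of $c$-cost at most~$t$. The algorithm computes, for every admissible~$t$, the value $c(X_t)+c(Y_t)$, and returns the cheapest solution found. Computing~$X_t$, testing feasibility (a connectivity check), running the follower's greedy algorithm, and checking admissibility all take polynomial time, and there are at most~$|E|$ thresholds, so the algorithm is polynomial.

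It then remains to show that $t=b^*$ is admissible; this gives a returned value of at most $(n-1)\,b^*\le (n-1)\,OPT$. Since every edge of $X^*\cup Y^*$ has $c$-cost at most~$b^*$, we have $X^*\subseteq E^\ell\cap E_{\le b^*}$, so the partition of~$V$ into connected components of $(V,X^*)$ refines that of $(V,X_{b^*})$. As the follower's response depends only on this partition, and since we can extend~$X^*$, within each connected component of $(V,E^\ell\cap E_{\le b^*})$, to a spanning tree of that component, obtaining a forest $X'\supseteq X^*$ that induces the same partition as~$X_{b^*}$, Lemma~\ref{lem:Efforest} applied to $X^*\subseteq X'$ yields $Y_{b^*}=Y(X')\subseteq Y(X^*)=Y^*\subseteq E_{\le b^*}$. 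Feasibility of~$X_{b^*}$ follows analogously: $(V,X^*\cup E^f)$ is connected and $X_{b^*}$ induces a coarser partition of~$V$, hence $(V,X_{b^*}\cup E^f)$ is connected too.

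The step I expect to be the main obstacle is exactly this last one: once the admissibility test is passed for a threshold $t\le OPT$, the follower must not be forcible into an edge more expensive than~$t$ — equivalently, replacing an arbitrary feasible leader's choice having a cheap resulting tree by the canonical maximal forest~$X_t$ may only shrink the follower's response. This is a partition-level version of Lemma~\ref{lem:Efforest}, which the excerpt states only for nested edge sets, so one has to route through the ``extend to a spanning tree within each component'' construction above in order to invoke it; the remaining bookkeeping is routine.
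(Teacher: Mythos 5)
Your proof is correct, but it takes a genuinely different route from the paper. The paper's algorithm is iterative: in each round it computes a minimum spanning tree of the current graph with respect to~$c$, commits the leader's part of that tree, contracts it, and recurses; the analysis is an induction on~$|V|$ in which each of the at most $|V|-1$ rounds is charged at most $OPT$. You instead make a single bottleneck observation -- the costliest edge of the optimal tree $X^*\cup Y^*$ has cost $b^*\le OPT$ because all $n-1$ costs are nonnegative and sum to $OPT$ -- and then solve what is essentially the leader-bottleneck variant by threshold enumeration, taking a maximal forest $X_t$ in $E^\ell\cap E_{\le t}$ for each candidate threshold~$t$. Your admissibility argument for $t=b^*$ is sound: extending $X^*$ to a maximal forest $X'$ of $(V,E^\ell\cap E_{\le b^*})$ lets you invoke Lemma~\ref{lem:Efforest} on the nested pair $X^*\subseteq X'$, and since $X'$ and $X_{b^*}$ induce the same vertex partition they elicit the same (deterministic) follower response, giving $Y_{b^*}\subseteq Y^*\subseteq E_{\le b^*}$; feasibility follows from the coarsening of components. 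Both proofs lean on the same monotonicity tool (Lemma~\ref{lem:Efforest}), but yours avoids the induction and the repeated application of Corollary~\ref{cor:Efforest}, and it makes explicit the pleasant connection to the polynomial algorithm of Theorem~\ref{thm:easy:bottleneck:sum} for the bottleneck-leader case. It also yields the formally sharper bound $(n-1)\,b^*$, which can be much smaller than $(n-1)\,OPT$ when the optimal tree has many cheap edges; the paper's adaptive contraction scheme, on the other hand, may be expected to behave better heuristically since later rounds re-optimize on the contracted graph. One cosmetic remark: the appeal to Lemma~\ref{lem:Efconn} is not needed for correctness (your direct feasibility argument for $X_{b^*}$ already suffices, and every connected instance admits some feasible leader's choice), and invoking it obliges you to check that $M$ is large enough that an $(n-1)$-approximate solution of the transformed instance never uses an artificial edge -- a detail worth one sentence if you keep that reduction.
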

\begin{proof}
  The algorithm starts with an empty leader's solution $X \coloneqq
  \emptyset$ and iteratively adds leader's edges to $X$. At the same
  time the graph~$G=(V,E^\ell\cup E^f)$, initially given as part of
  the considered BMST instance, is modified in each iteration of the
  algorithm. More specifically, in each iteration, we first apply
  Corollary~\ref{cor:Efforest} in order to turn~$E^f$ into a
  forest. Then, in the current graph $G=(V, E^\ell \cup E^f)$, we
  compute a minimum spanning tree~$T$ according to the leader's cost
  function~$c$. Let~$T^\ell \coloneqq T \cap E^\ell$ be the part of
  the spanning tree that is controlled by the leader, and add the
  edges in~$T^\ell$ to~$X$. If $T^\ell = T$ or $T^\ell = \emptyset$,
  we stop and output $X$ as the leader's solution. Otherwise, we
  contract the edges in $T^\ell$ and start the next iteration.

  The algorithm clearly runs in polynomial time, since we perform at
  most~$|V|-1$ iterations, and in each iteration we apply the
  polynomial reduction of Corollary~\ref{cor:Efforest} and compute a
  minimum spanning tree. It is also not hard to see that the
  algorithm computes a feasible solution: if it stops with $T^\ell =
  T$, the leader's solution~$X$ already forms a spanning tree in the
  original graph. Otherwise, it stops with $T^\ell = \emptyset$. In
  this case, the follower is able to complete $X$ to a spanning tree,
  for example using the edges in~$T$.  It remains to show that the
  objective value of $X$ is at most $|V|-1$ times the optimum value.

  We prove this by induction on the number~$|V|$ of vertices. If
  $|V|=2$, the statement is clearly true since we may assume that we
  only have two edges, one leader's and one follower's edge.  So let
  us assume that for some arbitrary but fixed $n \in \mathbb{N}$ the
  statement is true for all graphs that have at most $n$ vertices.
  Let $I$ be an instance with $G=(V, E^\ell \cup E^f)$, where $E^f$ is
  a forest and $|V| = n+1$. Let $T$ be the spanning tree that is
  computed in the first iteration of the algorithm. If it stops after
  the first iteration, i.e., if $T^\ell = T$ or $T^\ell = \emptyset$,
  either the leader or the follower chooses the whole tree $T$, while
  the other player chooses $\emptyset$; note that
  for~$T^\ell=\emptyset$ the follower can only choose~$T$ as response,
  as~$E^f$ is cycle-free. In both cases, the leader's objective value
  is~$c(T)$, which is clearly optimum. Otherwise, we have $c(T^\ell)
  \leq c(T) \leq OPT(I)$, where $OPT(I)$ denotes the value of an
  optimum solution to instance $I$.  Let~$\hat{I}$ with the graph
  $\hat{G} = (\hat{V}, \hat{E})$ be the instance that is considered in
  the second iteration, i.e., after contracting $T^\ell$.  Observe
  that by Lemma~\ref{lem:Efforest}, we have $OPT(\hat{I}) \leq
  OPT(I)$, since~$\hat{I}$ arises from~$I$ by contracting certain
  edges of the graph. Furthermore, a solution $\hat{X}$ to~$\hat{I}$
  with follower's response $\hat{Y}$ can be augmented to a solution to
  $I$ by simply adding~$T^\ell$, such that $\hat{Y}$ remains the
  follower's response.  Finally, observe that~$| \hat{V} | \leq n$ and
  hence the induction hypothesis holds, i.e., the solution $\hat{X}$
  to $\hat{I}$ produced by the algorithm is
  a~$(|\hat{V}|-1)$-approximation, where $|\hat{V}| - 1 \leq |V| - 2$.
  Putting things together, we derive that
  \begin{eqnarray*}
    c(X)+c(\hat{Y}) & = & c(T^\ell) + c(\hat{X}) + c(\hat{Y})\\
    & \leq & OPT(I) + (|\hat{V}| - 1) OPT(\hat{I})\\
    & \leq & (|V| - 1) OPT(I)
  \end{eqnarray*}
  holds for the objective value of the leader's solution $X = T^\ell
  \cup \hat{X}$ returned by the algorithm. 
\end{proof}

\section{Bottleneck objective}
\label{sec:bottleneck}

In this section, we consider variants of BMST in which one or both of
the two decision makers have a bottleneck objective function instead
of a sum objective, i.e., they pay only for the most expensive edge in
their solution. Recall that when the follower has a bottleneck
objective, we have to distinguish two variants of this objective,
namely minimizing either~$\max_{e \in Y}d(e)$ or~$\max_{e \in X\cup
  Y}d(e)$, i.e., the follower either takes only his own edges into
account or both the edges chosen by the leader and by himself. As
already mentioned in the introduction, these variants are not
equivalent, in contrast to the corresponding variants in the sum
objective case. The problem version in which the follower considers
only his own edges can be seen as a special case of the one in which
he considers all edges by setting $d(e) \coloneqq 0$ for all $e \in
E^\ell$.

Consider the example depicted in Fig.~\ref{fig:example} and assume
that the leader still has a sum objective, but the follower has a
bottleneck objective. In his response to the leader's choice shown in
Fig.~\ref{fig:example}, the follower could now also choose the edge
$\{v_3, v_5\}$ instead of the edge $\{v_3, v_6\}$. Both options
are optimum from the follower's perspective. Under the optimistic
assumption, the follower would choose $\{v_3, v_6\}$ because it is
better for the leader. But under the pessimistic assumption, the
follower would choose $\{v_3, v_5\}$ instead, increasing the leader's
objective value by $4$.

Shi \emph{et al.}~\cite{shi19} showed that BMST is tractable as soon
as the leader or the follower (or both) optimize a bottleneck
objective. However, the general assumption in~\cite{shi19} is that the
follower's and the leader's edge sets are not disjoint, but that the
follower controls all edges, or, equivalently, that instances belong
to~$\IEfE$. Note that, in the definition of $\IEfE$, we have to
require the parallel edges to have not only the same leader's, but
also the same follower's cost now. Without this assumption, the
tractability results do not hold anymore in general. In fact, we will
see that most cases are NP-hard then.  Gassner~\cite{gassner02}
developed two polynomial-time algorithms without the assumption that
the follower controls all edges, namely for the cases in which the
leader has a bottleneck objective and the follower either has a sum
objective or a bottleneck objective, restricting to the pessimistic
problem version in the latter case. In this case, however, she always
assumes the follower to minimize~$\max_{e \in Y} d(e)$. We generalize
this result to the case of the follower's objective being~$\max_{e \in
  X \cup Y} d(e)$ and slightly simplify her other algorithm.
Moreover, our hardness results show that these are the only two cases
which are polynomial-time solvable in general, unless~P$\,=\,$NP. An
overview of the different cases and results is given in
Table~\ref{tab:bottleneck:results}.
\begin{table}
  \small
  \centering
  \begin{tabular}{|c|c|c|l|}
    \hline
    Leader & Follower & Assumption & Results \\
    \hline
    \hline
    S & S & opt/pess & NP-hard (Theorem~\ref{thm:hard}) \\
    \hline
    S & BN & pess & P for $\IEfE$ and $\max_{e \in X \cup Y} d(e)$~(\cite{shi19}) \\
    &&& NP-hard (Corollary~\ref{cor:hard:sum:bottleneck:pess}) \\
    \hline
    S & BN & opt & P for $\IEfE$ and $\max_{e \in X \cup Y} d(e)$~(\cite{shi19}) \\
    &&& NP-hard (Theorem~\ref{thm:hard:sum:bottleneck}) \\
    \hline
    BN & S & opt/pess & P for $\IEfE$~(\cite{shi19}) \\
    &&& P~(\cite{gassner02} and Theorem~\ref{thm:easy:bottleneck:sum}) \\
    \hline
    BN & BN & pess & P for $\IEfE$ and $\max_{e \in X \cup Y} d(e)$~(\cite{shi19}) \\
    &&& P for $\max_{e \in Y} d(e)$~(\cite{gassner02}) \\
    &&& P (Theorem~\ref{thm:easy:bottleneck:bottleneck:pess}) \\
    \hline
    BN & BN & opt & P for $\IEfE$ and $\max_{e \in X \cup Y} d(e)$~(\cite{shi19}) \\
    &&& NP-hard (Theorem~\ref{thm:hard:bottleneck:bottleneck:opt}) \\
    \hline
  \end{tabular}
  \caption{Results for all variants of BMST with sum (S) or
    bottleneck (BN) objective functions, assuming an optimistic (opt)
    or pessimistic (pess) setting.}
  \label{tab:bottleneck:results}
\end{table}

\begin{theorem}
  \label{thm:easy:bottleneck:sum}
  The variant of BMST in which the leader has a bottleneck objective and
  the follower has a sum objective can be solved in polynomial time.
\end{theorem}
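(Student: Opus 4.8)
The plan is to use the standard threshold technique for bottleneck problems. Since the leader pays only for the most expensive edge of the final spanning tree $X\cup Y$, the optimum value is $c(e)$ for some $e\in E^\ell\cup E^f$, so it suffices to decide, for each candidate threshold $t\in\{c(e)\mid e\in E^\ell\cup E^f\}$, whether the leader can enforce a spanning tree all of whose edges have cost at most $t$; the smallest feasible $t$ is then the optimum. For a fixed $t$, write $E^\ell_t:=\{e\in E^\ell\mid c(e)\le t\}$ and $E^f_t:=\{e\in E^f\mid c(e)\le t\}$. The key claim I would establish is that the leader can achieve bottleneck value at most $t$ if and only if one particular canonical strategy does: take any maximal forest $X^*$ of the graph $(V,E^\ell_t)$, let $Y^*$ be the follower's response to $X^*$, and check whether $X^*$ is a feasible leader's choice and $\max_{e\in Y^*}c(e)\le t$. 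Both conditions are clearly testable in polynomial time: compute $X^*$ greedily, contract it, and run Kruskal on $E^f$ (with the fixed tie-breaking order) to obtain~$Y^*$.

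To prove the claim, one direction is immediate, since $X^*\subseteq E^\ell_t$ already forces $\max_{e\in X^*}c(e)\le t$. For the converse, suppose some feasible $X$ with follower's response $Y$ satisfies $\max_{e\in X\cup Y}c(e)\le t$; then $X\subseteq E^\ell_t$ and $Y\subseteq E^f_t$. Extend $X$ to a maximal forest $X'$ of $(V,E^\ell_t)$. Since $X\subseteq X'$, Lemma~\ref{lem:Efforest} gives that the follower's response $Y'$ to $X'$ satisfies $Y'\subseteq Y\subseteq E^f_t$, hence $\max_{e\in Y'}c(e)\le t$; moreover $X'\subseteq E^\ell_t$ and $X'$ is feasible because it contains the feasible set $X$. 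Finally, any two maximal forests of $(V,E^\ell_t)$ induce the same partition of $V$ into connected components, namely the components of $(V,E^\ell_t)$, and — as noted in the introduction — the follower's feasible set, and thus under the fixed deterministic strategy also his response, depends only on this partition. Consequently $X^*$ and $X'$ have the same feasibility status and the same follower's response, so the canonical test succeeds at $t$, proving the claim.

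It remains to assemble the algorithm. The test is monotone in $t$ (if it passes for $t$ it passes for every $t'\ge t$, which is immediate from the claim applied to the solution witnessing $t$), so a binary search over the $O(|E|)$ candidate thresholds suffices, and $X^*$ for the smallest passing $t$ is a corresponding optimum leader's solution. The same reasoning covers both the optimistic and the pessimistic setting, because each is captured by a fixed greedy order of the follower, which is all that Lemma~\ref{lem:Efforest} and the partition argument rely on. I expect the only delicate point to be the reduction to the single canonical choice $X^*$: it requires combining the monotonicity of the follower's response under enlarging $X$ with the observations that every edge the leader may add here is automatically within budget and that the response is determined by the component partition alone — once these are in place, the rest is routine bookkeeping.
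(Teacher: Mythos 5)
Your proposal is correct and takes essentially the same route as the paper: enumerate cost thresholds, take a maximal forest of the leader's edges within the threshold, and use Lemma~\ref{lem:Efforest} together with the fact that the follower's response depends only on the component partition of the leader's choice to show that this canonical choice dominates every other choice within the same threshold. The only difference is cosmetic --- you phrase each threshold step as a feasibility test and return the smallest passing $t$, whereas the paper evaluates the full bottleneck objective value for each threshold $\gamma$ on the leader's own edges and takes the minimum; your explicit extension of $X$ to a maximal forest $X'$ before invoking the lemma is in fact slightly more careful than the paper's terse version of the same argument.
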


\begin{proof}
  We first present the algorithm: for each~$\gamma\in C\coloneqq
  \{c(e)\mid e\in E^\ell\} \cup \{0\}$, the leader considers the set
  $E_\gamma\coloneqq \{e\in E^\ell\mid c(e)\le\gamma\}$ and chooses
  any edge set~$X_\gamma \subseteq E_\gamma$ consisting of a spanning
  tree in each connected component of~$G_\gamma \coloneqq
  (V,E_\gamma)$. Let $Y_\gamma$ be the corresponding response of the
  follower and $c_\gamma$ the resulting leader's objective value,
  where~$c_\gamma\coloneqq \infty$ in case the follower cannot
  extend~$X_\gamma$ to a spanning tree. Finally,
  choose~$\gamma^*\in\argmin_{\gamma\in C}c_\gamma$ and
  return~$X_{\gamma^*}$ as optimum solution.
  
  The algorithm clearly runs in polynomial time, so it remains to show
  that~$X_{\gamma^*}$ is indeed an optimum solution. For this, it
  suffices to show that, for any~$\gamma$, choosing a
  solution~$X\subseteq E_\gamma$ with the same bottleneck cost cannot
  yield a smaller objective function value than~$c_\gamma$.  Since the
  objective function of the follower is a sum,
  Lemma~\ref{lem:Efforest} applies, thus his response~$Y$ to $X$ is a
  superset of~$Y_\gamma$. Now the cost of~$X\cup Y$ (in the leader's
  bottleneck objective) is at least the cost of~$X_\gamma\cup
  Y_\gamma$.  \end{proof}

We now turn to the case in which both leader and follower have a
bottleneck objective. Then, the above algorithm does not work in
general because Lemma~\ref{lem:Efforest} is not true in case the
follower has a bottleneck objective. However, Gassner~\cite{gassner02}
showed that the same algorithm solves the problem version in which both
leader and follower have a bottleneck objective, the pessimistic setting
is assumed and the follower's objective is to minimize $\max_{e \in Y}
d(e)$. We next prove that a generalized form of
the algorithm can be used to solve the problem version with follower's
objective $\max_{e \in X \cup Y} d(e)$, completing the investigation
of all polynomial-time solvable cases.
\begin{theorem}
  \label{thm:easy:bottleneck:bottleneck:pess}
  The variant of BMST in which both leader and follower have a
  bottleneck objective and the pessimistic setting is assumed can be
  solved in polynomial time.
\end{theorem}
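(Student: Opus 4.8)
The plan is to generalize the algorithm from Theorem~\ref{thm:easy:bottleneck:sum} by iterating over threshold values for \emph{both} players' bottleneck costs. First I would let $C^\ell \coloneqq \{c(e) \mid e \in E^\ell\} \cup \{0\}$ and $C^f \coloneqq \{d(e) \mid e \in E\} \cup \{0\}$ (here both the follower's own edges and the leader's edges matter, since the follower minimizes $\max_{e \in X \cup Y} d(e)$). For every pair $(\gamma, \delta) \in C^\ell \times C^f$, restrict the leader's available edges to $E_\gamma \coloneqq \{e \in E^\ell \mid c(e) \le \gamma\}$ and the follower's available edges to $E^f_\delta \coloneqq \{e \in E^f \mid d(e) \le \delta\}$. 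The leader then wants to choose some $X \subseteq E_\gamma$ such that, in the pessimistic setting, \emph{every} optimal follower response stays within the prescribed bottleneck bound, i.e.\ the follower is able to complete $X$ to a spanning tree using only edges of $E^f_\delta$, while simultaneously $X \subseteq E_\gamma$ also forbids $d$-large leader edges from blowing up the follower's bottleneck; and the leader's own bottleneck cost is $\le \gamma$. Among all feasible $(\gamma,\delta)$ and induced leader solutions, I would return the one minimizing the resulting leader bottleneck value.

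The key structural observation I would use is a monotonicity/canonical-choice fact analogous to Lemma~\ref{lem:Efforest}: once $\delta$ is fixed as the follower's target bottleneck, whether the follower can respond within that bound depends only on which pairs of vertices the leader connects, and it is best (for forcing a small follower bottleneck in the pessimistic sense) for the leader to connect \emph{as much as possible} using edges of $c$-cost $\le \gamma$. Concretely, for a fixed $(\gamma,\delta)$, I would let $X_\gamma$ be a spanning forest of $G_\gamma \coloneqq (V, E_\gamma)$ — taking the maximal such forest — and check two things: (i) the follower can extend $X_\gamma$ to a spanning tree using only edges in $E^f_\delta$ (equivalently, contracting $X_\gamma$ leaves a graph connected by $E^f_\delta$), and (ii) every edge $e \in X_\gamma$ has $d(e) \le \delta$, so that the leader's own edges do not force the follower's bottleneck above $\delta$. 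If both hold, $(\gamma,\delta)$ is feasible and yields leader bottleneck value $\max_{e \in X_\gamma} c(e) \le \gamma$ (plus the follower's edges, which also have $c$-cost that must be accounted for — but those are bounded by iterating $\gamma$ high enough, or rather one takes the max over $X_\gamma \cup Y_\gamma$). Correctness then reduces to: for the true optimum $(X^*, Y^*)$, set $\gamma$ to the leader-bottleneck of $X^*$ and $\delta$ to the largest $d$-value the follower is forced to use in the pessimistic response; maximality of $X_\gamma$ together with the monotonicity observation guarantees that the algorithm's choice at $(\gamma,\delta)$ is at least as good.

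The main obstacle I expect is precisely the pessimistic quantifier: it is not enough for \emph{some} follower response to stay within $\delta$; I must argue that \emph{no} optimal follower response can exceed it. The delicate point is that enlarging $X$ (taking the maximal forest in $G_\gamma$ rather than an arbitrary one) is simultaneously good — it reduces what the follower must connect, hence can only lower his forced bottleneck — and this is exactly where the bottleneck analogue of Lemma~\ref{lem:Efforest} is needed. I would prove this analogue directly: if $X_1 \subseteq X_2$ are cycle-free leader solutions, then in the pessimistic bottleneck model the worst follower response to $X_2$ has bottleneck value no larger than that to $X_1$, because any follower completion of $X_2$ extends to a completion of $X_1$ of no smaller bottleneck (the extra edges of $X_2 \setminus X_1$ only remove required connections). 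Combined with the deterministic tie-breaking assumption, this pins down the behavior enough to make the threshold enumeration exhaustive, and the polynomial running time is immediate since $|C^\ell|\cdot|C^f| = O(|E|^2)$ and each pair requires only a spanning-forest and connectivity computation.
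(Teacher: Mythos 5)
Your overall strategy is the paper's: enumerate a pair of thresholds, one bounding the leader's $c$-cost and one bounding the $d$-cost of the \emph{leader's} edges, take a maximal forest among the admissible leader's edges, and argue via a monotonicity property that enlarging the leader's forest within these thresholds cannot hurt her (the paper parameterizes by pairs of leader's edges $(e_c,e_d)$ rather than by values, which is equivalent). However, two details as you state them break the argument. First, you take the maximal forest $X_\gamma$ in $E_\gamma=\{e\in E^\ell \mid c(e)\le\gamma\}$, i.e., filtered only by $c$, and then discard the pair $(\gamma,\delta)$ if $X_\gamma$ happens to contain an edge with $d(e)>\delta$. Maximal forests are not unique, and nothing forces the one you compute to avoid $d$-heavy edges even when an equally connecting forest without them exists (think of two parallel leader's edges with equal $c$ but very different $d$). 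So the pair corresponding to the true optimum can be rejected, and at the larger $\delta$ where your check passes, the follower's optimal bottleneck has gone up, enlarging his set of optimal responses and, under pessimism, potentially the leader's cost. The fix is to take the maximal forest in $\{e\in E^\ell\mid c(e)\le\gamma,\ d(e)\le\delta\}$, which is exactly what the paper does.

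Second, and more importantly, your monotonicity lemma is about the wrong quantity. You argue that for $X_1\subseteq X_2$ the worst follower response to $X_2$ has \emph{follower}-bottleneck no larger than that to $X_1$. What correctness actually requires is that the \emph{leader's} cost $\max_{e\in X\cup Y}c(e)$ of the \emph{pessimistic} response does not increase when $X$ is enlarged to a maximal forest having the same $\max_{e\in X}c(e)$ and $\max_{e\in X}d(e)$. That is the crux of the pessimistic case, and it needs an exchange argument of roughly the following form: the pessimistic response $Y_2$ to $X_2$ extends, by adding $E^f$-edges taken from some follower-optimal response to $X_1$, to a follower-optimal response $Y_1'\supseteq Y_2$ to $X_1$, so the pessimistic follower facing $X_1$ does at least as much damage to the leader as $Y_2$ does. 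Your lemma as stated does not yield this, and the sentence claiming that the deterministic tie-breaking ``pins down the behavior enough'' papers over exactly the step that needs proof.
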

\begin{proof}
  The algorithm works as follows: for all~$e_c,e_d \in E^\ell$ such
  that~$c(e_c)\ge c(e_d)$ and~$d(e_c)\leq d(e_d)$, and such that
  either $e_c=e_d$ or $e_c$ and $e_d$ are not parallel, consider the
  set
  $$E_{e_c, e_d} \coloneqq  \{e \in E^\ell \mid c(e) \leq c(e_c) \text{ and }
  d(e) \leq d(e_d)\}\;.$$ The leader chooses any edge set~$X_{e_c,
    e_d} \subseteq E_{e_c, e_d}$ with $e_c, e_d \in X_{e_c, e_d}$ that
  consists of a spanning tree in each connected component of~$G_{e_c,
    e_d} \coloneqq (V,E_{e_c, e_d})$. Let $Y_{e_c, e_d}$ be the
  corresponding response of the follower and $c_{e_c, e_d}$ the
  resulting objective value, where~$c_{e_c, e_d}\coloneqq \infty$ in
  case the follower cannot extend~$X_{e_c, e_d}$ to a spanning
  tree. If the case $c(e_c) = d(e_d) = 0$ does not occur, consider~$X_0
  \coloneqq \emptyset$ as an additional candidate. Finally,
  choose~$(e_c^*, e_d^*)$ minimizing $c_{e_c, e_d}$ and
  return~$X_{e_c^*, e_d^*}$.

  The algorithm clearly runs in polynomial time, so it remains to show
  that~$X_{e_c^*, e_d^*}$ is indeed an optimum solution. For this, let
  $e_c, e_d \in E^\ell$ be such that $c(e_c)$ and~$d(e_d)$ are the
  maximum leader's and follower's edge costs, respectively, among a
  leader's optimum solution $X \subseteq E^\ell$, assuming $X \neq
  \emptyset$. We show that $X$ cannot have a smaller objective
  function value than~$c_{e_c, e_d}$.

  If $X$ is a maximal forest in $G_{e_c, e_d}$, it leads to the same
  follower's response and hence the same objective function value as
  $X_{e_c, e_d}$. Otherwise, we may assume that~$X \subset X_{e_c,
    e_d}$. The follower cannot achieve a better objective value when
  responding to $X$ than to $X_{e_c, e_d}$. Hence, by the pessimistic
  assumption, the maximum leader's edge cost among the follower's
  response cannot be smaller in the former than in the latter
  case. Since the maximum leader's and follower's edge costs
  among~$X$ and $X_{e_c, e_d}$, respectively, are the same, it follows
  that $X$ cannot lead to a smaller objective function value than
  $X_{e_c, e_d}$.
\end{proof}
Turning to the hardness results, we will reuse several ideas from
Section~\ref{sec:complexity} and Section~\ref{sec:fpt} that can be
applied directly or need to be changed slightly for the bottleneck
cases. First, note that in Theorem~\ref{thm:hard}, the follower's sum
objective can be easily replaced by a bottleneck objective, assuming
the pessimistic setting:
\begin{corollary} \label{cor:hard:sum:bottleneck:pess}
  The variant of BMST in which the leader has a sum objective, the
  follower has a bottleneck objective and the pessimistic setting is
  assumed, cannot be approximated to within a factor of
  $\frac{96}{95}$ in polynomial time, unless~P$\,=\,$NP, even if $E^f$
  is a tree.
\end{corollary}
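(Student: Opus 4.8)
The plan is to re-use the very same reduction from SF to BMST that appears in the proof of Theorem~\ref{thm:hard}, now reading the follower's objective as the bottleneck function $\max_{e\in Y}d(e)$; the variant $\max_{e\in X\cup Y}d(e)$ is covered by exactly the same argument after additionally putting $d(e):=0$ for all $e\in E^\ell$. I will argue that, in the pessimistic bottleneck setting as well, the constructed BMST instance $I'$ and the given SF instance $I$ have the same optimum value. Since the construction is polynomial, this makes the reduction approximation-preserving, so the $\frac{96}{95}$-inapproximability of SF from~\cite{chlebik08} carries over; moreover $E^f$ is a tree (and can even be taken to be a path, cf.\ Remark~\ref{rem:treestructure}).

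First I would treat the direction from an SF solution to a BMST solution of equal cost. Given an SF-feasible forest $X\subseteq E$, the choice $X':=X$ is a feasible leader's solution because $E^f$ spans $V$. Since $X'$ connects each terminal set $S_i$, every edge of $E^f_0$ would close a cycle with $X'$, so no edge of $E^f_0$ can ever appear in a follower's response; hence, independently of how the follower (pessimistically) breaks ties, his response consists solely of edges of $E^f\setminus E^f_0$, each of leader's cost $0$. Consequently the leader's objective value is $c(X')=\ell(X)$, exactly as in the sum-objective proof.

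For the converse, I would first note that $I'$ has a feasible solution of value below $M$ --- for instance any spanning tree of $(V,E^\ell)=(V,E)$, which the follower completes by the empty set --- so that every optimum leader's choice $X'$ has value $<M$. The key claim is that such an $X'$ connects every terminal set. Suppose it does not, say $S_i$ is disconnected in $(V,X')$; in particular $X'$ is not spanning. Because the restriction of $E^f_0$ to $S_i$ is a spanning tree of $S_i$, it contains an edge $e_0$ joining two distinct components of $(V,X')$, so $X'\cup\{e_0\}$ is still cycle-free. The follower's optimum bottleneck value is either $0$ or $1$ (recall $d$ takes only these values on $E^f$). If it is $0$, every optimal follower's response is a non-empty subset of $E^f_0$ and hence contains an edge of leader's cost $M$. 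If it is $1$, then \emph{every} spanning-tree completion of $X'$ is an optimal follower's response, so the pessimistic follower may --- and in the worst case does --- choose one that contains $e_0$, again an edge of leader's cost $M$. In both cases the leader's value is at least $M$, contradicting the optimality of $X'$. Hence an optimum $X'$ connects all $S_i$; being cycle-free by feasibility, it is an SF-feasible forest, and, as in the first direction, its follower's response lies entirely in $E^f\setminus E^f_0$, so the leader's value equals $\ell(X')$. Combining the two directions yields that $I'$ and $I$ have equal optimum value.

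The step I expect to be the main obstacle is precisely the pessimistic tie-breaking in the converse: one must observe that, exactly because $d$ takes only the values $0$ and $1$ on $E^f$, as soon as the follower cannot reach bottleneck $0$ \emph{all} spanning-tree completions of $X'$ become optimal responses of his, which is what lets the worst-case response pick up the expensive edge $e_0$. This is also where the \emph{pessimistic} assumption is genuinely used: under the optimistic rule the follower would instead try to avoid $E^f_0$, and the reduction would fail. Everything else is essentially a line-by-line transcription of the proof of Theorem~\ref{thm:hard}.
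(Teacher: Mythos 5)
Your proof is correct, and it uses the same underlying reduction from the Steiner forest problem as the paper, but the verification is genuinely different in one respect: the choice of the follower's cost function. The paper simply redefines $d(e)\coloneqq 0$ for \emph{all} edges, so that the follower's bottleneck value is always $0$, every spanning-tree completion is optimal for him, and the pessimistic tie-breaking turns him into an adversary maximizing the leader's cost; the analysis then collapses to the adversarial-follower case already handled in Remark~\ref{rem:mmst}, and the equivalence of the two bottleneck variants is immediate. You instead keep the original cost function from Theorem~\ref{thm:hard} ($d=0$ on $E^f_0$, $d=1$ on $E^f\setminus E^f_0$) and argue directly: if some terminal set is disconnected by the leader, then either the follower's optimal bottleneck is $0$, in which case every optimal response lies in $E^f_0$ and necessarily contains an $M$-cost edge, or it is $1$, in which case \emph{all} completions are optimal and the pessimistic follower picks up the crossing edge $e_0\in E^f_0$. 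Both cases are handled correctly (in particular, your observation that $X'\cup\{e_0\}$ is acyclic and extends to a spanning tree within $E^f$ is the right justification that such a worst-case response exists). What the paper's choice buys is brevity and an effortless treatment of both follower objectives; what your choice buys is a self-contained argument showing that the hardness does not rely on making the follower's objective completely degenerate, at the price of the explicit two-case analysis of the pessimistic tie-breaking. Either way the reduction is polynomial and value-preserving below $M$, so the $\frac{96}{95}$-inapproximability of Steiner forest carries over as claimed.
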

\begin{proof}
  We can use the same reduction from the Steiner forest problem as in
  the proof of Theorem~\ref{thm:hard}. For sake of simplicity, the
  follower's cost function can now be defined as $d(e) \coloneqq 0$
  for all edges $e \in E \cup E^f$. Then the follower's objective
  value is always $0$ and his decision is only guided by the
  pessimism.  For this definition of $d$, both variants of the
  follower's bottleneck objective function, $\max_{e \in Y} d(e)$ and
  $\max_{e \in X\cup Y} d(e)$, are equivalent, hence this proof
  clearly holds for both of them.  The pessimistic assumption about
  the follower's behavior here is equivalent to the behavior of a
  follower having a cost function of $d(e) \coloneqq -c(e)$ for all $e
  \in E^f$ and a sum objective, which is equivalent to the problem
  variant which is reduced to in Theorem~\ref{thm:hard}; see also
  Remark~\ref{rem:mmst}.
\end{proof}

Corollary~\ref{cor:hard:sum:bottleneck:pess} cannot be
easily adapted to the optimistic assumption. However, the hardness of
this case can be concluded using Theorem~\ref{thm:hard:response}:
\begin{theorem}
  \label{thm:hard:sum:bottleneck}
  All variants of BMST where the leader has a sum objective and the
  follower has a bottleneck objective are NP-hard, even if~$c(e)=1$
  for all~$e\in E^\ell$.
\end{theorem}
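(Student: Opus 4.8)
The plan is to reduce VDST for $k=2$, which is NP-complete by~\cite{korte1990}, to the problem at hand, reusing verbatim the construction from the proof of Theorem~\ref{thm:hard:response} and merely equipping it with leader's costs. Given a VDST instance with connected graph $G=(V,E)$ and disjoint terminal sets $S$ and $S'$, let $E^\ell=E$, $E^f$, $d$, and $f_0:=\{s_1,s'_1\}$ be exactly as in that proof; in particular $E^f$ consists of a path through $S$, a path through $S'$, and the single edge $f_0$, with $d(f_0)=1$ and $d(e)=0$ for all $e\in E^f\setminus\{f_0\}$. If the follower's objective is $\max_{e\in X\cup Y}d(e)$, additionally set $d(e):=0$ for all $e\in E^\ell$, so that the two follower objectives coincide on this instance. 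Now put $c(e):=1$ for all $e\in E^\ell$, $c(f_0):=0$, and $c(e):=2$ for all $e\in E^f\setminus\{f_0\}$; note that $c\equiv 1$ on $E^\ell$ as required, and the leader's costs are irrelevant to the follower in either bottleneck variant.

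The heart of the reduction is a cost identity. For any feasible leader's choice $X$ with follower's response $Y$, the set $X\cup Y$ is a spanning tree of $G$, so $|X|+|Y|=|V|-1$, and since $c\equiv 1$ on $E^\ell\supseteq X$,
\[
c(X\cup Y)=|X|+c(Y)=(|V|-1)+\sum_{e\in Y}\bigl(c(e)-1\bigr)=(|V|-1)+\bigl|Y\setminus\{f_0\}\bigr|-[f_0\in Y].
\]
Consequently $c(X\cup Y)=|V|-2$ whenever $Y=\{f_0\}$, and $c(X\cup Y)\ge|V|-1$ for every other response $Y$; the value $|V|-1$ is attained, e.g., by letting $X$ be a spanning tree of $G$, in which case $Y=\emptyset$. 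Hence the optimum of the constructed BMST instance equals $|V|-2$ if some feasible leader's choice forces the follower's response to be exactly $\{f_0\}$, and equals $|V|-1$ otherwise.

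It therefore remains to characterize when $\{f_0\}$ is enforceable, which is precisely the content of the correctness argument of Theorem~\ref{thm:hard:response}: a feasible $X$ induces the response $\{f_0\}$ if and only if the VDST instance is a yes-instance. For one direction, take vertex-disjoint trees spanning $S$ and $S'$, extend them using connectivity of $G$ to vertex-disjoint trees partitioning $V$, and let $X$ be their union; then $X$ is a forest with exactly two components, every follower's path edge lies inside one of them, and $f_0$ is the unique follower's edge joining the two, so the follower's only feasible completion of $X$ --- independently of the optimistic or pessimistic convention --- is $\{f_0\}$. For the other direction, if the response to some feasible $X$ equals $\{f_0\}$, then $X\cup\{f_0\}$ is a spanning tree, so $X$ has exactly two components; since the follower (preferring the $d$-cheaper path edges) takes no path edge, all path edges must close cycles with $X$, which means $X$ already connects all of $S$ and all of $S'$ while keeping the two apart --- exactly a pair of vertex-disjoint trees spanning $S$ and $S'$.

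The delicate point is the ``only if'' direction above: one must rule out that the follower picks up $f_0$ cheaply under a leader's choice not encoding a VDST solution, for instance when $X$ does not even connect $S$ internally. This is exactly why the path edges are given leader's cost $2$: by the cost identity, any response that uses a path edge already costs at least $|V|-1$, so the only way to undercut that bound is $Y=\{f_0\}$, which in turn forces the required VDST structure. One should also observe that the whole argument is insensitive to the follower's tie-breaking rule, since in the configurations that matter the follower's completion of $X$ is in fact unique, and the lower bound $c(X\cup Y)\ge|V|-1$ in the no-instance case holds for every response $Y$. Since VDST is NP-complete already for $k=2$, NP-hardness of BMST with a sum leader and a bottleneck follower follows for all four variants, even with $c\equiv 1$ on $E^\ell$.
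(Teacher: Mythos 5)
Your proof is correct and takes essentially the same route as the paper: there, the hardness of BMST-R from Theorem~\ref{thm:hard:response} is first observed to survive a bottleneck follower, and BMST-R is then reduced to BMST via an edge-counting leader cost ($c=1$ on $E^\ell$, $0$ on $\bar Y$, expensive on $E^f\setminus\bar Y$), so that enforcing $\bar Y$ is equivalent to beating the threshold $|V|-|\bar Y|-1$. You merely compose these two steps into a single direct reduction from VDST, with the same cardinality identity at its core (your cost $2$ versus the paper's $|V|$ on the unwanted follower edges is immaterial) and the same case analysis covering both tie-breaking conventions and both bottleneck variants.
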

\begin{proof}
  First, note that the proof of Theorem~\ref{thm:hard:response} works
  without modification if the follower has a bottleneck objective, for
  both the optimistic and pessimistic setting, no matter if the
  follower is taking only his own or all edges into account; for the
  latter case, define~$d(e) \coloneqq 0$ for all $e \in
  E^\ell$. Hence, all corresponding modifications of the problem
  BMST-R are NP-complete as well, even if $|\bar Y| = 1$ and $E^f$
  forms a path on a subset of the vertex set.

  We now show that BMST-R can be reduced to BMST in all these
  variants, assuming a sum objective for the leader, which proves the
  desired result.
  Consider an instance of BMST-R, consisting of a graph $G = (V,
  E^\ell \cup E^f)$, a follower's objective~$d$ and a set $\bar Y
  \subseteq E^f$. Define a leader's cost function $c\colon E \rightarrow
  \mathbb{R}_{\geq 0}$ by setting    
  \[
  c(e) \coloneqq 
  \begin{cases}
    1,	&	\text{if } e \in E^\ell ,\\
    0,	&	\text{if } e \in \bar Y ,\\
    |V|, &	\text{if } e \in E^f \setminus \bar Y.
  \end{cases}
  \] 
  We claim that the answer to the given instance of BMST-R is yes
  if and only if the leader's optimum solution value in this BMST
  instance is at most $|V| - |\bar Y| - 1$.

  Assume that $X \subseteq E^\ell$ is a leader's solution such that
  $\bar Y$ is the follower's response to~$X$. Choosing $X$ then yields
  a leader's objective value of $c(X) + c(\bar Y) = |V| - |\bar Y| -
  1$, since $X$ and $\bar Y$ form a tree and hence together have $|V|
  - 1$ edges.  Conversely, assume that the leader can achieve an
  objective value of at most $|V| - |\bar Y| - 1$. By construction,
  this is only possible if the follower's response is exactly~$\bar Y$
  and the leader thus chooses~$|V| - |\bar Y| - 1$ of her
  edges. Hence, the follower's response $\bar Y$ can be enforced.
\end{proof}
The proof of Theorem~\ref{thm:hard:sum:bottleneck} does not carry over
to cases in which the leader has a bottleneck objective, because the
reduction from BMST-R to BMST does not work there. However, the case
in which both leader and follower have a bottleneck objective,
assuming the optimistic setting, is NP-hard as well, which can be shown
using similar ideas as in the proof of
Theorem~\ref{thm:hard:sum:bottleneck}.
\begin{theorem}
\label{thm:hard:bottleneck:bottleneck:opt}
  The variant of BMST in which both the leader and the follower have a
  bottleneck objective and the optimistic setting is assumed is NP-hard.
\end{theorem}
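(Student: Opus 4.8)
The plan is to follow the template of the proof of Theorem~\ref{thm:hard:sum:bottleneck} and reduce from BMST-R. Recall that, as observed there, BMST-R remains NP-complete when the follower has a bottleneck objective — for either of its two variants and for the optimistic as well as the pessimistic resolution of ties — even when $|\bar Y|=1$ and $E^f$ is a path on a subset of the vertices; write $\bar Y=\{\bar y\}$, and recall that in the underlying construction of Theorem~\ref{thm:hard:response} we have $d(\bar y)=1$ and $d(e)=0$ for all $e\in E^f\setminus\bar Y$, so that the follower takes $\bar y$ exactly when the leader's choice fails to connect its two endpoints. Given such a BMST-R instance, I would build a BMST instance in which both players minimize a bottleneck objective and the optimistic setting is used: keep the follower's cost $d$ and his order of preference, keep the graph up to a small gadget (see below), and choose the leader's cost so that all of her own edges as well as $\bar y$ are free, while every edge of $E^f\setminus\bar Y$ gets a large cost $M$. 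Since a bottleneck leader would otherwise simply pick a cheap spanning tree of her own and ignore the follower, I would additionally attach a small gadget forcing her to let the follower close the tree, with the cheapest completion routing through $\bar y$; designing this gadget correctly is part of the difficulty. The statement to prove is then: the BMST-R instance is a yes-instance if and only if the optimum of the constructed BMST instance is smaller than $M$, and in the affirmative case an optimal leader's choice enforces precisely the response $\bar Y$.

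The forward direction should be routine: a leader's choice $X$ enforcing $\bar Y$ in the BMST-R instance, completed over the gadget, again has $\bar Y$ as the follower's (optimistic) response, and every edge used has leader's cost $0<M$. For the converse, one starts from a leader's solution of value $<M$; the cost $M$ on $E^f\setminus\bar Y$ immediately forces the follower's response to avoid all of those edges, and the gadget together with the cost choices should force the response to be exactly $\bar Y$, giving back a valid BMST-R witness.

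I expect this converse step to be the main obstacle, and it is where the argument genuinely departs from Theorem~\ref{thm:hard:sum:bottleneck}: there, the leader's \emph{sum} objective allowed a counting argument — a below-threshold solution uses exactly $|V|-|\bar Y|-1$ leader's edges and therefore leaves the follower no choice but $\bar Y$ — which is simply unavailable for a bottleneck objective. The gadget and the numerical values must instead be chosen so that ``leader's bottleneck value $<M$'' is itself equivalent to ``the follower is forced to take $\bar y$ and nothing out of $E^f\setminus\bar Y$''. A second, more technical point — shared with the other bottleneck proofs of this section — is to check that the optimistically tie-breaking follower of the constructed instance reacts to all relevant leader's choices exactly as the follower of the original BMST-R instance; this should hold because in the configurations that matter his bottleneck-optimal response is unique. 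Finally, the variant of the follower's objective $\max_{e\in X\cup Y}d(e)$ is handled by additionally setting $d(e):=0$ for all $e\in E^\ell$, as done elsewhere in this section.
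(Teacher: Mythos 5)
Your reduction is left incomplete at exactly the point where all the work lies. You correctly observe that the counting argument of Theorem~\ref{thm:hard:sum:bottleneck} is unavailable for a bottleneck leader, and that without further modification the leader would simply take a spanning tree inside $(V,E^\ell)$, obtain bottleneck value $0$, and let the follower respond with $\emptyset$; you then defer to ``a small gadget forcing her to let the follower close the tree, with the cheapest completion routing through $\bar y$,'' without constructing it. That gadget \emph{is} the proof. The paper reduces from VDST with $k=2$ rather than from BMST-R: it adds one new vertex $s_0$ and three follower's edges $\{s_1,s_1'\}$, $\{s_0,s_1\}$, $\{s_0,s_1'\}$ with leader's costs $0,0,1$ and follower's costs $1,1,0$, and gives every edge of $E^f_0$ leader's cost $1$. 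When the leader's choice consists of two vertex-disjoint Steiner trees (extended to cover all of $V$), the follower must take two of the three new edges; all three pairs have follower's bottleneck value $1$, and the \emph{optimistic} tie-break selects the pair $\big\{\{s_1,s_1'\},\{s_0,s_1\}\big\}$ of leader's cost $0$. In every other case the follower's preferred response contains an edge of leader's cost $1$. None of this is recoverable from your sketch.

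Moreover, your closing remark that the argument ``should hold because in the configurations that matter his bottleneck-optimal response is unique'' points the construction in a direction that cannot succeed. The pessimistic bottleneck--bottleneck variant is polynomial-time solvable (Theorem~\ref{thm:easy:bottleneck:bottleneck:pess}), so any hardness reduction for the optimistic variant must produce instances in which the follower has several bottleneck-optimal responses and the leader's advantage comes precisely from the optimistic resolution of that tie. If the follower's optimal response were unique for every relevant leader's choice, the optimistic and pessimistic problems would coincide on the constructed instances, and your reduction would equally establish NP-hardness of the pessimistic case --- contradicting Theorem~\ref{thm:easy:bottleneck:bottleneck:pess} unless P$\,=\,$NP. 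The missing gadget is therefore not a routine technicality to be filled in later: it has to be built around a deliberate, leader-exploitable tie, which is exactly what the paper's $s_0$-triangle does.
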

\begin{proof}
  We show the result by reduction from VDST, restricted
  to~$k=2$. Given an instance of VDST consisting of a connected graph
  $G = (V, E)$ and disjoint vertex sets $S = \{s_1, \dots, s_r\}$ and
  $S' = \{s_1', \dots, s_{r'}'\}$, we define an instance of BMST by
  adding a vertex $s_0$ to $V$, setting $E^\ell \coloneqq E$,
  $$E^f_0\coloneqq \big\{\{s_i,s_{i+1}\}\mid i=1,\dots,r-1\big\}
  \cup\big\{\{s'_i,s'_{i+1}\}\mid i=1,\dots,r'-1\big\}\;,$$ and $E^f \coloneqq  E^f_0
  \cup\big\{\{s_1,s'_1\}, \{s_0, s_1\}, \{s_0, s_1'\}\big\}$,
  where the leader's and follower's costs are defined as follows:
  \begin{align*}
  c(e) &\coloneqq 
  \begin{cases}
    0, & \text{if } e \in E^\ell \cup \big\{ \{s_1, s_1'\}, \{s_0, s_1\} \big\} \\
    1, & \text{if } e \in E^f_0 \cup \big\{ \{s_0, s_1'\} \big\}
  \end{cases}\\
  d(e) &\coloneqq 
  \begin{cases}
    0, & \text{if } e \in E^\ell \cup E^f_0 \cup \big\{ \{s_0, s_1'\} \big\} \\
    1, & \text{if } e \in \big\{ \{s_1, s_1'\}, \{s_0, s_1\} \big\}
  \end{cases}  
  \end{align*}
  This construction is illustrated
  in Fig.~\ref{fig:red2_svdst}.  
  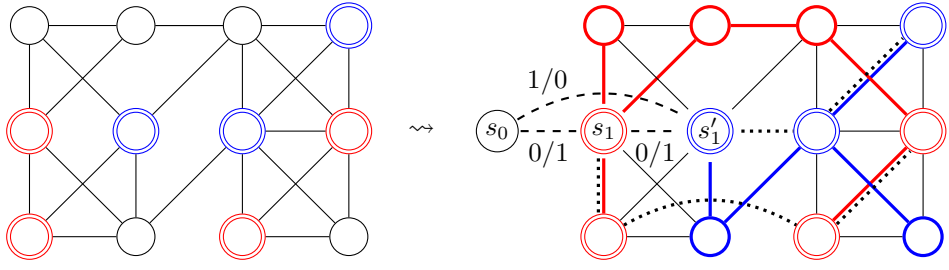
\begin{figure}
    \centering
    \small
    \begin{tikzpicture}[scale=0.7]
      \node[red,draw,circle,minimum width = 0.6cm] (v1) at (0,0) {};
      \node[red,draw,circle,minimum width = 0.5cm] at (0,0) {};
      \node[red,draw,circle,minimum width = 0.6cm] (v2) at (0,2) {};
      \node[red,draw,circle,minimum width = 0.5cm] at (0,2) {};
      \node[draw,circle,minimum width = 0.5cm] (v3) at (0,4) {};
      \node[draw,circle,minimum width = 0.5cm] (v4) at (2,0) {};
      \node[blue,draw,circle,minimum width = 0.6cm] (v5) at (2,2) {};
      \node[blue,draw,circle,minimum width = 0.5cm] at (2,2) {};
      \node[draw,circle,minimum width = 0.5cm] (v5b) at (2,4) {};
      \node[red,draw,circle,minimum width = 0.6cm] (v6) at (4,0) {};
      \node[red,draw,circle,minimum width = 0.5cm] at (4,0) {};
      \node[blue,draw,circle,minimum width = 0.6cm] (v7) at (4,2) {};
      \node[blue,draw,circle,minimum width = 0.5cm] at (4,2) {};
      \node[draw,circle,minimum width = 0.5cm] (v8) at (4,4) {};
      \node[draw,circle,minimum width = 0.5cm] (v9) at (6,0) {};
      \node[red,draw,circle,minimum width = 0.6cm] (v10) at (6,2) {};
      \node[red,draw,circle,minimum width = 0.5cm] at (6,2) {};
      \node[blue,draw,circle,minimum width = 0.6cm] (v11) at (6,4) {};
      \node[blue,draw,circle,minimum width = 0.5cm] at (6,4) {};
      
      \draw (v1) to (v2);
      \draw (v1) to (v4);
      \draw (v1) to (v5);
      \draw (v2) to (v3);
      \draw (v2) to (v4);
      \draw (v2) to (v5b);
      \draw (v3) to (v5);
      \draw (v3) to (v5b);
      \draw (v4) to (v5);
      \draw (v4) to (v7);
      \draw (v5) to (v8);
      \draw (v5b) to (v8);
      \draw (v5b) to (v8);
      \draw (v6) to (v7);
      \draw (v6) to (v9);
      \draw (v6) to (v10);
      \draw (v7) to (v8);
      \draw (v7) to (v9);
      \draw (v7) to (v10);
      \draw (v7) to (v11);
      \draw (v8) to (v10);
      \draw (v8) to (v11);
      \draw (v9) to (v10);
      \draw (v10) to (v11);
    \end{tikzpicture}
    \quad\raisebox{1.66cm}{$\leadsto$}~\quad
    \begin{tikzpicture}[scale=0.7]
      \definecolor{dpurple}{rgb}{0.8, 0.0, 0.8};
      
      \node[circle,minimum width = 0.5cm] at (-2,2) {$s_0$};
      \node[draw,circle,minimum width = 0.55cm] at (-2,2) {};
      
      \node[red,draw,circle,minimum width = 0.6cm] (v1) at (0,0) {};
      \node[red,draw,circle,minimum width = 0.5cm] at (0,0) {};
      \node[circle,minimum width = 0.5cm] (v2) at (0,2) {$s_1$};
      \node[red,draw,circle,minimum width = 0.6cm] at (0,2) {};
      \node[red,draw,circle,minimum width = 0.5cm] at (0,2) {};
      \node[red,draw,circle,minimum width = 0.5cm,very thick] (v3) at (0,4) {};
      \node[blue,draw,circle,minimum width = 0.5cm,very thick] (v4) at (2,0) {};
      \node[circle,minimum width = 0.5cm] (v5) at (2,2) {$s_1'$};
      \node[blue,draw,circle,minimum width = 0.6cm] at (2,2) {};
      \node[blue,draw,circle,minimum width = 0.5cm] at (2,2) {};
      \node[red,draw,circle,minimum width = 0.5cm,very thick] (v5b) at (2,4) {};
      \node[red,draw,circle,minimum width = 0.6cm] (v6) at (4,0) {};
      \node[red,draw,circle,minimum width = 0.5cm] at (4,0) {};
      \node[blue,draw,circle,minimum width = 0.6cm] (v7) at (4,2) {};
      \node[blue,draw,circle,minimum width = 0.5cm] at (4,2) {};
      \node[red,draw,circle,minimum width = 0.5cm,very thick] (v8) at (4,4) {};
      \node[blue,draw,circle,minimum width = 0.5cm,very thick] (v9) at (6,0) {};
      \node[red,draw,circle,minimum width = 0.6cm] (v10) at (6,2) {};
      \node[red,draw,circle,minimum width = 0.5cm] at (6,2) {};
      \node[blue,draw,circle,minimum width = 0.6cm] (v11) at (6,4) {};
      \node[blue,draw,circle,minimum width = 0.5cm] at (6,4) {};
      
      \draw[red,very thick] (v1) to (v2);
      \draw (v1) to (v4);
      \draw (v1) to (v5);
      \draw[red,very thick] (v2) to (v3);
      \draw (v2) to (v4);
      \draw[red,very thick] (v2) to (v5b);
      \draw (v3) to (v5);
      \draw (v3) to (v5b);
      \draw[blue,very thick] (v4) to (v5);
      \draw[blue,very thick] (v4) to (v7);
      \draw (v5) to (v8);
      \draw[red,very thick] (v5b) to (v8);
      \draw (v6) to (v7);
      \draw (v6) to (v9);
      \draw[red,very thick] (v6) to (v10);
      \draw (v7) to (v8);
      \draw[blue,very thick] (v7) to (v9);
      \draw (v7) to (v10);
      \draw[blue,very thick] (v7) to (v11);
      \draw[red,very thick] (v8) to (v10);
      \draw (v8) to (v11);
      \draw (v9) to (v10);
      \draw (v10) to (v11);

      \draw[thick,dashed] (v0) to node[below] {$0 / 1$} (v2);
      \draw[thick,dashed] (v0) to[bend left] node[pos=0.18,above] {$1 / 0$} (v5);
      \draw[thick,dashed] (v2) to node[below] {$0 / 1$} (v5);
      
      \draw[very thick,dotted,transform canvas={xshift=-2pt}] (v2) to (v1);
      \draw[very thick,dotted] (v1) to[bend left] (v6);
      \draw[very thick,dotted,transform canvas={xshift=1.4pt,yshift=-1.4pt}] (v6) to (v10);

      \draw[very thick,dotted] (v5) to (v7);
      \draw[very thick,dotted,transform canvas={xshift=-1.4pt,yshift=1.4pt}] (v7) to (v11);
    \end{tikzpicture}
    \caption{Illustration of the proof of
      Theorem~\ref{thm:hard:bottleneck:bottleneck:opt}. The two terminal
      sets~$S$ and~$S'$ in the instance of VDST are marked by red and
      blue vertices, respectively. Dotted lines represent edges in~$E^f_0$
      having costs~$1 / 0$, dashed lines have costs as specified. Solid
      edges are controlled by the leader and have costs~$0 / 0$.}
    \label{fig:red2_svdst}
  \end{figure}
  We now show that the answer to the given instance of VDST is yes if and
  only if the leader's optimum value in the constructed instance of BMST is
  $0$. Since~$d(e)=0$ for all~$e\in E^\ell$, the following arguments
  hold for both types of follower's objective functions, $\max_{e \in
    Y} d(e)$ as well as $\max_{e \in X \cup Y} d(e)$.

  Assume that $T, T' \subseteq E$ are vertex-disjoint trees such that
  $T$ spans~$S$ and $T'$ spans~$S'$. Since $G$ is connected, we may
  assume that $T \cup T'$ covers all vertices of~$G$, by connecting
  all non-covered vertices to either $T$ or $T'$ arbitrarily. If the
  leader chooses $X \coloneqq T \cup T'$ as her solution, the follower
  must take any two of the three edges $\{s_1, s_1'\}$, $\{s_0, s_1\}$
  and $\{s_0, s_1'\}$ in order to complete $X$ to a spanning tree. As
  the follower's objective value is $1$ for any of these choices and
  we assume the optimistic setting, his response is $Y \coloneqq
  \big\{ \{s_1, s_1'\}, \{s_0, s_1\} \big\}$, resulting in a leader's
  objective value of $0$.

  For the other direction, assume that the leader can achieve an
  objective value of $0$. This means that the follower uses the edge
  $\{s_0, s_1\}$ in order to connect the vertex $s_0$ to the original
  graph. Since this edge is more expensive than $\{s_0, s_1'\}$ for
  the follower, he will only do that if he is also forced to connect
  the vertices $s_1$ and~$s_1'$, because otherwise, he can always
  achieve an objective value of $0$. Hence, the leader must not
  connect $s_1$ and~$s_1'$. Moreover, all vertices in $S$ have to be
  connected by the leader, as well as all vertices in $S'$, in order
  to prevent the follower from taking any edge from $E^f_0$. Thus, the
  leader's solution contains two vertex-disjoint trees spanning~$S$
  and $S'$, respectively.
\end{proof}

The proof of Theorem~\ref{thm:hard:bottleneck:bottleneck:opt} shows
that even computing any approximate solution is NP-hard, because the
reduction only relies on distinguishing whether the optimum value is
$0$ or $1$.

\section{Conclusion}
\label{sec:conc}

In this paper, we investigated the computational complexity of the
bilevel minimum spanning tree problem.  After giving some structural
insights about the problem, we proved that BMST is NP-hard, thus
answering a conjecture stated by Shi \emph{et al.}~\cite{shi19}.
Furthermore, we considered the parameterized complexity of the problem
in the number of edges controlled by the follower and showed that the
problem is at least as hard as the shortest vertex-disjoint Steiner
trees problem, parameterized by the number of terminal vertices,
giving some evidence that the problem might be intractable even for a
fixed number of follower's edges.  Finally, we considered several
variants of BMST in which at least one of the decision makers has a
bottleneck objective function and gave a complete complexity
classification of all these variants.

It is still open whether BMST is solvable for a fixed number of
follower's edges or even fixed-parameter tractable in this
parameter. Also given the close relation to the shortest
vertex-disjoint paths problem, we consider this to be an interesting
open question.  Moreover, the approximability of BMST is an
interesting question to study further, given that the best
approximation ratio achieved is $|V| - 1$.

As a generalization of BMST, one could consider the bilevel minimum
matroid basis~(BMMB) problem, in which both decision makers together
have to compute a basis of a given matroid.  We think that some of our
structural results can be generalized to the matroid setting.  On the
one hand, it would be interesting to see which of the positive results
can be generalized to BMMB. Furthermore, we are curious if the
negative results could be strengthened, in particular if the follower
only controls a fixed number of elements.

\bibliographystyle{abbrv}
\bibliography{literature}
  
\end{document}